\newcommand {\al}   {\alpha}       \newcommand {\bt}  {\beta}
\newcommand {\g }   {\gamma}       
\newcommand {\dl}   {\delta}       \newcommand {\e }  {\epsilon}
         \newcommand {\ph}  {\phi}
\newcommand {\vf }  {\varphi}      
         \newcommand {\om}  {\omega}
\newcommand {\pl}   {\partial}     
\newcommand   {\const}{{\sf\,const}}     \newcommand   {\diag}{{\sf\,diag\,}}
\newcommand {\MM}  {{\mathbb M}}   
\newcommand {\MO}  {{\mathbb O}}   
   \newcommand {\MR}  {{\mathbb R}}
\newcommand {\MS}  {{\mathbb S}}   \newcommand {\MT}  {{\mathbb T}}
   \newcommand {\MV}  {{\mathbb V}}
\newcommand {\Sa}  {{\textsc{a}}}   \newcommand {\Sb}  {{\textsc{b}}}
\newcommand {\Sm}  {{\textsc{m}}}   \newcommand {\Sn}  {{\textsc{n}}}
   \newcommand {\Sp}  {{\textsc{p}}}
\newcommand {\nequiv} {\equiv\!\!\!\!\!\!/\:}
\newcommand   {\ns}{{\sf\, ns\,}}
\newtheorem{lemma}{Lemma}[section]
\newtheorem{prop}{Proposition}[section]
\newtheorem{exa}{Example}[section]
\newtheorem{theorem}{Theorem}[section]
\theoremstyle{definition}
\newtheorem*{cor}{Corollary}
\newtheorem*{defn}{Definition}
\begin{document}
\title     {Complete separation of variables in the geodesic Hamilton--Jacobi
            equation}
\author    {M. O. Katanaev
            \thanks{E-mail: katanaev@mi-ras.ru}\\ \\
            \sl Steklov Mathematical Institute,\\
            \sl 119991, Moscow, ul. Gubkina, 8}
\maketitle
\begin{abstract}
We consider a (pseudo)Riemannian manifold of arbitrary dimension. The
Hamil\-ton--Jacobi equation for geodesic Hamiltonian admits complete separation
of variables for some (separable) metrics in some (separable) coordinate
systems. Separable metrics are very important in mathematics and physics. The
St\"ackel problem is: ``Which metrics admit complete separation of variables in
the geodesic Hamilton--Jacobi equation?'' This problem was solved for inverse
metrics with nonzero diagonal elements, in particular, for positive definite
Riemannian metrics, long ago. However the question is open for indefinite
inverse metrics having zeroes on diagonals. We propose the solution. Separable
metrics are divided into equivalence classes characterised by the number of
Killing vector fields, quadratic indecomposable conservation laws for geodesics,
and the number of coisotropic coordinates. The paper contains detailed proofs,
sometimes new, of previous results as well as new cases. As an example, we list
all canonical separable metrics in each equivalence class in two, three, and
four dimensions. Thus the St\"ackel problem is completely solved for metrics of
any signature in any number of dimensions.
\end{abstract}
%******************************************************************************
\section{Introduction}
%*******************************************************************************
Many important and interesting problems in mathematical physics are related to
analysis of geodesics on a (pseudo)Riemannian manifold. In its turn, integration
of geodesic equa\-tions often reduces to solution of the corresponding
Hamilton--Jacobi equation. Models admitting complete separation of variables in
the Hamilton--Jacobi equations are of parti\-cular interest because, in this
case, there are $n$ independent conservation laws in envolution for $n$
dimensional Hamiltonian system, and geodesic equations are integrated in
quadratures. Therefore finding metrics which admit complete separation of
variables is of great importance. This problem is interesting both for
Riemannian (positive definite) metrics and metrics of Lorentz signature, the
latter case being important in gravity models.

In 1891, St\"ackel raised the question: ``Which metrics admit complete
separation of variables in the respective Hamilton--Jacobi equation?''
\cite{Stacke91,Stacke93A,Stacke93B,Stacke93c}, and gave the answer in the case
of diagonal (orthogonal) metrics in the presence of only quadratic
indecomposable conservation laws. These metrics are called St\"ackel or
separable.
The problem attracted much interest of mathematicians and physicists and
became classical. It is clear that if the metric has enough symmetry then it may
admit $n$ envolutive conservation laws. At the same time some St\"ackel metrics
admit complete separation of variables even without any symmetry.

Many interesting and important results for orthogonal separable metrics were
obtained in papers \cite{LeviCi04,Eisenh34,Benent16,BoKoMa22}, but we focus our
attention on nondiagonal metrics.

The separating action functions were found for nondiagonal metrics of arbitrary
signature under the assumption that all diagonal elements differ from zero in
\cite{Acqua08,Burgat11,Acqua12}. This is always true for Riemannian positive
definite metrics. The corresponding separable metrics were derived in
\cite{Havas75A,Havas75B}. These results were culminated in \cite{IarovI63}
(see also \cite{Cantri77}) were necessary and sufficient conditions for complete
separation of variables were found for more general Hamiltonians depending on
time and containing the term linear in momenta and potential, again under the
assumption that all diagonal elements of separable metrics differ from zero.

A different technique was used in \cite{KalMil80,KalMil81} (see also
\cite{Benent91}) for obtaining separable metrics including the case when the
diagonal inverse metric components include zeroes (the corresponding coordinate
lines were called there ``essential coordinates of type I'' and we call them
``coisotropic coordinates'' for brevity). However the full list of separating
action functions and conservation laws in Hamiltonian formulation was not
derived. This general problem was also attacked in \cite{Shapov80,BagObu93} and
many important examples especially with coisotropic coordinates were considered
in \cite{Obukho06,ObuOse07}. In the present paper, we solve this problem using
another technique which allows us not only to derive separable metrics but, in
addition, the full set of separating action functions and conservation laws in
the Hamiltonian formulation.

The coordinate free formulation of separability criteria in general case was
proposed in \cite{Benent97}.

It turns out that St\"ackel metrics and complete multiplicative separation of
variables in the respective Laplace--Beltrami, Helmholtz, and Schr\"odinger
equations are closely related. Namely, complete multiplicative separation of
variables in the latter equations provides sufficient conditions for complete
additive separation of variables in the Hamilton--Jacobi equation
\cite{Havas75A,Havas75B}. This observation increases the importance of separable
metrics.

Separable metrics of Lorentzian signature are of great importance in gravity
models. Usually, one assumes existence of a large symmetry group for metric to
reduce the number of independent components which allows to obtain exact
solutions of Einstein's equations in many cases. As a rule, such solutions admit
complete separation of variables in the geodesic Hamilton--Jacobi equation.
This approach can be inverted. One may assume complete separability of metric
which also reduces significantly the number of independent metric components
yielding a hope of obtaining exact solutions of gravity equations. Such
solutions are very attractive because the respective geodesic equations are
Liouville integrable, and this helps us to understand global structure of
respective space-times. This idea was successfully implemented in
\cite{Carter68} for separable metrics admitting two commutative Killing vector
fields and two indecomposable quadratic conservation laws. A large class of
exact solutions was described in this way including Schwarzschild,
Reissner--Nordstr\"om, Kerr, and many others.

It turns out that all nonzero components of separable metrics are given by
fixed functions of the set of arbitrary functions of single coordinates. This
means that vacuum Einstein's equations reduce to a system of nonlinear
{\em ordinary} differential equations. This feature enhances the hope of
obtaining exact solutions.

Complete separation of variables for the geodesic Hamilton--Jacobi equation
occurs in the presence of Killing vectors and Killing second rank tensors.
The importance of second rank Killing tensors for linear Hamiltonian systems is
discussed in \cite{Kozlov20}. Some topological properties of complete separation
of variables for Killing tensors are discussed in \cite{Kozlov95}.

In the present paper, we propose complete solution of the St\"ackel problem
for metrics of any signature in any dimensions including cases of zero diagonal
inverse metric components. Our proofs in many cases differ from others. At
the end of the paper, we give complete lists of canonical separable metrics in
two, three, and four dimensions.
%******************************************************************************
\section{Separation of variables}
%*******************************************************************************
We consider $n$-dimensional topologically trivial manifold $\MM\approx\MR^n$
covered by a global coordinate system $x^\al$, $\al=1,\dotsc,n$. Let there be a
geodesic Hamiltonian (function on the phase space $(x,p)\in\MT^*(\MM)$)
\begin{equation}                                                  \label{anvbjd}
  H_0(x,p):=\frac12g^{\al\bt}(x)p_\al p_\bt,
\end{equation}
where $g^{\al\bt}(x)$ is the inverse metric on $\MM$ and $p_\al$ are momenta. It
is well known that it yields Hamiltonian equations for geodesics on a
(pseudo)Riemannian manifold $(\MM,g)$. All functions are assumed to be
sufficiently smooth, and we shall not mention this in what follows. Moreover we
often say simply ``metric'' instead of ``inverse metric''.

If metric is positive definite then Hamiltonian (\ref{anvbjd}) describes motion
of a point particle on Riemannian manifold $(\MM,g)$. For Lorentzian signature
metric, Hamiltonian (\ref{anvbjd}) descri\-bes worldlines of point particles on
space-time $(\MM,g)$. Both cases are of considerable interest from mathematical
and physical points of view, and many properties of such mechanical systems do
not depend on the signature of the metric. We shall consider metrics of
arbitrary signature, transition to positive definite metric being usually
trivial.

The Hamilton--Jacobi equation for the truncated action function (characteristic
Hamil\-ton function) $W(x)$ is
\begin{equation}                                                  \label{ehdggt}
  H_0\left(x,\frac{\pl W}{\pl x^\al}\right)
  =\frac12g^{\al\bt}\pl_\al W\pl_\bt W=E,\qquad E=\const.
\end{equation}

\begin{defn}
A solution of Hamilton--Jacobi equation (\ref{ehdggt}) $W(x,c)$ depending on $n$
inde\-pendent parameters (integration constants) $(c_a)\in\MV\subset\MR^n$,
$a=1,\dotsc,n$, such that
\begin{equation}                                                  \label{indbfd}
  \det\frac{\pl^2 W}{\pl x^\al\pl c_a}\ne0,
\end{equation}
is called {\em complete integral}.
\qed\end{defn}
It is not a general solution of Eq.(\ref{ehdggt}) which has functional
arbitrariness. However any solution of the Hamilton--Jacobi equation can be
obtained from a complete integral by variation of parameters (see, e.g.,
\cite{Stepan50}, Ch.~IX, \S 3). Therefore, if complete integral is known, then
the problem may be considered as solved.

Note that there are infinitely many complete integrals of the Hamilton--Jacobi
equation, if they exist. Therefore our aim is not to find all complete integrals
but only one. The domain $c\in\MV$ depends on the metric and therefore is not
specified. It should be found in every particular case, and requires some
investigation. In general, constant $E(c)$ (energy) in the Hamilton--Jacobi
equation depends on parameters. In particular, it can be considered as one of
them.

It is clear that any solution of the Hamilton--Jacobi equation is defined up to
addition of arbitrary constant. This constant cannot be chosen as one of the
parameters $c$ because condition (\ref{indbfd}) is violated. Therefore all
additive integration constants of $W$ will be dropped as inessential.

We use Latin indices $a,b,\dotsc$ for enumeration of independent parameters
$c$ though they run the same values as the Greek ones. This is done to stress
important difference: tensor components with Greek indices are transformed under
coordinate transformation whereas with Latin indices are not. For
example, functions $\pl^a W:=\pl W/\pl c_a$ are scalars while partial
derivatives $\pl_\al W:=\pl W/\pl x^\al$ are components of covector.

\begin{defn}
Coordinates $x^\al$, if they exist, are called {\em separable}, if
Hamilton--Jacobi equation (\ref{ehdggt}) admits additive separation of
variables in this coordinate system, i.e.\ the action function is given by the
sum
\begin{equation}                                                  \label{abcvde}
  W=\sum_{\al=1}^n W_\al(x^\al,c)
\end{equation}
where every summand $W_\al$ is a function of only one coordinate $x^\al$ and
parameters $c_a$, taking values in some domain $\MV\subset\MR^n$. We require
\begin{equation}                                                  \label{anvbkt}
  \det\frac{\pl^2 W}{\pl x^\al\pl c_a}
  =\det\frac{\pl^2 W_\al}{\pl x^\al\pl c_a}\ne0
\end{equation}
for all $x$ and $c$. Metric in separable coordinate system is called
{\em separable}. Functions $W_\al$ in the sum (\ref{abcvde}) are called
{\em separating}.
\qed\end{defn}

Individual summand in the action function (\ref{abcvde}) may depend only on some
part of parameters, but the whole action function depends on all $n$ parameters.

For brevity, we use notation
\begin{equation*}
  \pl^a W:=\frac{\pl W}{\pl c_a},\qquad W'_\al(x^\al,c):=\pl_\al W_\al(x^\al,c).
\end{equation*}
Then requirement (\ref{anvbkt}) is written as
\begin{equation}                                                  \label{ijdhgf}
  \det(\pl^aW'_\al)\ne0.
\end{equation}

The problem of complete separation of variables in Hamilton--Jacobi equation
(\ref{ehdggt}) is as follows. We have to describe all metrics $g^{\al\bt}(x)$,
depending only on coordinates $x$, for which there is a constant
$E(c)\ne0$ and functions $W'_\al(x^\al,c)$, depending on only one coordinate
$x^\al$ and parameters $c$ such that $\det \pl^a W'_\al\ne0$
and equation
\begin{equation}                                                  \label{abdnfh}
  g^{\al\bt}W'_\al W'_\bt=2E
\end{equation}
holds. During solution of this problem we find admissible form of separating
functions $W'_\al$ which, as we shall see, define independent envolutive
conservation laws in the corresponding Hamiltonian systems. Thus, we have to
solve functional (not differential) equation (\ref{abdnfh}) with respect to
$g^{\al\bt}(x)$ and $W'_\al(x^\al,c)$, which are supposed sufficiently smooth
both on $x$ and $c$.

All functions $W_\al$ is the sum (\ref{abcvde}) are scalars with respect to
coordinate transformations on $\MM$ though they have the coordinate index $\al$.
Covector is defined by partial derivatives $W'_\al:=\pl_\al W_\al$.
If the Hamilton--Jacobi equation admits complete separation of variables, then
the corresponding Hamiltonian equations are Liouville integrable, the solution
being given in quadratures. Requirement (\ref{ijdhgf}) means that $n$ functions
$W_\al$ are functionally inde\-pendent and can be chosen as new coordinates.
These are the action-angle coordinates in the corresponding Hamiltonian
formulation.

Consider an important example of mechanical system in which metric has no
Killing vectors but admit complete separation of variables in the
Hamilton--Jacobi equation.
\begin{exa}[\bf The Liouville system] \rm                         \label{ejsdfw}
Consider a conformally flat metric
\begin{equation}                                                  \label{ewtref}
  g_{\al\bt}=\Phi^2\eta_{\al\bt}\quad\Leftrightarrow\quad
  g^{\al\bt}:=\frac1{\Phi^2}\eta^{\al\bt},\qquad \Phi^2:=\phi_1+\dotsc+\phi_n>0,
\end{equation}
where every function $\phi_\al(x^\al)$ (no summation) depends on single
coordinate, and $\eta_{\al\bt}$ denotes (pseudo)Euclidean metric of arbitrary
signature. This metric defines the Hamilto\-nian
\begin{equation}                                                  \label{abcndp}
  H:=\frac{\eta^{\al\bt}p_\al p_\bt}{2(\phi_1+\dotsc+\phi_n)}+U(x)
  =\frac1{\Phi^2}\left(\frac12\eta^{\al\bt} p_\al p_\bt+\Theta^2\right).
\end{equation}
Here we added potential energy $U$ of special type for generality
\begin{equation*}
  U:=\frac{\Theta^2}{\Phi^2}
  =\frac{\theta_1(x^1)+\dotsc+\theta_n(x^n)}{\phi_1(x^1)+\dotsc+\phi_n(x^n)},
\end{equation*}
every function $\theta_\al(x^\al)$ being depended on single coordinate.
Hamiltonian equations of motion are
\begin{equation*}
\begin{split}
  \dot x^\al=&\frac{\eta^{\al\bt}p_\bt}{\Phi^2},
\\
  \hphantom{\qquad\ns(\al)}
  \dot p_\al=&\frac{\eta^{\bt\g}p_\bt p_\g}{2\Phi^4}\pl_\al\phi_\al
  +\frac{\Theta^2\pl_\al\phi_\al-\Phi^2\pl_\al\theta_\al}{\Phi^4}
  =\frac{H}{\Phi^2}\pl_\al\phi_\al-\frac{\pl_\al\theta_\al}{\Phi^2},
  \qquad\ns(\al),
\end{split}
\end{equation*}
were the dot denotes differentiation with respect to the evolution parameter
$\tau\in\MR$ (time), and summation on lower indices $\al$ is absent. Here and in
what follows, we denote this circumstance by special symbol $\ns(\al)$ on the
right of equations for brevity. Summation over other repeated indices is
performed as usual. The corresponding Lagrangian is
\begin{equation*}
  L:=p_\al\dot x^\al-H=\frac{\phi_1+\dotsc+\phi_n}2\eta_{\al\bt}
  \dot x^\al\dot x^\bt-\frac{\theta_1+\dotsc+\theta_n}{\phi_1+\dotsc+\phi_n}.
\end{equation*}
Sure, energy $E$ for every trajectory is conserved,
\begin{equation}                                                  \label{ewkguy}
  E:=H\big|_{\text{trajectory}}=\const.
\end{equation}

Multiply Hamilton--Jacobi equation (\ref{abdnfh}) by the conformal factor:
\begin{equation*}
  \eta^{\al\bt}W'_\al W'_\bt+2\Theta^2=2E\Phi^2.
\end{equation*}
Separation of variables for the Liouville system happens in the way
\begin{equation}                                                  \label{eliojk}
  \hphantom{\qquad\qquad\ns(\al)}
  \eta^{\al\al}W^{\prime2}_\al+2\theta_\al-2E\phi_\al=c_\al,\qquad
  c_\al\in\MR,\quad\forall \al,
  \qquad\qquad\ns(\al).
\end{equation}

The action function has the form
\begin{equation*}
  W=\sum_{\al=1}^n W_\al,\qquad W_\al:=\int\!\!dx^\al
  \sqrt{\eta^{\al\al}\big(c_\al+2E\phi_\al-2\theta_\al\big)}.
\end{equation*}
Certainly, the radicand is assumed to be positive. This complete integral
satisfies the Hamilton--Jacobi equation multiplied by the conformal factor:
\begin{equation}                                                  \label{abwswd}
  \dl^{\al\bt}\pl_\al W\pl_\bt W=2E\Phi^2,
\end{equation}
where multipliers $\eta^{\al\al}$ are included in the action function $W$.

In the Hamiltonian formalism, we have $n$ quadratic conservation laws
\begin{equation}                                                  \label{eddgtr}
  \eta^{\al\al}p_\al^2+2\theta_\al-2E\phi_\al=c_\al.
\end{equation}
This can be easily checked by straightforward computation:
\begin{equation*}
  \dot c_\al=2\eta^{\al\al}p_\al\dot p_\al+2\dot\theta_\al-2E\dot \phi_\al
  =\eta^{\al\al}p_\al\frac{2H\pl_\al \phi_\al}{\Phi^2}
  -2E\dot x^\al\pl_\al \phi_\al
  =\eta^{\al\al}p_\al\frac{2(H-E)}{\Phi^2}\pl_\al \phi_\al=0,
\end{equation*}
because $H=E$ along every trajectory. For nontrivial functions $\theta$ and
$\phi$ these conservation laws are indecomposable, parameters satisfying the
condition $c_1+\dotsc+c_n=0$. Indeed, sum all conservation laws (\ref{eddgtr})
taking into account the definition of Hamiltonian (\ref{abcndp}):
\begin{equation}                                                  \label{eefdrr}
  c_1+\dotsc+c_n=(2H-2E)(\phi_1+\dotsc+\phi_n)=0.
\end{equation}
This is the sole relation between parameters. It shows that only $n-1$
conservation laws among (\ref{eliojk}) are independent. The last independent
conservation law is also quadratic
\begin{equation}                                                  \label{essghk}
  \frac1{\Phi^2}\big(\eta^{\al\al}p_\al^2+2\Theta\big)=2E.
\end{equation}
It depends on all momenta and coordinates. The complete set of independent
parameters is, for example, $c_1,\dotsc,c_{n-1},2E$.

Thus, all canonical pairs of variables are separated. The particular feature of
this separation is that conformally flat metric (\ref{ewtref}) has no Killing
vector for nonconstant functions $\phi_\al(x^\al)$. Besides, variables were
separated only after multiplication of the Hamil\-ton--Jacobi equation by the
conformal factor. Separation of variables for the Liouville system does not
depend on the metric signature.
\qed\end{exa}
%******************************************************************************
\section{Separation of variables and Hamiltonian formalism}
%*******************************************************************************
In this section, we show what happens in Hamiltonian formalism for complete
separation of variables in the Hamilton--Jacobi equation. Remind that we are
considering Hamiltonian (\ref{anvbjd}) yielding equations of motion
\begin{equation}                                                  \label{abcvdr}
\begin{split}
  \dot x^\al=&[x^\al,H_0]=g^{\al\bt}p_\bt,
\\
  \dot p_\al=&[p_\al,H_0]=\frac12\pl_\al g_{\bt\g}p^\bt p^\g.
\end{split}
\end{equation}
Suppose that relations
\begin{equation}                                                  \label{aocvfd}
  \hphantom{\qquad\qquad\ns(\al)}
  G_\al(x^\al,p_\al,c):=p_\al-W'_\al(x^\al,c)=0,\qquad\forall\al,
  \qquad\qquad\ns(\al),
\end{equation}
hold, where $(W'_\al)$ is an arbitrary solution of the Hamilton--Jacobi equation
(\ref{abdnfh}) depending on $n$ independent parameters. Due to inequality
(\ref{ijdhgf}), these equations can be locally solved with respect to $c$:
\begin{equation}                                                  \label{anbfgt}
  c_a=F_a(x,p),
\end{equation}
where $F_a$ are some functions on canonical variables $x,p$ only.
\begin{theorem}                                                   \label{thdkwe}
Relations (\ref{anbfgt}) hold along every trajectory of the Hamiltonian system.
These first integrals of Hamiltonian equations are in involution.
\end{theorem}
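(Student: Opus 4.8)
The plan is to show the two assertions separately, both by direct computation along trajectories. First I would prove that each $F_a$ is a first integral. By construction (\ref{anbfgt}) is obtained by solving the $n$ relations $G_\al(x^\al,p_\al,c)=p_\al-W'_\al(x^\al,c)=0$ for $c$, so along any curve on which all $G_\al$ vanish we automatically have $c_a=F_a(x,p)$. Hence it suffices to show that the surface $\{G_\al=0,\ \forall\al\}$ is invariant under the Hamiltonian flow of $H_0$, i.e. that $\dot G_\al=[G_\al,H_0]=0$ whenever all $G_\bt=0$. Using the equations of motion (\ref{abcvdr}), I would compute
\begin{equation*}
  \dot G_\al=\dot p_\al-W''_\al\,\dot x^\al
  =\tfrac12\pl_\al g_{\bt\g}p^\bt p^\g-\pl_\al W'_\al\big|_{x^\al}\,g^{\al\bt}p_\bt ,
\end{equation*}
where $W''_\al$ denotes $\pl_\al W'_\al$, and then substitute $p_\bt=W'_\bt$ (valid on the surface). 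The key point is that the function $x\mapsto \tfrac12 g^{\al\bt}(x)W'_\al W'_\bt$ is identically equal to the constant $2E$ by the separated Hamilton--Jacobi equation (\ref{abdnfh}); differentiating this identity with respect to $x^\al$ and using that each $W'_\bt$ depends only on $x^\bt$ gives exactly the relation needed to make $\dot G_\al$ vanish on the surface. This is the step I expect to require the most care: one must keep track of which derivatives hit the metric and which hit the separating functions, and use the single-variable dependence $W'_\bt=W'_\bt(x^\bt)$ to see that the ``cross'' terms reorganise into $\pl_\al$ of the constant $2E$.

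Next I would verify involution, $[F_a,F_b]=0$. Rather than differentiating the implicitly defined $F_a$, I would work with the generators $G_\al$. Since each $G_\al=p_\al-W'_\al(x^\al)$ involves only the single conjugate pair $(x^\al,p_\al)$, and $W'_\al$ does not depend on $p$, a direct evaluation of the canonical Poisson bracket gives $[G_\al,G_\bt]=0$ for all $\al,\bt$ (the only possibly nonzero contribution would come from the pair $(x^\al,p_\al)$ with $\al=\bt$, and there $[p_\al,-W'_\al(x^\al)]=0$ because $W'_\al$ has no $p_\al$-dependence). Hence the $G_\al$ are pairwise in involution everywhere on phase space. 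Now the $F_a$ are, by the implicit function theorem applied to (\ref{aocvfd}), functions of $x,p$ obtained as $c_a=F_a(x,p)$ from $G_\al=0$; concretely $F_a=\sum_\al (\pl^a W'_\al)^{-1}{}^{(\,\cdot\,)}$-type combinations, i.e. $c_a$ expressed through the $G_\al$. The cleanest way to finish is to note that on the common level surface the $F_a$ are functionally related to the $G_\al$ by an invertible (constant-rank) change, and Poisson brackets of first integrals that all commute with $H_0$ are themselves first integrals; combined with $[G_\al,G_\bt]=0$ this forces $[F_a,F_b]=0$ on the surface, and since both sides are first integrals agreeing on an open set they agree identically.

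Alternatively, and perhaps more transparently, I would argue: the map $(x,p)\mapsto(x,c)$ with $c_a=F_a(x,p)$ is, by (\ref{ijdhgf}), a local diffeomorphism, and in the new chart the surface is $\{c=\text{const}\}$; since the $G_\al$ generate this foliation and pairwise commute, the dual quantities $F_a$ (the ``momenta'' conjugate to nothing, i.e. the labels of the leaves) inherit vanishing brackets. The main obstacle, as noted, is the first part — establishing flow-invariance of $\{G_\al=0\}$ — because it is exactly here that the separated form of the Hamilton--Jacobi equation is used in an essential way; the involution statement is then comparatively formal, resting only on the single-variable structure of the $W'_\al$ and the chain rule for Poisson brackets.
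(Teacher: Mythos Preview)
Your argument for the first assertion is correct and is essentially the paper's own: both of you differentiate the identity $g^{\bt\g}W'_\bt W'_\g=2E$ with respect to $x^\al$, use that $W'_\g$ depends only on $x^\g$, and combine with the equations of motion. The only cosmetic difference is that the paper phrases the computation as $\dot c_a=0$ (after first extracting $\pl F_a/\pl x^\al$ and $\pl F_a/\pl p_\al$ via implicit differentiation), whereas you phrase it as $\dot G_\al=0$ on the surface; these are equivalent.

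Your involution argument has a genuine gap. The observation $[G_\al,G_\bt]=0$ (for fixed $c$) is correct and useful, but the passage to $[F_a,F_b]=0$ is not justified by saying ``the $F_a$ are functionally related to the $G_\al$ by an invertible change'': they are not. On the surface $G_\al=0$ the $G_\al$ are identically zero, so any such ``functional relation'' is vacuous; off the surface the $F_a$ depend on $(x,p)$ in a way that cannot be expressed through the $G_\al$ alone. What is actually needed is the implicit-function step you tried to avoid: from $G_\al(x^\al,p_\al,F(x,p))=0$ one gets
\[
  \frac{\pl F_a}{\pl x^\al}=-(\pl^aW'_\al)^{-1}\,\pl_\al W'_\al,\qquad
  \frac{\pl F_a}{\pl p_\al}=(\pl^aW'_\al)^{-1},
\]
and then $[F_a,F_b]$ is a one-line cancellation term by term in $\al$ --- this is exactly how the paper proceeds, reusing the partial derivatives it already computed for the first part. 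Your ``alternative'' via the foliation by the graphs $\{p=\pl W\}$ can also be made to work, but it requires the (unstated) lemma that functions constant on the leaves of a \emph{Lagrangian} foliation Poisson-commute; without either that lemma or the explicit partials, the involution claim is not established.
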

\begin{proof}
Compute
\begin{equation}                                                  \label{anbftg}
  \dot c_a=\frac{\pl F_a}{\pl x^\al}\dot x^\al
  +\frac{\pl F_a}{\pl p_\al}\dot p_\al.
\end{equation}
To find partial derivatives of functions $F_a$, we do the following. Take the
differential of Eq.(\ref{aocvfd}):
\begin{equation*}
  \hphantom{\qquad\qquad\ns(\al)}
  dp_\al-\pl_\al W'_\al dx^\al-\pl^a W'_\al dc_a=0,
  \qquad\qquad\ns(\al).
\end{equation*}
These equations can be solved with respect to $dc_a$:
\begin{equation}                                                  \label{abbdhf}
  dc_a=\sum_{\al=1}^n\left(\pl^a W'_\al\right)^{-1}
  \big(dp_\al-\pl_\al W'_\al dx^\al\big),
\end{equation}
because of inequality (\ref{ijdhgf}). On the other hand, the differential of
Eq.(\ref{anbfgt}) yields
\begin{equation*}
  dc_a=\frac{\pl F_a}{\pl x^\al}dx^\al+\frac{\pl F_a}{\pl p_\al}dp_\al.
\end{equation*}
Comparing Eqs.(\ref{anbftg}) and (\ref{abbdhf}), we conclude that
\begin{equation*}
  \hphantom{\qquad\qquad\ns(\al)}
  \frac{\pl F_a}{\pl x^\al}=-(\pl^a W'_\al)^{-1}\pl_\al W'_\al,\qquad
  \frac{\pl F_a}{\pl p_\al}=(\pl^a W'_\al)^{-1},
  \qquad\qquad\ns(\al).
\end{equation*}
Now Eq.(\ref{anbftg}) yields:
\begin{equation*}
  \dot c_a=\sum_{\al=1}^n(\pl^a W'_\al)^{-1}\left(-\pl_\al W'_\al p_\al
  +\frac12\pl_\al g_{\bt\g}p^\bt p^\g\right),
\end{equation*}
where Hamiltonian equations (\ref{abcvdr}) are used for exclusion of derivatives
with respect to evolution parameter $\tau$.

On the other hand, differentiate Hamilton--Jacobi equation (\ref{abdnfh}) with
respect to $x^\g$:
\begin{multline*}
  \pl_\g g^{\al\bt} W'_\al W'_\bt+2g^{\al\bt}W'_\al\pl_\g W'_\bt=
  -g^{\al\dl}g^{\bt\e}\pl_\g g_{\dl\e}W'_\al W'_\bt+2W^{\prime\al}\pl_\g W'_\al=
\\
  =-\pl_\g g_{\dl\e}W^{\prime\dl}W^{\prime\e}+2W^{\prime\al}\pl_\g W'_\al=
  -\pl_\g g_{\dl\e}p^\dl p^\e+2p^\g\pl_\g W'_\g=0,
\end{multline*}
where Eq.(\ref{aocvfd}) is used. Therefore $\dot c_a=0$, and the first statement
of the theorem is proved.

Now compute the Poisson bracket of two conservation laws with indices $a\ne b$
\begin{multline*}
  [F_a,F_b]=\frac{\pl F_a}{\pl x^\al}\frac{\pl F_b}{\pl p_\al}
  -\frac{\pl F_a}{\pl p_\al}\frac{\pl F_b}{\pl x^\al}=
\\
  =-\sum_{\al=1}^n\Big((\pl^a W'_\al)^{-1}\pl_\al W'_\al(\pl^b W'_\al)^{-1}-
  (\pl^a W'_\al)^{-1}(\pl^b W'_\al)^{-1}\pl_\al W'_\al\Big)=0.
\end{multline*}
This proves that conservation laws are in involution.
\end{proof}

We see that, if variables are completely separated in the Hamilton--Jacobi
equation, then the corresponding Hamiltonian system admits $n$ independent
conservation laws (\ref{anbfgt}) which are in involution, and, consequently, it
is Liouville integrable. This statement is proved for the geodesic Hamiltonian
$H_0$. Moreover, it is valid in a general case.

If summands $W'_\al(x^\al,c)$ are found, then conservation laws (\ref{anbfgt})
are obtained by solution of Eqs.~(\ref{aocvfd}) with respect to parameters in
explicit form. Luckily, this can be always done for geodesic Hamiltonian.

If a complete integral of the Hamilton--Jacobi equation is known, then one can
always go to the action-angle coordinates at least implicitly.
\begin{theorem}
Let us perform the canonical transformation $(x,p)\mapsto(X,P)$ with generating
function $S_2(x,P):=W(x,P)$, where momenta are substituted instead of
parameters: $(c_a)\mapsto(P_a)$. Then Hamiltonian equations for new variables
take the form
\begin{equation}                                                  \label{ahsgtr}
\begin{split}
  \dot X^a=&\frac{\pl H}{\pl P_a}=\frac{\pl H}{\pl p_\al}\pl^a\pl_\al W,
\\
  \dot P_a=&0,\qquad \Rightarrow\qquad P_a=c_a=\const,
\end{split}
\end{equation}
where the substitution $x=x(X,P)$ and $p=p(X,P)$ is performed on the right hand
side.
\end{theorem}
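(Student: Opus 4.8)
The plan is to run the classical Hamilton--Jacobi integration argument through the formalism of canonical transformations generated by a type-two generating function, with a little extra bookkeeping to cast $\dot X^a$ in the advertised form.

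First I would write the transformation out. Relabelling the parameters $c_a$ of the complete integral as the new momenta $P_a$, the generating function $S_2(x,P)=W(x,P)$ produces
\begin{equation*}
  p_\al=\frac{\pl W}{\pl x^\al}=W'_\al(x,P),\qquad
  X^a=\frac{\pl W}{\pl P_a}=\pl^aW(x,P).
\end{equation*}
Condition (\ref{indbfd}), equivalently (\ref{ijdhgf}), gives $\det(\pl^a\pl_\al W)\ne0$, so the first group of relations is solvable locally for $x=x(X,P)$ and hence for $p=p(X,P)$, and $(x,p)\mapsto(X,P)$ is a genuine (local) canonical transformation. Since $S_2$ carries no explicit dependence on the evolution parameter $\ta$, the transformed Hamiltonian is just the old one re-expressed in the new variables, $\widetilde H(X,P):=H\big(x(X,P),p(X,P)\big)$, with equations of motion $\dot X^a=\pl\widetilde H/\pl P_a$, $\dot P_a=-\pl\widetilde H/\pl X^a$.

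Next I would observe that $\widetilde H$ is a function of $P$ alone. Along the transformation $p_\al=\pl_\al W(x,P)$, so inserting this into $H$ and using that $W$ is a complete integral of the Hamilton--Jacobi equation for every value of the parameters --- for the geodesic Hamiltonian this is Eq.~(\ref{abdnfh}), $\tfrac12 g^{\al\bt}\pl_\al W\pl_\bt W=E$ --- one gets $\widetilde H(X,P)=E(P)$. Independence of $X$ immediately forces $\dot P_a=-\pl\widetilde H/\pl X^a=0$, that is $P_a=c_a=\const$ along every trajectory (consistently with Theorem~\ref{thdkwe}); this is the second line of (\ref{ahsgtr}).

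It then remains to evaluate $\dot X^a=\pl\widetilde H/\pl P_a$ and identify it with $(\pl H/\pl p_\al)\,\pl^a\pl_\al W$. Differentiating $\widetilde H$ by the chain rule and using $\pl p_\bt/\pl P_a=\pl_\al\pl_\bt W\,(\pl x^\al/\pl P_a)+\pl^a\pl_\bt W$ gives
\begin{equation*}
  \frac{\pl\widetilde H}{\pl P_a}
  =\Big(\frac{\pl H}{\pl x^\al}+\frac{\pl H}{\pl p_\bt}\,\pl_\al\pl_\bt W\Big)\frac{\pl x^\al}{\pl P_a}
  +\frac{\pl H}{\pl p_\al}\,\pl^a\pl_\al W .
\end{equation*}
The bracket is the total $x^\al$-derivative of $H\big(x,\pl W/\pl x\big)=E(P)$ and hence vanishes --- this is exactly the differentiated Hamilton--Jacobi identity already exploited in the proof of Theorem~\ref{thdkwe} --- leaving $\dot X^a=(\pl H/\pl p_\al)\,\pl^a\pl_\al W$, into which the substitution $x=x(X,P)$, $p=p(X,P)$ is then inserted on the right-hand side. (Equivalently, since $P_a=c_a$ is constant on a trajectory one may simply differentiate $X^a=\pl^aW\big(x(\ta),c\big)$ in $\ta$ and use $\dot x^\al=\pl H/\pl p_\al$ to obtain the same formula in one line.) I expect the only delicate point to be the clerical one of keeping track of which variables are held fixed in each partial derivative; the single substantive step is the collapse of the chain-rule expression to one term, which is where the Hamilton--Jacobi equation genuinely enters.
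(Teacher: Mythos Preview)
Your proposal is correct and follows essentially the same route as the paper: set up the type-two generating function, invoke the nondegeneracy condition to invert, and use the differentiated Hamilton--Jacobi identity $\pl H/\pl x^\al+(\pl H/\pl p_\bt)\,\pl_{\al\bt}W=0$ (the paper's Eq.~(\ref{akkfyf})) to collapse the chain-rule expression for $\pl\widetilde H/\pl P_a$ to the single surviving term. The only cosmetic difference is that the paper first writes out the full Jacobian $\pl(x,p)/\pl(X,P)$ explicitly from the differentials and then contracts against $\pl H/\pl x$, $\pl H/\pl p$ to obtain both $\pl H/\pl X^a\equiv0$ and $\pl H/\pl P_a$, whereas you shortcut to $\widetilde H=E(P)$ for the first and apply the chain rule directly for the second; the substantive step is the same identity in both.
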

\begin{proof}
If the generating function of the canonical transformation depends on old
coordinates and new momenta, then old momenta and new coordinates are given by
formulae (see, e.g., \cite{Goldst50})
\begin{equation}                                                  \label{aggwer}
  p_\al=\pl_\al W,\qquad X^a=\pl^a W.
\end{equation}
The last equality can be solved with respect to old coordinates at least locally
due to inequality (\ref{indbfd}), and we find $x=x(X,P)$. Substitution of this
solution in the first equality defines $p=p(X,P)$.

Take the differential of Eqs.~(\ref{aggwer}):
\begin{equation}                                                  \label{abbvgd}
\begin{split}
  dp_\al=&\pl^2_{\al\bt}Wdx^\al+\pl^a\pl_\al W dP_a,
\\
  dX^a=&\pl^a\pl_\al W dx^\al+\pl^a\pl^bWdP_b.
\end{split}
\end{equation}
This defines differentials
\begin{equation}                                                  \label{anbbdg}
\begin{split}
  dx^\al=&(\pl^a\pl_\al W)^{-1}(dX^a-\pl^a\pl^b WdP_b),
\\
  dp_\al=&\pl^2_{\al\bt} W(\pl^a\pl_\al W)^{-1}dX^a
  +\big[\pl^a\pl_\al W-\pl^2_{\al\bt}W(\pl^b\pl_\bt W)^{-1}\pl^a\pl^bW\big]dP_a.
\end{split}
\end{equation}
Consequently, partial derivatives are
\begin{equation}                                                  \label{abxvsf}
\begin{split}
  \frac{\pl x^\al}{\pl X^a}=&(\pl^a\pl_\al W)^{-1},
\\
  \frac{\pl x^\al}{\pl P_a}=&-(\pl^b\pl_\al W)^{-1}\pl^a\pl^b W,
\\
  \frac{\pl p_\al}{\pl X^a}=&\pl^2_{\al\bt}W(\pl^a\pl_\bt W)^{-1},
\\
  \frac{\pl p_\al}{\pl P_a}=&\pl^a\pl_\al W
  -\pl^2_{\al\bt}W(\pl^b\pl_\bt W)^{-1}\pl^a\pl^b W.
\end{split}
\end{equation}

On the other hand the Hamilton--Jacobi equation implies
\begin{equation*}
  H\left(x,\frac{\pl W(x,P)}{\pl x}\right)=2E(P).
\end{equation*}
Differentiate this equality by $x^\al$:
\begin{equation}                                                  \label{akkfyf}
  \frac{\pl H}{\pl x^\al}+\frac{\pl H}{\pl p_\bt}\frac{\pl p_\bt}{\pl x^\al}=
  \frac{\pl H}{\pl x^\al}+\frac{\pl H}{\pl p_\bt}\pl_{\al\bt}W=0.
\end{equation}
Now we see that equalities
\begin{equation*}
  \frac{\pl H}{\pl X^a}\equiv0,\qquad
  \frac{\pl H}{\pl P_a}=\frac{\pl H}{\pl p_\al}\pl^a\pl_\al W,
\end{equation*}
hold where $H=H\big(x(X,P),p(X,P)\big)$.
\end{proof}

Separable coordinates, when they exist, are not uniquely defined, and there is
a functional arbitrariness related to coordinate transformations. To isolate it,
we give
\begin{defn}
Two separable coordinate systems $x$ and $X$ on $\MM$ are {\em equivalent}, if
there is a canonical transformation $(x,p)\mapsto(X,P)$ of respective
Hamiltonian systems, such that new coordinates $X(x)$ depend only on old ones
but not on momenta $p$. Moreover, separable coordinate systems are equivalent
in case when parameters are related by nondegenerate transformation
$c\mapsto\tilde c(c)$ which do not involve coordinates. Manifold $\MM$ with
metric $g$ which admits separable coordinates for Eq.~(\ref{ehdggt}) in some
neighbourhood of each point related by this equivalence relation in overlapping
domains is called {\em  St\"ackel}.
\qed\end{defn}

Right hand sides of conservation laws (\ref{anbfgt}) are scalars under canonical
transformations:
\begin{equation*}
  F(x,p)\mapsto \tilde F_a(X,P)=F_a\big(x(X),p(X,P)\big).
\end{equation*}

Conservation laws are transformed into conservation laws and their involution
survives under arbitrary canonical transformation because the latter preserves
the Poisson bracket. We shall use canonical transformations in what follows to
bring separable metrics to the most simple (canonical) forms and eliminate
inessential functions.

There may occur inequivalent separable coordinate systems for one and the same
metric. In next sections, we obtain necessary and sufficient conditions for
existence of separable coordinate systems and present explicit form of
canonical separable metrics in each class of equivalent separable metrics.
Before solution of the St\"ackel problem, we consider the simple instructive
example.
\begin{exa}\rm                                                    \label{esgwfr}
Consider Euclidean space $\MR^n$ with metric $\dl^{ab}$,
$a,b=1,\dotsc,n$ in Cartesian coordinates $y^a$. The Hamilton--Jacobi equation
\begin{equation*}
  \dl^{ab}W'_aW'_b=2E
\end{equation*}
admits complete separation of variables
\begin{equation*}
  W'_a=c_a=\const,
\end{equation*}
the energy $2E=c^ac_a$ being the dependent parameter. The action function is
\begin{equation}                                                  \label{ajcvdf}
  W=y^ac_a,
\end{equation}
where we dropped inessential additive integration constant. It depends on
$n$ independent parameters $c_1,\dotsc,c_n$. Consequently, action function
(\ref{ajcvdf}) is the complete integral of the Hamilton--Jacobi equation, and
the problem is solved. Complete separation of variables (\ref{ajcvdf}) implies
$n$ conservation laws in the Hamiltonian formulation:
\begin{equation*}
  p_a=c_a,
\end{equation*}
which means conservation of all momenta. We see that Cartesian coordinate
system in Euclidean space is separable. Vector fields $\pl_a$ are Killing
vector fields. At the same time most of curvilinear coordinates in $\MR^n$ are
not separable.

Indeed, let us go to curvilinear coordinates $y^a\mapsto x^\al(y)$,
$\al=1,\dotsc,n$. Then inverse functions $y^a(x)$ define vielbein (Jacobi
matrices) and functions $W'_\al(x,c)$ are
\begin{equation*}
  W'_\al=e_\al{}^a c_a,\qquad e_\al{}^a:=\pl_\al y^a,
\end{equation*}
because they are covector components. It is necessary and sufficient for new
coordinates to be separable that vielbein components $e_\al{}^a(x^\al)$ depend
only on one coordinate $x^\al$. It means that transition functions must be
specific
\begin{equation*}
  y^a=\sum_{\al=1}^n k_\al{}^a(x^\al),
\end{equation*}
where all elements of each row of the matrix $k_\al{}^a$ depend only on one
coordinate $x^\al$. Then the vielbein is
\begin{equation*}
  e_\al{}^a(x^\al)=\pl_\al k_{\al}{}^a=:k'_\al{}^a
\end{equation*}
with the sole condition $\det k'_\al{}^a\ne0$. Now we compute metric
$g^{\al\bt}$, and variables in the Hamilton--Jacobi equation are separated:
\begin{equation*}
  \hphantom{\qquad\qquad\ns(\al)}
  W=\sum_{\al=1}^n W_\al,\qquad W_\al:=\int\!\!dx^\al e_\al{}^a(x^\al)c_a,
  \qquad\qquad\ns(\al).
\end{equation*}
In the Hamiltonian formulation, we have $n$ independent conservation laws:
\begin{equation*}
  p_\al e^\al{}_a(x)=c_a.
\end{equation*}
Note that the inverse vielbein components $e^\al{}_a(x)$, in general, depend on
all coordinates.

After the inverse transformation $x^\al\mapsto y^a$, the conservation laws take
simple form $p_a=c_a$ which correspond to $n$ commuting Killing vectors $\pl_a$
(translations along Cartesian coordinates).

All coordinate systems obtained in this way and only they (this will be proved
in section \ref{sbskdy}) are equivalent. To see this, we make canonical
transformation $(x^\al,p_\al)\mapsto(y^a,p_a)$ with generating function
depending on old coordinates and new momenta
\begin{equation*}
  S_2(x^\al,p_a):=\sum_{\al=1}^n k_\al{}^a(x^\al)p_a.
\end{equation*}
Then
\begin{equation}                                                  \label{andfmj}
  p_\al=\frac{\pl S_2}{\pl x^\al}=k'_\al{}^ap_a=e_\al{}^ap_a,\qquad
  y^a=\frac{\pl S_2}{\pl p_a}=\sum_{\al=1}^n k_\al{}^a,
\end{equation}
and we are led to previous formulae. Thus, if we solve functional equation
(\ref{abdnfh}), then its general solution contains many arbitrary functions.
In our case, there are $n^2$ inessential functions $k_\al{}^a(x^\al)$
describing transformations between equivalent separable coordinate systems. We
see that the problem can be essentially simplified by making canonical
transformations. Therefore it is sufficient to choose the most simple separable
metric in each class of equivalent metrics which is called canonical. In the
present case, it is the Euclidean metric, all other equivalent separable metrics
are related to it by suitable canonical transformation. We note that
transformation of momenta (\ref{andfmj}) is linear. Therefore linear and
quadratic conservation laws are transformed into linear and quadratic ones,
respectively.

At the same time, separation of variables in Euclidean space may happen in a
different way. Consider Euclidean plane in polar coordinates
$(y^a)=(r,\vf)\in\MR^2$. In Cartesian coordinates, we have two linear
conservation laws
\begin{equation*}
  p_1=c_1,\qquad p_2=c_2,
\end{equation*}
corresponding to two commuting Killing vectors $\pl_1$ and $\pl_2$. In polar
coordinates,
\begin{equation*}
  x:=y^1=r\cos\vf,\qquad y:=y^2=r\sin\vf.
\end{equation*}
metric and its inverse are
\begin{equation*}
  g_{\al\bt}=\begin{pmatrix} 1 & 0 \\ 0 & r^2 \end{pmatrix},\qquad
  g^{\al\bt}=\begin{pmatrix} 1 & 0 \\ 0 & \frac1{r^2} \end{pmatrix}.
\end{equation*}
The Hamilton--Jacobi equation is
\begin{equation*}
  W^{\prime2}_r+\frac1{r^2}W^{\prime2}_\vf=\dl^{ab}c_ac_b=2E.
\end{equation*}
Variables in this equation are separated as follows
\begin{equation}                                                  \label{anbvfs}
  W'_\vf=c_\vf,\qquad W^{\prime2}_r+\frac{c_\vf^2}{r^2}=2E.
\end{equation}
The first equality corresponds to rotational invariance (Killing vector
$\pl_\vf$), and the second is the conservation of energy. The action function is
now
\begin{equation*}
  W=\int\!\!dr\sqrt{2E-\frac{c_\vf^2}{r^2}}+\vf c_\vf.
\end{equation*}
It is the complete integral of the Hamilton--Jacobi equation. If $E>0$, then
this expression makes sense only for $r^2>c_\vf^2/2E$.

In the Hamiltonian formulation, we have two conservation laws
\begin{equation*}
  p_\vf=c_\vf,\qquad p_r^2+\frac{c_\vf^2}{r^2}=2E,
\end{equation*}
These are conservation of angular momentum and energy.
Thus variables are also separated in polar coordinates, but now we have one
linear and one indecomposable quadratic conservation laws.

Though variables are separated both in Cartesian and polar coordinates, these
coordinate systems are not equivalent in the sense of our definition. Indeed,
the transformation of coordinate does exist but it is not canonical because
conservation laws do not transform into conservation laws.

In polar and Cartesian coordinates, parameters $(E,c_\vf)$ and $(c_1,c_2)$
constitute full sets of parameters, respectively, for the complete action
functions. We have $2E=c_1^2+c_2^2$, but parameter $c_\vf$ cannot be expressed
entirely through $c_1$ and $c_2$. Really, momenta in polar and Cartesian
coordinates are related by transformation
\begin{equation*}
  p_r=\cos\vf\, p_1+\sin\vf\, p_2,\qquad p_\vf=-r\sin\vf\, p_1+r\cos\vf\, p_2.
\end{equation*}
It implies expression for the angular parameter
\begin{equation*}
  c_\vf=-r\sin\vf\, c_1+r\cos\vf\, c_2=-yc_1+xc_2.
\end{equation*}
That is transformation of parameters $(c_1,c_2)\mapsto(E,c_\vf,x,y)$ necessarily
include coordinates. We see that transformation between full sets of independent
parameters may include coordinates, and separation of variables does depend on
our choice.
\qed\end{exa}

Considered examples teach us the following. First, a general solution of the
St\"ackel problem contains many functions parameterizing transformations between
equivalent separable coordinate systems. In configuration space we have
ordinary coordinate transformations which do not alter geometric invariants,
e.g.\ scalar curvature or squares of curvature tensor. These functions are
inessential, and can be eliminated by suitable canonical transformation. Note
that these transformations do not exhaust the whole group of diffeomorphisms.
Second, transformation of one full set of parameters to the other
$(c_a)\mapsto(\tilde c_a,x)$ may include coordinates. In this case, separation
of variables holds differently, and corresponding separable coordinates are not
equivalent.
%******************************************************************************
\section{Linear conservation laws}                                \label{sbskdy}
%*******************************************************************************
Example \ref{esgwfr} shows that separation of variables and conservation laws
depend on the choice of complete set of parameters. The right hand side of the
Hamilton--Jacobi equation (\ref{abdnfh}) depends on one constant $E$ which
therefore is distinguished. In this section, we choose ``symmetric'' set of
independent parameters which does not include $E$. This is done as follows.

Fix a point $\Sp\in\MM$ and introduce parameters
\begin{equation*}
  c_a:=W'_{\al=a}\big|_\Sp.
\end{equation*}
Here we change indices into Latin ones, $\al\mapsto a$ to stress that the set of
parameters $(c_a)$ after coordinate transformations is multiplied by the inverse
Jacobi matrix at a fixed point $\Sp$. The Hamilton--Jacobi equation at this
point is
\begin{equation*}
  g^{ab}c_a c_b=2E,\qquad g^{ab}:=g^{\al\bt}\big|_{\Sp,\al=a,\bt=b},
\end{equation*}
where subscript $\Sp$ means restriction of metric to this point. That is $E(c)$
becomes the dependent parameter. Now we choose the collection $(c_a)$,
$a=1,\dotsc,n$, as the complete set of independent parameters. Moreover metric
$g^{ab}$ at point $\Sp$ can be transformed into diagonal form with $\pm1$ on the
diagonal depending on the signature of the metric by suitable linear coordinate
transformation. Denote this matrix by
\begin{equation}                                                  \label{abdfre}
  \eta^{ab}:=\diag(\underbrace{1,\dotsc,1,}_r\underbrace{-1,\dotsc,-1}_s),
  \qquad r+s=n,
\end{equation}
where the pair $(r,s)$ designates the signature of the metric. Then the
Hamilton--Jacobi equation takes the ``symmetrical'' form
\begin{equation}                                                  \label{anvbfd}
  g^{\al\bt}W'_\al W'_\bt=\eta^{ab}c_a c_b,
\end{equation}
where each function $W'_\al(x^\al,c)$ depends on single coordinate $x^\al$ and,
probably, on the full set of parameters $c$. The left hand side of this
equation is a geometric invariant. Therefore we can regard metric components
$g^{\al\bt}$ and functions $W'_\al$ as transforming by usual tensor
transformation rules under coordinate changes while all quantities with Latin
indices remain unchanged.

The geometric meaning of this set of parameters is as follows. One and only one
geodesic goes through each point $\Sp$ in every direction. Therefore the set
of parameters $(x_\Sp,c)$ yields the Cauchy data and uniquely defines geodesic
line in some neighborhood of point $\Sp$. In the Hamiltonian formulation, the
set $(c_a)$ represent momenta at a fixed point $\Sp$.

Now we derive necessary and sufficient conditions for complete separation of
variables for ``symmetric'' choice of parameters in Hamilton--Jacobi equation
(\ref{anvbfd}). There is a global vielbein $e^\al{}_a(x)$ on every
topologically trivial manifold $\MM\approx\MR^n$:
\begin{equation*}
  g^{\al\bt}=e^\al{}_ae^\bt{}_b\eta^{ab},
\end{equation*}
which is defined up to local $\MO(r,s)$ (pseudo)rotations which include
coordinate reflections. The inverse vielbein is denoted by $e_\al{}^a$
($e_\al{}^be^\bt{}_b=\dl_\al^\bt$, $e_\al{}^ae^\al{}_b=\dl^a_b$).
Then Eq.~(\ref{anvbfd}) is rewritten as
\begin{equation*}
  \eta^{ab}W'_aW'_b=\eta^{ab}c_ac_b,\qquad W'_a:=e^\al{}_aW'_\al.
\end{equation*}
It implies that covectors $W'_a$ and $c_a$ are necessarily related by some
(pseudo)rotational matrix $S\in\MO(r,s)$:
\begin{equation*}
  W'_a(x)=S_a{}^b(x)c_b.
\end{equation*}
multiply this expression by inverse vielbein $e_\al{}^a$ and obtain
\begin{equation*}
  W'_\al=\tilde e_\al{}^ac_a,\qquad\text{where}\qquad
  \tilde e_\al{}^a:=e_\al{}^bS_b{}^a.
\end{equation*}
For complete separation of variables in the Hamilton--Jacobi equation, it is
necessary and sufficient that the right hand side of this relation depends
solely on $x^\al$. Drop the tilde sign and formulate the result.
\begin{theorem}                                                   \label{tbnfku}
Variables in Hamilton--Jacobi equation (\ref{anvbfd}) with ''symmetric`` set of
independent parameters $(c_a)$ are completely separated if and only if there
exists the vielbein $e_\al{}^a(x^\al)$ such that its components with fixed $\al$
depend solely on one coordinate $x^\al$ for all values of index $a$. Then every
term in action function (\ref{abcvde}) is a primitive
\begin{equation}                                                  \label{aicndg}
  \hphantom{\qquad\qquad\ns(\al)}
  W_\al(x_\al,c):=\int\!\!dx^\al e_\al{}^a(x^\al)c_a,
  \qquad\qquad\ns(\al).
\end{equation}
In the Hamiltonian formulation, there are $n$ linear in momenta conservation
laws
\begin{equation}                                                  \label{acbdgf}
  e^\al{}_a(x)p_\al=c_a
\end{equation}
for all $a$.
\end{theorem}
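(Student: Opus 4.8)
The plan is to finish the two implications left open after the lead-up to the statement. The converse is immediate: given a vielbein $e_\al{}^a(x^\al)$ whose components with fixed $\al$ depend only on $x^\al$, put $W'_\al:=e_\al{}^a(x^\al)c_a$; then each $W'_\al$ depends on $x^\al$ alone, $g^{\al\bt}W'_\al W'_\bt=g^{\al\bt}e_\al{}^ae_\bt{}^bc_ac_b=\eta^{ab}c_ac_b$ verifies (\ref{anvbfd}), and $\pl^aW'_\al=e_\al{}^a$ is invertible so (\ref{ijdhgf}) holds; hence the variables are separated. Integration gives the primitive (\ref{aicndg}); along a trajectory (\ref{aocvfd}) reads $p_\al=e_\al{}^a(x^\al)c_a$, and contraction with the inverse vielbein yields the $n$ linear conservation laws (\ref{acbdgf}), which are involutive by Theorem~\ref{thdkwe}. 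The substance is the forward direction, which I would split into (a) showing the pull-back quadratic form on parameter space is the flat one, (b) deducing that the $W'_\al$ are linear in $c$, and (c) reading off the vielbein and the single-coordinate condition.

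In the symmetric setup one may assume $g^{\al\bt}\big|_\Sp=\eta^{\al\bt}$ — this is how the form $\eta^{ab}$ was fixed — and $W'_\al\big|_\Sp=c_\al$ gives $\pl^aW'_\al\big|_\Sp=\dl^a_\al$, $\pl^a\pl^bW'_\al\big|_\Sp=0$. For (a) set $G^{ab}:=g^{\al\bt}(\pl^aW'_\al)(\pl^bW'_\bt)$, the pull-back of the inverse metric along $c\mapsto W'_\bullet$; then $G^{ab}\big|_\Sp=\eta^{ab}$, and differentiating (\ref{anvbfd}) in the coordinates $x^\g$, using that each $W'_\al$ carries only $x^\al$ so that $\pl_\g W'_\al$ collapses to the single term $\pl_\g W'_\g$, gives enough relations to propagate $G^{ab}=\eta^{ab}$ off the base point, whence $G^{ab}\equiv\eta^{ab}$. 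For (b), differentiate $G^{ab}\equiv\eta^{ab}$ in a parameter $c_d$; since $\pl^d\pl^aW'_\al$ is symmetric in $(d,a)$, a Koszul-type symmetrisation — the index shuffle that proves uniqueness of a torsion-free metric connection — together with the invertibility of $\pl^aW'_\al$ from (\ref{ijdhgf}), forces $\pl^a\pl^bW'_\al\equiv0$; hence $W'_\al=e_\al{}^a(x)c_a$ with coefficients a priori depending on all of $x$. For (c), substitute back: $g^{\al\bt}e_\al{}^ae_\bt{}^b=\eta^{ab}$, so $e_\al{}^a$ is a vielbein of $g$; and separability, $\pl_\g W'_\al=(\pl_\g e_\al{}^a)c_a=0$ for $\g\ne\al$ with the $c_a$ independent, forces $\pl_\g e_\al{}^a=0$ for $\g\ne\al$. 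That is the asserted vielbein, and (\ref{aicndg}), (\ref{acbdgf}) follow as in the converse.

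The main obstacle is step (a), propagating $G^{ab}=\eta^{ab}$ away from the base point, and it is exactly here that the symmetric choice of parameters is indispensable. For a general complete integral with $E(c)$ on the right-hand side of the Hamilton--Jacobi equation, the role of $\eta^{ab}$ is taken by the Hessian of $2E(c)$, which need not be constant; then $G^{ab}$ need not be constant, the Koszul step of (b) collapses, and the conservation laws need not be linear — consistent with the Liouville system of Example~\ref{ejsdfw}, which is completely separable but not in the symmetric sense considered here. Concretely, carrying out (a) comes down to differentiating $g^{\al\bt}W'_\al W'_\bt=\eta^{ab}c_ac_b$ in pairs of distinct coordinates, peeling off parameters with $\pl^c$, and using both the single-coordinate dependence of the $W'_\al$ and the boundary values at $\Sp$ to close the resulting system; this is the computational heart of the theorem.
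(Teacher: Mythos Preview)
Your route differs from the paper's in an instructive way. The paper's argument, given just before the theorem, is a two-line shortcut: pick any vielbein $e^\al{}_a$ for $g$, set $W'_a:=e^\al{}_aW'_\al$, so that (\ref{anvbfd}) reads $\eta^{ab}W'_aW'_b=\eta^{ab}c_ac_b$; then it asserts that ``covectors $W'_a$ and $c_a$ are necessarily related by some (pseudo)rotational matrix $S\in\MO(r,s)$: $W'_a(x)=S_a{}^b(x)c_b$'', whence $W'_\al=\tilde e_\al{}^a c_a$ with $\tilde e_\al{}^a:=e_\al{}^bS_b{}^a$, and separability forces each row of $\tilde e$ to depend on one coordinate. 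That is the whole proof.

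Your plan replaces the bare assertion of linearity by an honest mechanism: first establish $G^{ab}:=g^{\al\bt}(\pl^aW'_\al)(\pl^bW'_\bt)\equiv\eta^{ab}$, then run the Koszul shuffle on $\pl^dG^{ab}=0$ to kill $\pl^a\pl^bW'_\al$. Step~(b) is clean and correct: with $M_\al{}^a:=\pl^aW'_\al$ invertible and $T_\al{}^{ab}:=\pl^a\pl^bW'_\al$ symmetric, the tensor $P^{dab}:=g^{\al\bt}T_\al{}^{da}M_\bt{}^b$ is symmetric in the first pair and antisymmetric in the last, hence vanishes; nondegeneracy of $g$ and $M$ gives $T\equiv0$, and the constant term is then killed by the HJ equation itself (linear part in $c$ forces $g^{\al\bt}A_\al=0$). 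Step~(c) is immediate. So the substantive difference between you and the paper is localised exactly at the linearity step, and what you gain is a transparent reason \emph{why} $W'$ must be linear in $c$, rather than an appeal to ``equal norms $\Rightarrow$ related by rotation''.

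That said, your step~(a) is still only a sketch, and it is worth flagging that the paper's shortcut hides the same difficulty. Equality of $\eta$-norms for all $c$ does \emph{not} by itself force $W'_a(x,\,\cdot\,)$ to be linear: for instance with $\eta=\bigl(\begin{smallmatrix}0&1\\1&0\end{smallmatrix}\bigr)$ the map $(u,v)\mapsto(f(u)\,u,\,v/f(u))$ preserves the form for any nowhere-zero $f$. What rescues both arguments is the extra rigidity coming from separation (each $W'_\al$ depends on a single $x^\al$) together with the anchor $W'_\al|_\Sp=c_\al$; your outline of propagating $G^{ab}=\eta^{ab}$ off $\Sp$ by differentiating in coordinates and peeling off parameters is the right shape, but to make it a proof you should actually close the system --- e.g.\ differentiate the first $c$-derivative $g^{\al\bt}(\pl^aW'_\al)W'_\bt=\eta^{ab}c_b$ once more in $c$ and combine with the $x^\g$-derivative of the HJ equation to isolate $G^{ab}-\eta^{ab}$, using the single-coordinate dependence to collapse mixed $\pl_\g W'_\al$ terms. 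Until that computation is written out, step~(a) remains a promissory note --- though, to be fair, no more so than the paper's own one-line passage through $\MO(r,s)$.
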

\begin{cor}
Covariant components of the separable metric for Hamilton--Jacobi equation
(\ref{anvbfd})
\begin{equation*}
  g_{\al\bt}(x^\al,x^\bt)=e_\al{}^a(x^\al)e_\bt{}^b(x^\bt)\eta_{ab}
\end{equation*}
depend only on two coordinates.
\qed\end{cor}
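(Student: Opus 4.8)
The plan is to read the claim off directly from Theorem~\ref{tbnfku} together with the defining relations of the vielbein, so only a short argument is needed. First I would recall the setup preceding the theorem: there is an orthonormal vielbein $e^\al{}_a$, with inverse $e_\al{}^a$ satisfying $e_\al{}^be^\bt{}_b=\dl_\al^\bt$ and $e_\al{}^ae^\al{}_b=\dl^a_b$, such that $g^{\al\bt}=e^\al{}_ae^\bt{}_b\eta^{ab}$; separability then produces the vielbein $\tilde e_\al{}^a:=e_\al{}^bS_b{}^a$ with $S\in\MO(r,s)$ (the tilde is dropped in the statement of Theorem~\ref{tbnfku}), and the content of that theorem is precisely that each $\tilde e_\al{}^a$, for fixed $\al$, is a function of the single coordinate $x^\al$ only, for every value of $a$.

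Next I would establish the displayed formula for $g_{\al\bt}$. The key point is that the pseudo-rotation $S$ preserves $\eta$, $S_c{}^aS_d{}^b\eta_{ab}=\eta_{cd}$, so
\begin{equation*}
  \tilde e_\al{}^a\tilde e_\bt{}^b\eta_{ab}
  =e_\al{}^cS_c{}^a\,e_\bt{}^dS_d{}^b\,\eta_{ab}
  =e_\al{}^ce_\bt{}^d\eta_{cd},
\end{equation*}
i.e.\ the separating vielbein gives the same contraction as the orthonormal one. That this contraction is the covariant metric (the matrix inverse of $g^{\al\bt}$) follows from the orthonormality relations: $(e_\al{}^ce_\bt{}^d\eta_{cd})g^{\bt\g}=e_\al{}^ce_\bt{}^d\eta_{cd}e^\bt{}_ee^\g{}_f\eta^{ef}=e_\al{}^c\eta_{ce}e^\g{}_f\eta^{ef}=e_\al{}^ce^\g{}_c=\dl_\al^\g$. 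Hence $g_{\al\bt}=\tilde e_\al{}^a(x^\al)\,\tilde e_\bt{}^b(x^\bt)\,\eta_{ab}$, which after dropping the tilde is the asserted expression.

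Finally the dependence claim is immediate from this formula: with the indices $\al,\bt$ fixed, the factor $\tilde e_\al{}^a$ is a function of $x^\al$ alone, the factor $\tilde e_\bt{}^b$ is a function of $x^\bt$ alone, the constants $\eta_{ab}$ carry no coordinate dependence, and the summation over $a,b$ introduces no new variables; therefore $g_{\al\bt}$ depends only on the two coordinates $x^\al$ and $x^\bt$. There is no genuine obstacle here — the statement is a direct corollary of Theorem~\ref{tbnfku} — and the only point deserving a word of care is the first step, namely checking that replacing the orthonormal vielbein by the separating one $\tilde e$ leaves the formula for $g_{\al\bt}$ intact, which is exactly the $\MO(r,s)$-invariance of $\eta$.
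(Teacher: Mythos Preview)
Your argument is correct and follows the same route as the paper, which simply declares ``The proof is evident'' immediately after the statement. You are in fact more careful than the paper: you make explicit the point that the separating vielbein of Theorem~\ref{tbnfku} differs from the original orthonormal one by an $\MO(r,s)$ rotation, and you verify that this does not affect the formula $g_{\al\bt}=e_\al{}^ae_\bt{}^b\eta_{ab}$.
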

The proof is evident. Sure, components of the inverse metric $g^{\al\bt}(x)$ and
vielbein $e^\al{}_a(x)$ depend on all coordinates in general. That is
conservation laws (\ref{acbdgf}) are linear in momenta but depend on all
coordinates.

\begin{exa}\rm
Complicate the problem considered in example \ref{esgwfr}. Suppose that
rotationally symmetric metric is given in polar coordinates on a plane
$(r,\vf)\in\MR^2$
\begin{equation}                                                  \label{eiyrfd}
  g_{\al\bt}=\begin{pmatrix} f^2 & 0 \\ 0 & r^2 \end{pmatrix},\qquad
  g^{\al\bt}=\begin{pmatrix} \frac1{f^2} & 0 \\ 0 & \frac1{r^2}\end{pmatrix},
\end{equation}
where $f(r)>0$ is some function of radius only. Then the Hamilton--Jacobi
equation in polar coordinates is
\begin{equation*}
  \frac{W^{\prime2}_r}{f^2}+\frac{W^{\prime2}_\vf}{r^2}=2E,
\end{equation*}
and variables are completely separated:
\begin{equation*}
  W^{\prime}_\vf=c_\vf,\qquad\frac{W^{\prime2}_r}{f^2}+\frac{c_\vf^2}{r^2}=2E.
\end{equation*}
Parameters $c_\vf$ and $E$, as before, correspond to conservation of the angular
momentum and the energy.

Let us look what happens in Cartesian coordinates $(x,y)\in\MR^2$. In polar
coordinates vielbein can be chosen in diagonal form
\begin{equation*}
  e_\al{}^a=\begin{pmatrix} f & 0 \\ 0 & r \end{pmatrix},\qquad
  e^\al{}_a=\begin{pmatrix} \frac1f & 0 \\ 0 & \frac1r \end{pmatrix}.
\end{equation*}
In Cartesian coordinates it is written as
\begin{equation*}
  e_x{}^a=\left(\frac xr f,\,-\frac yr\right),\qquad
  e_y{}^a=\left(\frac yr f,\,\frac xr\right).
\end{equation*}
Vielbein components $e_x{}^a$ explicitly depend on $y$, and variables are not
separated. According to theorem \ref{tbnfku}, separable coordinates exist if
there is the vielbein
\begin{equation*}
  \tilde e_\al{}^a=e_\al{}^b S_b{}^a,\qquad S_b{}^a=\begin{pmatrix} \cos\om &
  -\sin\om \\ \sin\om & ~~\cos\om \end{pmatrix}\in\MS\MO(2),
\end{equation*}
where $\om(x,y)$ is some rotational angle, such that vielbein components
$\tilde e_x{}^a(x)$ and $\tilde e_y{}^a(y)$ depend only on $x$ and $y$,
respectively.
\begin{prop}
If $f^2\nequiv1$, then there is no rotational matrix $S$ such that components
of the first row of vielbein $\tilde e_x{}^a$ depend solely on $x$ for all
$a=1,2$, and components of the second row $\tilde e_y{}^a$ depend solely on $y$.
\end{prop}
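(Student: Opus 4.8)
The plan is to argue by contradiction: assume such a rotational matrix $S_b{}^a=S_b{}^a(r,\vf)$ (with angle $\om=\om(x,y)$) exists and produces $\tilde e_x{}^a(x)$ and $\tilde e_y{}^a(y)$, and derive that $f^2\equiv1$. First I would write the composed vielbein explicitly. With $e_\al{}^a=\diag(f,r)$ in polar coordinates, passing to Cartesian coordinates via the Jacobi matrix $\pl x^\al/\pl y^b$ and then multiplying by $S$, one obtains
\begin{equation*}
  \tilde e_x{}^a=\Big(\tfrac xr f\cos\om-\tfrac yr\sin\om,\;-\tfrac xr f\sin\om-\tfrac yr\cos\om\Big),\qquad
  \tilde e_y{}^a=\Big(\tfrac yr f\cos\om+\tfrac xr\sin\om,\;-\tfrac yr f\sin\om+\tfrac xr\cos\om\Big).
\end{equation*}
The separability requirement is $\pl_y\tilde e_x{}^a=0$ and $\pl_x\tilde e_y{}^a=0$ for $a=1,2$. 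That is four scalar PDEs for the single function $\om$ together with the single free function $f(r)$.

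The key computational step is to extract from these four equations a clean relation. I would first square-and-add the two components of $\tilde e_x{}^a$: $(\tilde e_x{}^1)^2+(\tilde e_x{}^2)^2=\tfrac{x^2}{r^2}f^2+\tfrac{y^2}{r^2}$, which must be a function of $x$ alone; similarly $(\tilde e_y{}^1)^2+(\tilde e_y{}^2)^2=\tfrac{y^2}{r^2}f^2+\tfrac{x^2}{r^2}$ must be a function of $y$ alone. Applying $\pl_y$ to the first and $\pl_x$ to the second, and using $r^2=x^2+y^2$, gives two equations whose compatibility forces $f^2=1+\text{(something vanishing)}$; more precisely, writing $h(r):=f^2(r)-1$, the condition $\pl_y\big(\tfrac{x^2}{r^2}f^2+\tfrac{y^2}{r^2}\big)=0$ becomes $\pl_y\big(\tfrac{x^2}{r^2}h\big)=0$, i.e.\ $x^2\,\pl_y\big(h/r^2\big)=-2xy\,(h/r^2)\cdot\big(\text{correction}\big)$ — I would carry this out and show it collapses to $h(r)\,y=0$-type identities on an open set, hence $h\equiv0$. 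This is the crux: the overdetermined system has no room for a nonconstant $f$, because the $SO(2)$ freedom (one function $\om$) is exactly consumed by one of the four equations, leaving three genuine constraints on $f$ that are jointly satisfied only by $f^2\equiv1$.

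An alternative and perhaps cleaner route, which I would present if the brute-force elimination gets messy, is geometric: by Theorem \ref{tbnfku}, existence of such an $S$ is equivalent to the metric \eqref{eiyrfd} admitting complete separation with a \emph{symmetric} set of parameters, i.e.\ to the existence of two functionally independent linear-in-momenta conservation laws whose coefficient one-forms pull back to single-variable vielbein rows. Two independent linear conservation laws for the geodesic flow mean two independent Killing vectors; together with the Killing vector $\pl_\vf$ already present this would give a three-dimensional isometry algebra on a surface, forcing constant curvature, and the normalization at the fixed point $\Sp$ (where the metric is $\eta^{ab}=\dl^{ab}$) then forces flatness, i.e.\ $f^2\equiv1$. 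I expect the main obstacle to be bookkeeping in the first approach — keeping track of which combinations of the four PDEs are independent after the $\om$-equation is solved — so I would lead with the squared-norm trick to bypass solving for $\om$ altogether, and only invoke the Killing-vector argument as a conceptual cross-check.
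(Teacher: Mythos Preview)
Your squared-norm shortcut does not do what you claim. Computing
\[
  (\tilde e_x{}^1)^2+(\tilde e_x{}^2)^2=\frac{x^2}{r^2}f^2+\frac{y^2}{r^2}
  =1+\frac{x^2}{r^2}\,h(r),\qquad h:=f^2-1,
\]
and imposing $\pl_y(\cdot)=0$ gives $x^2y\,(rh'-2h)/r^4=0$, i.e.\ $rh'=2h$, hence $h=Cr^2$ and $f^2=1+Cr^2$ with an \emph{arbitrary} constant $C$. The second norm condition yields the same ODE, not an independent one, so your ``$h(r)\,y=0$-type identities'' never appear. In fact for $f^2=1+Cr^2$ one has $g_{xx}=1+Cx^2$ and $g_{yy}=1+Cy^2$, so both norm conditions are satisfied for every $C$; the obstruction to $C\ne0$ lives elsewhere. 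This family $f^2=1+Cr^2$ is precisely what the paper obtains from its ODE $rff'-f^2+1=0$; the paper then eliminates $C\ne0$ by a step you have skipped: it extracts from the four PDEs the first-order system $f\,\pl_y\om=-x/r^2$, $f\,\pl_x\om=y/r^2$ and checks that the integrability condition $\pl_x\pl_y\om=\pl_y\pl_x\om$ fails unless $f'=0$, which together with the ODE forces $f^2=1$. Your proposal needs exactly this extra step (or an equivalent use of the off-diagonal information $g_{xy}$).

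Your geometric fallback is also incomplete. ``Three-dimensional isometry algebra $\Rightarrow$ constant curvature'' is fine, but ``constant curvature $+$ metric equals $\dl^{ab}$ at one point $\Rightarrow$ flat'' is false (the round sphere satisfies both hypotheses). The correct statement---two commuting, pointwise independent Killing vectors on a surface force $K=0$---only yields $f'=0$, i.e.\ $f\equiv c$ constant; and the flat cone $c^2dr^2+r^2d\vf^2$ with $c\ne1$ still has $g_{xx}=(c^2x^2+y^2)/r^2$ depending on $y$ in the specific Cartesian coordinates $x=r\cos\vf$, $y=r\sin\vf$ of the proposition, so flatness alone does not decide the question. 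You would again have to return to the explicit $\om$-system, or equivalently to the condition $g_{xy}=Cxy$ together with the norm constraints, to rule out $C\ne0$.
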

\begin{proof}
Two components of vielbein in Cartesian coordinates after the rotation are
\begin{equation*}
\begin{split}
  \tilde e_x{}^1=&~~\frac xrf\cos\om-\frac yr\sin\om,
\\
  \tilde e_x{}^2=&-\frac xrf\sin\om-\frac yr\cos\om.
\end{split}
\end{equation*}
Their independence on $y$ is given by equations
\begin{align}                                                     \label{anskhg}
  \pl_y\tilde e_x{}^1=&-\frac{xy}{r^3}f\cos\om+\frac{xy}{r^2}f'\cos\om
  -\frac xrf\sin\om\pl_y\om-\frac{x^2}{r^3}\sin\om-\frac yr\cos\om\pl_y\om=0,
\\                                                                \label{anbswe}
  \pl_y\tilde e_x{}^2=&~~\frac{xy}{r^3}f\sin\om-\frac{xy}{r^2}f'\sin\om
  -\frac xrf\cos\om\pl_y\om-\frac{x^2}{r^3}\cos\om+\frac yr\sin\om\pl_y\om=0,
\end{align}
where $f':=df/dr$. Take the linear combination of theses equations
\begin{equation*}
  (\ref{anskhg})\sin\om+(\ref{anbswe})\cos\om=
  -\frac xrf\pl_y\om-\frac{x^2}{r^3}=0\qquad\Rightarrow\qquad
  f\pl_y\om=-\frac x{r^2}
\end{equation*}
and substitute in Eq.~(\ref{anskhg}). Finally, we obtain equation for $f$
\begin{equation*}
  rff'-f^2+1=0.
\end{equation*}
Its general solution is
\begin{equation*}
  f^2=1+Cr^2,\qquad C=\const.
\end{equation*}

Now we write the independence of the second row of the vielbein on $x$,
$\pl_x\tilde e_y{}^a=0$, and make the same calculations as for
Eqs.~(\ref{anskhg}) and (\ref{anbswe}). Then we obtain
\begin{equation*}
  f\pl_x\om=\frac y{r^2}
\end{equation*}
with the same function $f$. Consequently, the rotation angle is defined by the
system of equation
\begin{equation*}
  \pl_x\om=\frac y{r^2 f},\qquad \pl_y\om=-\frac x{r^2f}.
\end{equation*}
It is easily checked that integrability conditions are not fulfilled for
$f^2\ne1$, and therefore function $\om(x,y)$ does not exist. Therefore variables
for metric (\ref{eiyrfd}) are separated in Cartesian coordinates only for
$f^2\equiv1$.
\end{proof}
Thus variables in the Hamilton--Jacobi equation for metric (\ref{eiyrfd}) are
completely separated in polar coordinates but not in Cartesian. This is a
natural result because metric (\ref{eiyrfd}) is not translationally invariant.
\qed\end{exa}

The moral of this example is that if the Hamilton--Jacobi equation is not
separable for ''symmetric`` choice of parameters, then it does not mean that
separability does not happen for other choices of independent parameters.
Consequently, theorem \ref{tbnfku} does not exhaust all possibilities of
variable separations.

Conservation laws (\ref{acbdgf}) can be significantly simplified by canonical
transformation $(x^\al,p_\al)\mapsto(X^a,P_a)$ with generating function
depending on old coordinates $(x^\al)$ and new momenta $(P_a)$
\begin{equation}                                                  \label{ancbxk}
  \hphantom{\qquad\qquad\ns(\al)}
  S_2:=\sum_{\al=1}^n I_\al,\qquad I_\al:=\int\!\!dx^\al\,e_\al{}^a(x^\al)P_a,
  \qquad\qquad\ns(\al),
\end{equation}
where $I_\al$ are primitives. Then
\begin{equation*}
  p_\al=\frac{\pl S_2}{\pl x^\al}=e_\al{}^aP_a,\qquad
  X^a=\frac{\pl S_2}{\pl P_a}=\sum_{\al=1}^\Sn\int\!\!dx^\al e_\al{}^a.
\end{equation*}
This is exactly the coordinate transformation $(x^\al)\mapsto\big(X^a(x)\big)$.
Indeed, the Jacobi matrix is
\begin{equation*}
  \pl_\al X^a=e_\al{}^a,
\end{equation*}
and momenta components are transformed as covariant vector. In new coordinates,
the Hamiltonian is
\begin{equation*}
  H_0=\frac12\eta^{ab}P_aP_b.
\end{equation*}
It implies $n$ independent linear conservation laws
\begin{equation}                                                  \label{azvxcs}
  P_a=c_a
\end{equation}
(all momenta are conserved). Since metric components are constant in these
coordinates then there is the maximal number $n$ of linearly independent
commuting Killing vectors $\pl_a$ on $\MM$. In this way, we proved
\begin{theorem}                                                   \label{tdjkfh}
Variables in the Hamilton--Jacobi equation (\ref{anvbfd}) with the ``symmetric''
choice of parameters are completely separable if and only if there exist
$n$ linearly independent at each point commuting Killing vector fields on
$(\MM,g)$. Then there is a coordinate system in which metric components are
constant, and conservation laws have form (\ref{azvxcs}).
\end{theorem}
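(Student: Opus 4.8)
The plan is to prove the two implications separately, using Theorem~\ref{tbnfku} for the forward direction and the simultaneous rectification of commuting vector fields for the converse.

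For the ``only if'' direction I would start from Theorem~\ref{tbnfku}: complete separability with the symmetric parameters is equivalent to the existence of a vielbein $e_\al{}^a(x^\al)$ whose components, for each fixed $\al$, depend on the single coordinate $x^\al$. Then I perform precisely the canonical transformation with generating function (\ref{ancbxk}); since its Jacobi matrix is $\pl_\al X^a=e_\al{}^a$, which is invertible, it is a (local) coordinate change, and in the coordinates $X^a$ the Hamiltonian becomes $H_0=\frac12\eta^{ab}P_aP_b$, i.e.\ the metric components are constant. Constancy of the metric means exactly that the $n$ coordinate vector fields $\pl_a$ satisfy the Killing equations; they commute and are pointwise linearly independent. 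This simultaneously yields the concluding sentence of the theorem, with conservation laws $P_a=c_a$ as in (\ref{azvxcs}). Most of this is already carried out in the paragraph preceding the statement, so the proof merely has to assemble it.

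For the ``if'' direction, suppose $(\MM,g)$ carries $n$ pairwise commuting Killing vector fields $K_{(a)}$, $a=1,\dotsc,n$, linearly independent at each point. Because the fields commute, their flows commute, and the classical simultaneous-rectification argument (the composition of the $n$ flows parametrises a neighbourhood of any given point) produces coordinates $y^a$ in which $K_{(a)}=\pl/\pl y^a$. In these coordinates the Killing conditions read $\pl_a g_{bc}=0$, so all metric components are constant; a further constant linear change brings the constant nondegenerate symmetric matrix to the form $\eta^{ab}$, after which the Hamilton--Jacobi equation with the symmetric parameters becomes $\eta^{ab}W'_aW'_b=\eta^{ab}c_ac_b$. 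Choosing $W'_a=c_a$ gives $W=\sum_a y^ac_a$, each summand depending on a single coordinate, with $\det(\pl^aW'_b)=\det\dl^a_b=1\ne0$, so requirement (\ref{ijdhgf}) holds and the variables are completely separated; equivalently, one notes that a constant vielbein $e_\al{}^a$ with $g^{ab}=e^a{}_ce^b{}_d\eta^{cd}$ trivially meets the hypothesis of Theorem~\ref{tbnfku}. The conservation laws are again (\ref{azvxcs}).

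The only genuinely nontrivial ingredient is the simultaneous rectification of the $n$ commuting, pointwise independent vector fields into coordinate vector fields; everything else is bookkeeping. I expect the care to be needed precisely there, especially in upgrading the local straightening to a statement about a coordinate system valid on a neighbourhood of each point of $\MM\approx\MR^n$ — one either invokes completeness of the Killing fields or simply retains the local formulation, which is all that the St\"ackel property requires in overlapping domains.
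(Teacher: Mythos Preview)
Your proposal is correct and follows essentially the same route as the paper: the forward direction is exactly the assembly of Theorem~\ref{tbnfku} with the canonical transformation~(\ref{ancbxk}) carried out in the paragraph preceding the statement, and your converse via simultaneous rectification of the commuting Killing fields is the natural complement that the paper leaves implicit (the paper also remarks, just after the theorem, that the special vielbein has vanishing nonholonomicity and hence zero curvature, which gives an alternative flatness argument). Your caveat about local versus global rectification is appropriate and matches the paper's local (St\"ackel) formulation.
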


This result is not surprising. Indeed, if vielbein components $e_\al{}^a(x^\al)$
depend on single coordinate $x^\al$ for all $a$, then curvature tensor vanishes.
These calculations are most easily performed in Cartan variables. The
nonholonomicity components are identically zero
\begin{equation*}
  c_{\al\bt}{}^a:=-\pl_\al e_\bt{}^a+\pl_\bt e_\al{}^a\equiv0.
\end{equation*}
Therefore the corresponding $\MS\MO(r,s)$ connection is also zero, and,
consequently, curvature vanishes. Therefore manifold $\MM$ is locally
(pseudo)Euclidean. This is enough for complete separation of variables in
Cartesian coordinates. Thus, if complete separation of variables occurs for
``symmetrical'' choice of independent parameters, then manifold $\MM$ is
necessarily locally flat.
%******************************************************************************
\section{Quadratic conservation laws and coisotropic coordinates} \label{sajwuy}
%*******************************************************************************
Now we consider another possibility of complete variables separation of
variables in the Hamilton--Jacobi equation. Remember that our aim is not to find
all complete integrals, which are infinitely many, but to find at least one of
them for a given type of separable metric. Therefore our strategy is the
following.
\begin{enumerate}
\item Describe all possible types of separable metrics.
\item Choose independent parameters.
\item Specify the class of separating functions in each case.
\item Solve the functional Hamilton--Jacobi equation.
\item Choose the canonical separable metric in every equivalence class,
find the complete integral and the full set of conservation laws.
\item Prove that any additional conservation law is functionally dependent
on the full set of previously obtained conservation laws at least locally.
\end{enumerate}

First, we introduce notation, choose parameters, and prove the simple lemma
which is important for further analysis. Suppose that we have exactly
$0\le\Sn\le n$ commuting Killing vector fields and not more, and variables are
nevertheless completely separated. We assume that variables corresponding to
Killing vectors are separated, and they are the first $\Sn$ coordinates. There
are only two possibilities left: diagonal components of separable metrics can
differ from zero or vanish, which is possible only for indefinite metrics.
Assume that the number of nonzero diagonal metric components is equal to
$0\le\Sm\le n$, and they precede zero components. We shall see in what follows
that the inequality
\begin{equation}                                                  \label{anbcpf}
  n-\Sn-\Sm\le\Sn\qquad\Leftrightarrow\qquad2\Sn+\Sm\le n
\end{equation}
must hold, otherwise the metric becomes degenerate.

A curve $x^\al(\tau)$, $\tau\in\MR$ is called isotropic if the tangent vector
$dx^\al$ is null, i.g.\ $g_{\al\al}\equiv0$. By analogy, we call the coordinate
line {\em coisotropic} if $g^{\al\al}\equiv0$. Sure, coisotropic lines exist
only on pseudoriemannian manifolds. Thus the last $n-\Sn-\Sm$ coordinates are
coisotropic. We shall see in what follows that the nonzero diagonal metric
components correspond to indecomposable quadratic conservation laws.

Introduce notation. Divide all coordinates on three groups
$(x^\al,y^\mu,z^\vf)\in\MM$, where indices from the beginning, middle, and end
of the Greek alphabet take the following values:
\begin{equation}                                                  \label{abfgdr}
 \begin{aligned}
  \al,\bt,\dotsc=&1,\dotsc,\Sn & & \text{(commuting Killing vectors)},
\\
  \mu,\nu,\dotsc=&\Sn+1,\dotsc,\Sn+\Sm & & (\text{quadratic conservation laws},
  ~g^{\mu\mu}\ne0),
\\
 \vf,\phi,\dotsc=&\Sn+\Sm+1,\dotsc,n & & (\text{coisotropic coordinates},
 ~g^{\vf\vf}\equiv0).
\end{aligned}
\end{equation}
In this way we described all possible types of separable metrics.

Suppose that all variables corresponding to Killing vector fields are separated,
and the first $\Sn$ coordinates are chosen cyclic. Then the corresponding
separating functions and conservation laws are
\begin{equation*}
  W'_\al=c_\al\qquad\text{and}\qquad p_\al=c_\al.
\end{equation*}

All parameters are also divided into three groups $(c,d,a)$. Parameters $c$ are
already introduced and parameters $d$ and $a$ are defined as follows. In a fixed
point $\Sp\in\MM$, we set
\begin{equation}                                                  \label{ifleki}
\begin{aligned}
  d_{ii}:=W^{\prime2}_\mu\big|_{\Sp,\mu=i}, & & i=\Sn+1,\dotsc,\Sn+\Sm,
\\
  a_r:=W'_\vf\big|_{\Sp,\vf=r},&& r=\Sn+\Sm+1,\dotsc,n.
\end{aligned}
\end{equation}
which is always possible. For convenience, we consider parameters $d_{ii}$ as
diagonal elements of matrix $d=(d_{ij})$,  $d_{ij}=0$ for $i\ne j$ (to preserve
the number of independent parameters). Sure, all diagonal elements are positive,
$d_{ii}>0$. Parameters $d$ for quadratic conservation laws and $a$ for
coisotropic coordinates are enumerated by Latin indices from the middle and end
of the alphabet, respectively.

At a fixed point $\Sp$, the Hamilton--Jacobi equation becomes
\begin{equation*}
  g^{\al\bt}_\Sp c_\al c_\bt+2g^{\al i}_\Sp c_\al\sqrt{d_{ii}}
  +2g^{\al r}_\Sp c_\al a_r+g^{ij}_\Sp\sqrt{d_{ii}d_{jj}}
  +2g^{ir}_\Sp\sqrt{d_{ii}}\,a_r+g^{rs}_\Sp a_r a_s=2E,
\end{equation*}
where $g^{rr}=0$. Therefore parameters $(c,d,a)$ and $E$ are simply related.
Choose the independent set of parameters in the following ways.

{\bf Case 1}. Energy $E$ corresponds to the coisotropic coordinate. Redefine
$a_n:=2E$. Then the set of independent parameters is
\begin{equation}                                                  \label{ndhtgr}
  (c_\al,d_{ij},a_{\Sn+\Sm+1},\dotsc,a_{n-1\,n-1},a_n:=2E).
\end{equation}

{\bf Case 2}. Energy $E$ corresponds to the indecomposable quadratic
conservation law.
Redefine $d_{\Sn+\Sm\,\Sn+\Sm}:=2E$. Then the set of independent parameters is
\begin{equation}                                                  \label{nsldkj}
  (c_\al,d_{\Sn+1\,\Sn+1},\dotsc,d_{\Sn+\Sm-1\,\Sn+\Sm-1},d_{\Sn+\Sm\,\Sn+\Sm}
  :=2E,a_r).
\end{equation}

This is always possible because redefinition of parameters $a_n$ and
$d_{\Sn+\Sm\,\Sn+\Sm}$ leads to multiplication of separating functions $W'_n$
and $W'_{\Sn+\Sm}$ on some nonzero constants.

We are looking for solutions of the Hamilton--Jacobi equation for separating
functions $W^{\prime2}_{\mu}$ and $W'_\vf$ within the class functions linear
in parameters $d$ and $a$:
\begin{align}                                                     \label{anbcfr}
  W^{\prime2}_\mu:=&b_{\mu\mu}{}^{ij}(y^\mu,c)d_{ij}
  +b_{\mu\mu}{}^r(y^\mu,c)a_r+k_{\mu\mu}(y^\mu,c)>0,
\\                                                                \label{nksjkh}
  W'_\vf:=&b_{\vf}{}^{ij}(z^\vf,c)d_{ij}+b_\vf{}^r(z^\vf,c)a_r+l_\vf(z^\vf,c),
\end{align}
where $b_{\mu\mu}{}^{ii}(y^\mu,c)$, $b_{\mu\mu}{}^r(y^\mu,c)$,
$b_\vf{}^{ii}(z^\vf,c)$, and $b_\vf{}^r(z^\vf,c)$ are some functions of single
coordinate and, possibly, the first group of parameters $c$. We suppose that
matrices $b_{\mu\mu}{}^{ii}$ whose elements are enumerated by pairs of indices
$\mu\mu$ and $ii$, and $b_\vf{}^r$ are nondegenerate. This choice is
justified by the result: the functional Hamilton--Jacobi equation has solution
within this class of separating functions, and our aim is to find at least one
solution. Then the Hamilton--Jacobi equation becomes
\begin{equation}                                                  \label{andjks}
  g^{\al\bt}c_\al c_\bt+2g^{\al\mu}c_\al W'_\mu+2g^{\al\vf}c_\al W'_\vf
  +g^{\mu\nu}W'_\mu W'_\nu+2g^{\mu\vf}W'_\mu W'_\vf+g^{\vf\phi}W'_\vf W'_\phi
  =2E.
\end{equation}
This equality must hold for all values of coordinates and parameters. For terms
quadratic in $a$, we have
\begin{equation*}
  g^{\vf\phi}b_\vf{}^r b_\phi{}^s a_ra_s\equiv0\qquad\Rightarrow\qquad
  g^{\vf\phi}\equiv0,
\end{equation*}
because matrix $b_\vf{}^r$ is nondegenerate. That is the whole metric block for
coisotropic coordinates must be equal to zero: $g^{\vf\phi}\equiv0$ for all
$\vf$ and $\phi$.

There are irrational function on $d$ on the left hand side of Eq.~(\ref{andjks})
which cannot be cancelled. Therefore the equalities $g^{\al\mu}\equiv0$ and
$g^{\mu\vf}\equiv0$ must hold. In addition block $g^{\mu\nu}$ must be diagonal.
Thus we have proved
\begin{lemma}                                                     \label{lnnsgf}
If independent parameters are chosen as in Eqs.~(\ref{ndhtgr}) or (\ref{nsldkj})
and separating functions have form (\ref{anbcfr}), (\ref{nksjkh}), then
separable metric has block form
\begin{equation}                                                  \label{aodewq}
  g^{**}=\begin{pmatrix} g^{\al\bt}(y,z) & 0 & g^{\al\phi}(y,z) \\
  0 & g^{\mu\nu}(y,z) & 0 \\ g^{\vf\bt}(y,z) & 0 & 0 \end{pmatrix},
\end{equation}
where block $g^{\mu\nu}$ is diagonal, and the star takes all values:
$*:=(\al,\mu,\vf)$.
\end{lemma}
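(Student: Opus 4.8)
The plan is to substitute the postulated forms (\ref{anbcfr}) and (\ref{nksjkh}) of the separating functions into the Hamilton--Jacobi equation (\ref{andjks}) and then to exploit that the resulting equality must be an identity in the independent parameters $(c,d,a)$ (one of which is $E$). The guiding observation is that, viewed as functions of the parameters, the terms on the left of (\ref{andjks}) split into three structurally incompatible types: \emph{rational} (indeed polynomial) expressions, coming from $g^{\al\bt}c_\al c_\bt$, from $2g^{\al\vf}c_\al W'_\vf$, from the diagonal pieces $g^{\mu\mu}W^{\prime2}_\mu$ (which are affine in $d,a$ by (\ref{anbcfr})), and from $g^{\vf\phi}W'_\vf W'_\phi$; expressions carrying a \emph{single} square root $W'_\mu=\sqrt{b_{\mu\mu}{}^{ij}d_{ij}+b_{\mu\mu}{}^{r}a_r+k_{\mu\mu}}$, coming from $2g^{\al\mu}c_\al W'_\mu$ and from $2g^{\mu\vf}W'_\mu W'_\vf$; and \emph{products of two distinct} square roots $W'_\mu W'_\nu$, $\mu\ne\nu$, coming from the off-diagonal part of $g^{\mu\nu}W'_\mu W'_\nu$. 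For generic fixed values of the coordinates these three families are linearly independent over the field of rational functions of the parameters, so each must vanish separately, and the restrictions on $g^{**}$ claimed by the lemma then read off.

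Concretely, I would first isolate, among the rational terms, the piece quadratic in the $a_r$ and free of $c$ and $d$; its only source is $g^{\vf\phi}b_\vf{}^{r}b_\phi{}^{s}a_ra_s$, so the assumed nondegeneracy of the matrix $b_\vf{}^{r}$ gives $g^{\vf\phi}\equiv0$, i.e.\ the entire coisotropic block vanishes. Next, for each fixed $\mu$ I would collect the total coefficient multiplying the single square root $W'_\mu$, which equals $2g^{\al\mu}c_\al+2g^{\mu\vf}W'_\vf$; as the different $W'_\mu$ are algebraically independent for generic coordinates, this affine function of the parameters must vanish identically, and its part linear in $c_\al$ gives $g^{\al\mu}\equiv0$, while, inserting $W'_\vf=b_\vf{}^{ij}d_{ij}+b_\vf{}^{r}a_r+l_\vf$ and using the nondegeneracy of $b_\vf{}^{r}$ once more, its part linear in $a_r$ gives $g^{\mu\vf}\equiv0$. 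Finally, the terms $g^{\mu\nu}W'_\mu W'_\nu$ with $\mu\ne\nu$ carry a two-fold irrationality that cannot be matched by any rational or single-square-root term, forcing $g^{\mu\nu}\equiv0$ for $\mu\ne\nu$, so the block $g^{\mu\nu}$ is diagonal. Since all coordinates $x^\al$ are cyclic, every surviving block depends only on $(y,z)$, which gives precisely (\ref{aodewq}).

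The one genuinely delicate point --- and the step I expect to be the main obstacle --- is making rigorous the ``cannot be cancelled'' claim, i.e.\ the linear independence over the field of parameter-dependent rational functions of the rational terms, the square roots $\sqrt{P_\mu}$ (with $P_\mu:=b_{\mu\mu}{}^{ij}d_{ij}+b_{\mu\mu}{}^{r}a_r+k_{\mu\mu}$) and their pairwise products. I would reduce this to a one-variable statement. The linear parts of the $P_\mu$ in the $d_{ii}$-directions are linearly independent --- this is exactly the nondegeneracy of $b_{\mu\mu}{}^{ii}$ --- so the $P_\mu$ are pairwise non-proportional as functions of the parameters; restricting to a generic affine line in parameter space then turns each $W'_\mu$ into $\sqrt{u_\mu t+v_\mu}$ with the $u_\mu t+v_\mu$ pairwise non-proportional and nonconstant, and over $\MR(t)$ the functions $1$, the $\sqrt{u_\mu t+v_\mu}$ and their pairwise products are linearly independent by the elementary fact that square roots of pairwise non-proportional nonconstant linear polynomials, together with their products, are independent over the rational function field (visible from their distinct branch points on the $t$-line). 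The positivity requirement in (\ref{anbcfr}) causes no trouble, since it only needs to hold on a nonempty open parameter set; everything else is routine bookkeeping in the parameters.
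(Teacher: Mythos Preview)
Your approach is essentially the paper's: substitute (\ref{anbcfr}), (\ref{nksjkh}) into (\ref{andjks}), kill $g^{\vf\phi}$ from the quadratic-in-$a$ piece via nondegeneracy of $b_\vf{}^r$, and eliminate $g^{\al\mu}$, $g^{\mu\vf}$ and the off-diagonal $g^{\mu\nu}$ by the ``irrationalities cannot cancel'' argument; the paper simply asserts the latter, whereas you actually justify it via the branch-point reduction, which is a genuine improvement. One small correction: when you read off $g^{\al\mu}=0$ and $g^{\mu\vf}=0$ from the vanishing of the coefficient $2g^{\al\mu}c_\al+2g^{\mu\vf}W'_\vf$ of $W'_\mu$, reverse the order --- at this stage $W'_\vf$ may depend on $c$ in an unspecified way through $b_\vf{}^{ij}(z^\vf,c)$, $b_\vf{}^r(z^\vf,c)$, $l_\vf(z^\vf,c)$, so first isolate the $a_r$-part to get $g^{\mu\vf}b_\vf{}^r=0$ and hence $g^{\mu\vf}=0$, and only then conclude $g^{\al\mu}c_\al\equiv0$.
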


In this way, three items of our strategy are met.

We shall see in what follows that variables are separated differently
depending on which group of parameters contains energy $E$: either $a_n=2E$
(case 1), or $d_{\Sn+\Sm\,\Sn+\Sm}=2E$ (case 2). Therefore each separable
metric belongs to one of the equivalence class $[\Sn,\Sm,n-\Sn-\Sm]_{1,2}$,
where the index shows the location of $E$, if it is considered as independent
parameter, i.e.\ if $\Sm\ne0$ and/or $n-\Sn-\Sm\ne0$.
All Riemannian positive definite metrics belong to classes $[\Sn,n-\Sn,0]_2$.

One and the same metric in different separable coordinates can be the member of
different classes. For example, the Euclidean metric on a plane in the Cartesian
and polar coordinates belong to classes $[2,0,0]$ and $[1,1,0]_2$, respectively
(see example \ref{esgwfr}).

In the Hamiltonian formalism, the complete variables separation leads to $n$
independent conservation laws in involution. For their derivation in explicit
form, we have to replace $W'_{\mu}\mapsto p_\mu$, $W'_r\mapsto p_r$ and solve
equalities (\ref{anbcfr}), (\ref{nksjkh}) with respect to parameters $d_{ii}$
and $a_r$. This introduces some restrictions on matrix $b$ which will be put
later.

Now we solve functional Hamilton--Jacobi equation with respect to separating
functions and metric components for different values of $\Sm$ and $\Sn$.
%******************************************************************************
\subsection{Quadratic conservation laws}
%*******************************************************************************
Suppose that variables in the Hamilton--Jacobi equation are completely
separated, but Killing vectors and coisotropic coordinates are absent. Then
$\Sn=0$, $\Sm=n$, and the Hamilton--Jacobi equation is
\begin{equation}                                                  \label{anvbfl}
  g^{\mu\nu}W'_\mu W'_\nu=2E,
\end{equation}
where separating functions $W'_\mu(y^\mu,d)$ depend on single coordinate $y^\mu$
and, in a general case, on all independent parameters
\begin{equation}                                                  \label{ancbdo}
  d:=(d_{ij})=\diag(d_{11},\dotsc,d_{n-1\,n-1},d_{nn}:=2E).
\end{equation}
Due to lemma \ref{lnnsgf} the separable metric must be diagonal.
Separating functions have the form (\ref{anbcfr})
\begin{equation}                                                  \label{eodgfu}
  W^{\prime2}_\mu(y^\mu,d)=b_{\mu\mu}{}^{ij}(y^\mu) d_{ij}>0,
\end{equation}
where $b_{\mu\mu}{}^{ii}(y^\mu)$ is some invertible $(n\times n)$-matrix, whose
rows depend on single coordinate, and the inequalities restrict the form
of matrix $b$ for given parameters $d$. Without loss of generality, we assume
\begin{equation}                                                  \label{anbdgf}
  b_{\mu\nu}{}^{ij}\big|_{i\ne j}\equiv0,\qquad
  b_{\mu\nu}{}^{ij}\big|_{\mu\ne\nu}\equiv0,
\end{equation}
because the parameter matrix $d_{ij}$ and metric $g^{\mu\nu}$ are diagonal.

Differentiate the Hamilton--Jacobi equation (\ref{anvbfl}) with respect to all
parameters. As the result, we obtain the system of $n$ linear equations for $n$
diagonal metric components:
\begin{equation}                                                  \label{awlbfz}
\begin{split}
  \sum_{\mu=1}^n g^{\mu\mu}W'_\mu\frac{\pl W'_\mu}{\pl d_{ii}}=&0,\qquad
  i=1,\dotsc,n-1,
\\
  \sum_{\mu=1}^ng^{\mu\mu}W'_\mu\frac{\pl W'_\mu}{\pl E}=&1.
\end{split}
\end{equation}
The determinant of this system of equations on metric components differ from
zero:
\begin{equation*}
  \hphantom{\qquad\ns(\al)}
  \det\left(W'_\mu\frac{\pl W'_\mu}{\pl d_{ii}}\right)=W'_1\dotsc W'_n
  \det\left(\frac{\pl W'_\mu}{\pl d_{ii}}\right)\ne0,
  \qquad\ns(\mu),
\end{equation*}
where matrix indices are enumerated by pairs $\mu\mu$ and $ii$ due to
condition (\ref{indbfd}) which takes the form
\begin{equation}                                                  \label{aneiua}
  \det\left(\frac{\pl W'_\mu}{\pl d_{ii}}\right)\ne0.
\end{equation}
in our case. Consequently, the system of equations (\ref{awlbfz}) has unique
solution.

To find metric components, we substitute Eqs.~(\ref{eodgfu}) in the left hand
side of the Hamilton--Jacobi equation (\ref{anvbfl}):
\begin{equation}                                                  \label{egfdas}
  g^{\mu\nu}b_{\mu\nu}{}^{ij} d_{ij}=2E.
\end{equation}
Since $d_{nn}=2E$, it implies that diagonal metric components are
\begin{equation}                                                  \label{adfghf}
  g^{\mu\mu}=b_{nn}{}^{\mu\mu},
\end{equation}
where $b_{nn}{}^{\mu\mu}(y)$ is the last row of the inverse matrix
$b_{ii}{}^{\mu\mu}$:
\begin{equation*}
  \sum_{i=1}^n b_{\nu\nu}{}^{ii}b_{ii}{}^{\mu\mu}=
  b_{\nu\nu}{}^{ij}b_{ij}{}^{\mu\mu}=\dl^\mu_\nu,
  \qquad
  \sum_{\mu=1}^n b_{ii}{}^{\mu\mu} b_{\mu\mu}{}^{jj}=
  b_{ii}{}^{\mu\nu} b_{\mu\nu}{}^{jj}=\dl^j_i,
\end{equation*}
whose elements depend on all coordinates in general. In addition, we must
require that all elements of the last row of matrix (\ref{adfghf}) differ from
zero, otherwise the metric becomes degenerate.

Thus the problem is solved, and the complete integral of the Hamilton--Jacobi
equation is
\begin{equation*}
  W(y,d)=\sum_{\mu=1}^n W_\mu(y^\mu,d),
\end{equation*}
where the right hand side contains primitives
\begin{equation*}
  \hphantom{\qquad\qquad\ns(\al)}
  W_\mu(y^\mu,d):=\int\!\! dy^\mu\sqrt{b_{\mu\mu}{}^{ij}d_{ij}},
  \qquad\qquad\ns(\mu),\quad\forall\mu.
\end{equation*}
Integration constants in the last equalities are inessential because the action
function is defined up to a constant.

Equalities (\ref{eodgfu}) imply $n$ quadratic conservation laws
\begin{equation}                                                  \label{amnssg}
  b_{ii}{}^{\mu\nu}p_\mu p_\nu=d_{ii}
\end{equation}
in the Hamiltonian formulation. In general, the left hand side depends on all
coordinates and momenta. These conservation laws correspond to Killing tensors
of second rank.

Rewrite Eq.~(\ref{amnssg}) in the form
\begin{equation*}
  p_\mu^2=b_{\mu\mu}{}^{ij}d_{ij}.
\end{equation*}
It implies that quadratic conservation laws are indecomposable if and only if
each row of matrix $b_{\mu\mu}{}^{ii}$ contains at least two nonzero and
not proportional elements. Otherwise, the square root can be taken, and linear
conservation law appears, which contradicts the assumption on the absence of
Killing vectors.

Thus we have proved
\begin{theorem}                                                   \label{tgdjke}
If all diagonal inverse metric components differ from zero, $g^{\mu\mu}\ne0$,
and Killing vector fields and coisotropic components are absent, then there
exist such independent parameters (\ref{ancbdo}) and separating functions
(\ref{eodgfu}), that all conservation laws are quadratic, and the
Hamilton--Jacobi equation admits complete separation of variables if and only if
when inverse metric $g^{\mu\nu}$ is diagonal with components (\ref{adfghf}),
where $b_{ii}{}^{\mu\mu}$ is the inverse matrix to an arbitrary nondegenerate
matrix $b_{\mu\mu}{}^{ii}(y^\mu)$, whose rows depend on single coordinate
$y^\mu$ and contain at least two nonzero not proportional elements. In addition,
inequalities $b_{nn}{}^{\mu\mu}\ne0$ must hold for all values of index $\mu$,
and arbitrary functions $b_{\mu\mu}{}^{ii}$ must be chosen in such a way that
the system of equations for separating functions $W'_\mu$ (\ref{eodgfu}) have
real solutions for given values of parameters $d$.
\end{theorem}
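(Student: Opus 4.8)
The plan is to read this theorem as the assembly of the computations carried out in the present subsection, supplying the two logical directions of the ``if and only if'' and the precise reasons for the three side conditions (nondegeneracy of $b$, nonvanishing of its last inverse row, and the reality proviso).

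For \emph{necessity}, I would start from the hypothesis $\Sn=0$, $\Sm=n$, so that the Hamilton--Jacobi equation is $(\ref{anvbfl})$. Lemma~\ref{lnnsgf} already forces the inverse metric to be diagonal, and the admissible separating functions have the linear-in-parameters form $(\ref{eodgfu})$ with parameter set $(\ref{ancbdo})$. Substituting $(\ref{eodgfu})$ into $(\ref{anvbfl})$ gives $(\ref{egfdas})$, i.e.\ $\sum_\mu g^{\mu\mu}b_{\mu\mu}{}^{ii}d_{ii}=d_{nn}$; since the $d_{ii}$ are independent parameters while the entries $b_{\mu\mu}{}^{ii}(y^\mu)$ and the components $g^{\mu\mu}(y)$ do not depend on them, one fixes $y$ and equates coefficients of each $d_{ii}$ to obtain $\sum_\mu g^{\mu\mu}b_{\mu\mu}{}^{ii}=\delta^i_n$. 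This is exactly the statement that the row $(g^{\mu\mu})_\mu$ is the last row $b_{nn}{}^{\mu\mu}$ of the matrix inverse to $(b_{\mu\mu}{}^{ii})$, which is $(\ref{adfghf})$. The matrix $b$ must be nondegenerate because, using $\partial W'_\mu/\partial d_{ii}=b_{\mu\mu}{}^{ii}/(2W'_\mu)$, the completeness condition $(\ref{aneiua})$ reads $\det\bigl(\partial W'_\mu/\partial d_{ii}\bigr)=(2^nW'_1\cdots W'_n)^{-1}\det\bigl(b_{\mu\mu}{}^{ii}\bigr)\ne0$, and reality of $W'_\mu$ forces $W'_\mu\ne0$; nonvanishing of every entry of the last inverse row is just nondegeneracy of the resulting diagonal metric.

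For \emph{sufficiency}, I would run the same computation backwards. Given a nondegenerate $b_{\mu\mu}{}^{ii}(y^\mu)$ with rows depending on single coordinates, whose last inverse row has no zero entries, and such that the algebraic relations $(\ref{eodgfu})$ admit real solutions $W'_\mu(y^\mu,d)$ on the relevant parameter domain, define the diagonal inverse metric by $(\ref{adfghf})$ and the separating functions by $(\ref{eodgfu})$. Then, summing over $\mu$, $\sum_\mu g^{\mu\mu}W_\mu^{\prime2}=\sum_\mu b_{nn}{}^{\mu\mu}b_{\mu\mu}{}^{jj}d_{jj}=d_{nn}=2E$ by the inverse-matrix identity, so $(\ref{anvbfl})$ holds; the completeness condition $(\ref{aneiua})$ holds by the determinant formula above; the complete integral is the sum of the primitives $W_\mu=\int dy^\mu\sqrt{b_{\mu\mu}{}^{ij}d_{ij}}$; and replacing $W'_\mu$ by $p_\mu$ and inverting $(\ref{eodgfu})$ yields the quadratic first integrals $(\ref{amnssg})$, whose involutivity follows from Theorem~\ref{thdkwe}. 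It remains to justify the clause on two nonzero non-proportional entries per row: rewriting $(\ref{amnssg})$ as $p_\mu^2=b_{\mu\mu}{}^{ii}(y^\mu)d_{ii}$, if for some $\mu$ all nonzero entries of that row were constant multiples of one another, the right-hand side would factor as a single function of $y^\mu$ times a fixed linear form in the $d_{ii}$; extracting the square root and rescaling $y^\mu$ would then produce a first integral linear in $p_\mu$, hence a commuting Killing vector, contradicting $\Sn=0$. Conversely, with two non-proportional entries in every row no such reduction is possible, so all $n$ conservation laws are genuinely quadratic.

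The step I expect to require the most care is not any single calculation but the interplay of the reality proviso with the conditions $\det b\ne0$ and $b_{nn}{}^{\mu\mu}\ne0$: one must ensure the parameter domain $d\in\MV$ can be chosen so that all radicands $b_{\mu\mu}{}^{ij}d_{ij}$ are simultaneously positive on the relevant coordinate range, since otherwise the $W'_\mu$ --- and with them the nondegeneracy of $\partial W'_\mu/\partial d_{ii}$ and the whole construction --- fail to be defined. This is precisely why the theorem carries that proviso as a hypothesis rather than deriving it, and it is the one condition that no purely algebraic argument removes.
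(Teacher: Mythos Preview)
Your proposal is correct and follows essentially the same route as the paper: invoke Lemma~\ref{lnnsgf} for diagonality, plug the ansatz $(\ref{eodgfu})$ into $(\ref{anvbfl})$, and read off $g^{\mu\mu}=b_{nn}{}^{\mu\mu}$ from the resulting identity in the independent parameters $d_{ii}$, together with the indecomposability argument for the row condition. The only cosmetic difference is that the paper phrases the extraction of metric components via differentiation with respect to the parameters (system $(\ref{awlbfz})$ and its nonvanishing determinant), whereas you match coefficients directly; these are equivalent since the $d_{ii}$ are independent. One phrasing to tighten: ``reality of $W'_\mu$ forces $W'_\mu\ne0$'' is not literally true---what you want is the strict positivity $W^{\prime2}_\mu>0$ already built into $(\ref{eodgfu})$.
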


Consequently, diagonal metric components are parameterized by $n^2$ arbitrary
functions on single coordinate $b_{\mu\mu}{}^{ii}(y^\mu)$, satisfying three
conditions: $\det b_{\mu\mu}{}^{ii}\ne0$, $b_{nn}{}^{\mu\mu}\ne0$ for all $\mu$,
and each row of matrix $b_{\mu\mu}{}^{ii}$ must contain at least two nonzero
not proportional elements. Note that components of the inverse matrix
$b_{ii}{}^{\mu\mu}(y)$ depend on all coordinates in general.

In fact, we prove the following. The right hand sides of Eqs.~(\ref{eodgfu})
contain arbitrary functions of single coordinates linear and homogeneous in $d$.
Therefore the separable metric is found for arbitrary indecomposable quadratic
conservation laws. All restrictions on arbitrary functions in the conservation
laws follow from nondegeneracy of the separable metric, indecomposability of the
quadratic Killing tensors, and existence of solutions for separating functions
$W'_\mu$ in Eqs.~(\ref{eodgfu}).

In the Hamiltonian formulation, we have $n$ involutive quadratic conservation
laws (\ref{amnssg}). Consequently, the Hamiltonian system is Liouville
integrable.

\begin{exa}                                                       \label{ecddsa}
Consider two dimensional Euclidean space $(y^1,y^2)\in\MR^2$  and choose
matrix $b$ in a general form
\begin{equation}                                                  \label{andbcf}
  b_{\mu\mu}{}^{ii}:=\begin{pmatrix} \phi_{11}(y^1) & \phi_{12}(y^1) \\
  \phi_{21}(y^2) & \phi_{22}(y^2)\end{pmatrix}
  \qquad\Rightarrow\qquad
  b_{ii}{}^{\mu\mu}:=\frac1{\det b}
  \begin{pmatrix} ~~\phi_{22} & -\phi_{12} \\ -\phi_{21} & ~~\phi_{11}
  \end{pmatrix}
\end{equation}
and assume that
\begin{equation}                                                  \label{eqhdpk}
  \det b=\phi_{11}\phi_{22}-\phi_{12}\phi_{21}\ne0.
\end{equation}
If all matrix elements differ from zero, then conditions of theorem \ref{tgdjke}
hold: elements of the first and second rows depend on $y^1$ and $y^2$,
respectively, and each row contains two nonzero elements assumed to be not
proportional. Separable metric corresponding to matrix $b$ us parameterized by
four functions of single coordinates:
\begin{equation}                                                  \label{eqkduy}
  g^{\mu\nu}=\frac1{\det b}
  \begin{pmatrix} -\phi_{21} & 0 \\ ~~0 & \phi_{11} \end{pmatrix}.
\end{equation}

Now we consider particular cases. Let $\phi_{11}\equiv1$ and
$\phi_{21}\equiv-1$. Then
\begin{equation*}
  b_{\mu\mu}{}^{ii}:=\begin{pmatrix} ~~1 & \phi_{12}(y^1) \\
  -1 & \phi_{22}(y^2)\end{pmatrix} \qquad\Rightarrow\qquad
  b_{ii}{}^{\mu\mu}:=\frac1{\phi_{12}+\phi_{22}}
  \begin{pmatrix} \phi_{22} & -\phi_{12} \\ 1 & 1\end{pmatrix}.
\end{equation*}
The respective inverse metric is conformally Euclidean:
\begin{equation}                                                  \label{abcvdv}
  g^{\mu\nu}=\frac1{\phi_{12}+\phi_{22}}\begin{pmatrix} 1 & 0 \\
  0 & 1 \end{pmatrix}.
\end{equation}
The Hamilton--Jacobi equation becomes
\begin{equation*}
 W^{\prime2}_1+W^{\prime2}_2=2E(\phi_{12}+\phi_{22}).
\end{equation*}
Complete separation of variables yields
\begin{equation*}
  W^{\prime2}_1=d_{11}+2E\phi_{12},\qquad W^{\prime2}_2=-d_{11}+2E\phi_{22},
\end{equation*}
where $d_{11}$ and $E$ are two independent parameters. The conservation laws are
quadratic
\begin{equation}                                                  \label{eshyts}
  \frac1{\phi_{12}+\phi_{22}}\big(\phi_{22}p_1^2-\phi_{12}p_2^2\big)=d_{11},
  \qquad \frac1{\phi_{12}+\phi_{22}}\big(p_1^2+p_2^2\big)=2E.
\end{equation}
In general, they are indecomposable for $d_{11}\ne0$, $E>0$ and nontrivial
functions $\phi_{12}$, $\phi_{22}$. The parameters domain for $d_{11}$, $E$ and
acceptable form of arbitrary functions are defined by the following
inequalities:
\begin{equation*}
  d_{11}+2E\phi_{12}>0,\qquad -d_{11}+2E\phi_{22}>0.
\end{equation*}
If, for example, $d_{11}>0$ and $E>0$, then
\begin{equation*}
  \phi_{12}>-\frac {d_{11}}{2E},\qquad\phi_{22}>\frac {d_{11}}{2E}.
\end{equation*}
Consequently this example is the particular two dimensional case of the
Liouville system for Riemannian metric considered in example \ref{ejsdfw}.

Now put $\phi_{11}\equiv-1$ and $\phi_{21}\equiv-1$. Then
\begin{equation*}
  b_{\mu\mu}{}^{ii}:=\begin{pmatrix} -1 & \phi_{12}(y^1) \\
  -1 & \phi_{22}(y^2)\end{pmatrix} \qquad\Rightarrow\qquad
  b_{ii}{}^{\mu\mu}:=\frac1{\phi_{12}-\phi_{22}}
  \begin{pmatrix} \phi_{22} & -\phi_{12} \\ 1 & -1\end{pmatrix}.
\end{equation*}
These matrices imply conformally Lorentzian metric
\begin{equation}                                                  \label{abchdv}
  g^{\mu\nu}=\frac1{\phi_{12}-\phi_{22}}\begin{pmatrix} 1 & ~~0 \\ 0 & -1
  \end{pmatrix}.
\end{equation}
The respective Hamilton--Jacobi equation is
\begin{equation*}
  W^{\prime2}_1-W^{\prime2}_2=2E(\phi_{12}-\phi_{22}),
\end{equation*}
and variables are separated:
\begin{equation*}
  W^{\prime2}_1=-d_{11}+2E\phi_{12},\qquad W^{\prime2}_2=-d_{11}+2E\phi_{22}.
\end{equation*}
They are indecomposable for $d_{11}\ne0$, $E\ne0$, and nonconstant functions
$\phi_{12}$, $\phi_{22}$. The conservation laws are quadratic:
\begin{equation}                                                  \label{ebnhbn}
  \frac1{\phi_{12}-\phi_{22}}\big(\phi_{22}p_1^2-\phi_{12}p_2^2\big)=d_{11},
  \qquad\frac1{\phi_{12}-\phi_{22}}\big(p_1^2-p_2^2\big)=2E.
\end{equation}
The parameter domain of definition and the form of arbitrary functions are
defined by inequalities:
\begin{equation*}
  -d_{11}+2E\phi_{12}>0,\qquad -d_{11}+2E\phi_{22}>0.
\end{equation*}
If $E>0$, then there are two possibilities when $\phi_{12}-\phi_{22}\ne0$:
\begin{equation*}
  \phi_{12}>\phi_{22}>\frac{d_{11}}{2E},\qquad
  \phi_{22}>\phi_{12}>\frac{d_{11}}{2E},
\end{equation*}
for $g^{11}>0$ and $g^{11}<0$, respectively. This is also two-dimensional
Liouville system considered in example \ref{ejsdfw} but for Lorentzian signature
metric.

Let $\phi_{12}\equiv0$ and $\phi_{21}\equiv-1$. Then
\begin{equation*}
  b_{\mu\mu}{}^{ii}:=\begin{pmatrix} ~~\phi_{11}(x) & 0 \\
  -1 & \phi_{22}(y)\end{pmatrix} \qquad\Rightarrow\qquad
  b_{ii}{}^{\mu\mu}:=\frac1{\phi_{11}\phi_{22}}
  \begin{pmatrix} \phi_{22} & 0 \\ 1 & \phi_{11}\end{pmatrix}.
\end{equation*}
The conditions of theorem \ref{tgdjke} are not fulfilled because the first row
of matrix $b_{\mu\mu}{}^{ii}$ contains only one nonzero element. The respective
inverse metric is
\begin{equation}                                                  \label{abkvdv}
  g^{\mu\nu}=\frac1{\phi_{11}\phi_{22}}\begin{pmatrix} 1 & 0 \\
  0 & \phi_{11} \end{pmatrix}.
\end{equation}
The Hamilton--Jacobi equation for this metric
\begin{equation*}
  W^{\prime2}_1+\phi_{11}W^{\prime2}_2=2E\phi_{11}\phi_{22},
\end{equation*}
admits complete separation of variables:
\begin{equation*}
  W^{\prime2}_1=\phi_{11}d_{11},\qquad W^{\prime2}_2=-d_{11}+2E\phi_{22}.
\end{equation*}
The conservation laws are quadratic:
\begin{equation}                                                  \label{efwtsy}
  \frac1{\phi_{11}\phi_{22}}p_1^2=d_{11},\qquad
  \frac1{\phi_{11}\phi_{22}}\big(p_1^2+\phi_{11} p_2^2\big)=2E.
\end{equation}
The parameter domain of definition and admissible form of arbitrary functions
are defined by inequalities:
\begin{equation*}
  \phi_{11}d_{11}>0,\qquad -d_{11}+2E\phi_{22}>0.
\end{equation*}
For $d_{11}>0$ and $E>0$, for example, they restrict arbitrary functions:
\begin{equation*}
  \phi_{11}>0,\qquad \phi_{22}>\frac{d_{11}}{2E}.
\end{equation*}
In this case the first conservation law (\ref{efwtsy}) is decomposable: it is
the square of linear conservation law
$p_1/\sqrt{\phi_{11}\phi_{22}}=\sqrt{d_{11}}$. Consequently,  we have in fact
one linear and one quadratic conservation law.
\qed\end{exa}

Now we simplify the form of matrix $b_{\mu\mu}{}^{ii}$ using the canonical
transformation. It turns out that one of nonzero elements in each row of
matrix $b_{\mu\mu}{}^{ii}$ can be transformed to unity. For example, let
$b_{\mu\mu}{}^{ii}\ne0$ for given $\mu$ and $i$. For definiteness we assume that
it is the diagonal element, $\mu=i$. Choose the generating function of canonical transformation
for one pair of canonically conjugate variables $(y^\mu,p_\mu)\mapsto(Y^i,P_i)$
as
\begin{equation}                                                  \label{abdvfi}
  \hphantom{\qquad\qquad\ns(\al)}
  S_2(y,P):=\int\!\!dy^\mu\sqrt{|b_{\mu\mu}{}^{ii}|}\,P_i,
  \qquad\qquad\ns(\mu,i).
\end{equation}
Note that it must be linear in $P$, because otherwise the coordinate
transformation $x\mapsto X(x)$ depends on momenta which contradicts the
equivalence relation. Then we get for fixed $\mu$:
\begin{equation*}
  p_\mu=\frac{\pl S_2}{\pl y^\mu}=\sqrt{|b_{\mu\mu}{}^{ii}|}\,P_i,
  \qquad Y^i=\frac{\pl S_2}{\pl P_i}=\int\!\!dy^\mu
  \sqrt{|b_{\mu\mu}{}^{ii}|}, \qquad\qquad\ns(\mu,i),
\end{equation*}
and Eq.~(\ref{eodgfu}) becomes
\begin{equation*}
  |b_{\mu\mu}{}^{ii}|\,\tilde W^{\prime2}_i=b_{\mu\mu}{}^{ii}\,d_{ii}
  +\sum_{j\ne i}b_{\mu\mu}{}^{jj}\,d_{jj}
\end{equation*}
for transformed separating function $\tilde W_i$. Dividing it by
$|b_{\mu\mu}{}^{ii}|$, we obtain the quadratic equality
\begin{equation}                                                  \label{anfbgr}
  \tilde W^{\prime2}_i=\pm d_{ii}+\sum_{j\ne i}\tilde b_{ii}{}^{jj}\,d_{jj}
\end{equation}
with some new functions $\tilde b_{ii}{}^{jj}(Y^i)$. Similar
transformation can be performed for each row. So, without loss of generality,
one of nonzero elements in each row can be set to unity because signs $\pm1$
can be attributed to parameters $d$. It means that canonical separable metric is
parameterized by $n^2-n$ functions of single coordinates.

In fact, this statement is evident. Indeed, Eq.~(\ref{eodgfu}) depends only on
single coordinate, and one of nonzero elements can be transformed to $\pm1$ by
coordinate transformation  $y^\mu\mapsto\tilde y^\mu(y^\mu)$.
%******************************************************************************
\subsection{Linear and quadratic conservation laws}               \label{sbbcdh}
%*******************************************************************************
Assume that metric admits exactly $1\le\Sn<n$ and not more commuting Killing
vector fields, and coisotropic coordinates are absent, $n-\Sn-\Sm=0$. There is
one linear conservation law for each Killing vector. Then we need additional
$n-\Sn$ independent involutive conservation laws to provide complete
integrability. These conservation laws are quadratic as was shown in the
preceeding section. Now we have two groups of coordinates $(x^\al,y^\mu)\in\MM$
and two groups of independent parameters $c$ and $d$ (\ref{nsldkj}). Separable
metric (\ref{aodewq}) in this case is block diagonal
\begin{equation}                                                  \label{ahdjsk}
  g^{**}=\begin{pmatrix} g^{\al\bt} & 0 \\ 0 & g^{\mu\nu} \end{pmatrix},
\end{equation}
where the lower block $g^{\mu\nu}$ is diagonal.

Functions $k$ in equality (\ref{anbcfr}) must be quadratic in $c$ as the
consequence of the Hamilton--Jacobi equation. Therefore separating functions are
\begin{equation}                                                  \label{nsldjj}
  W^{\prime2}_\mu=b_{\mu\mu}{}^{ij}d_{ij}+k^{\al\bt}_{\mu\mu}c_\al c_\bt>0,
\end{equation}
where $b_{\mu\mu}{}^{ii}(y^\mu)$ is some nondegenerate matrix whose elements in
each row depend on single coordinate corresponding to the number of the row, and
functions $k^{\al\bt}_{\mu\mu}(y^\mu)$ are arbitrary and symmetric in indices
$\al$ and $\bt$ but can be nondiagonal in them. Their indices can be written one
under another because they will never be lowered or raised.

There is the new property. We shall see that the term
$k^{\al\bt}_{\mu\mu}c_\al c_\bt$ in Eq.~(\ref{nsldjj}) must not vanish for
nondegenerate metric. Therefore the requirement that each row of matrix
$b_{\mu\mu}{}^{ii}$ contains at least two not proportional elements is
unnecessary.

After separating the first group of coordinates, the Hamilton--Jacobi equation
becomes
\begin{equation}                                                  \label{egdfrj}
  g^{\al\bt}c_\al c_\bt+g^{\mu\nu}(b_{\mu\nu}{}^{ij}d_{\ij}
  +k^{\al\bt}_{\mu\nu}c_\al c_\bt)=2E,
\end{equation}
where matrix $g^{\mu\nu}$ is diagonal. We get previous expression (\ref{adfghf})
for elements in the second block because $d_{nn}:=2E$. Therefore
\begin{equation}                                                  \label{egdfre}
  g^{\al\bt}c_\al c_\bt+g^{\mu\nu}k^{\al\bt}_{\mu\nu}c_\al c_\bt=0.
\end{equation}
This equality must be fulfilled for all $c$, and defines the upper block of the
inverse separable metric (\ref{ahdjsk}):
\begin{equation}                                                  \label{anfbdj}
 g^{\al\bt}=-k^{\al\bt}_{\mu\nu}g^{\mu\nu},
\end{equation}
where $k^{\al\bt}_{\mu\mu}(y^\mu)$ are arbitrary functions on single
coordinates providing nondegeneracy of $g^{\al\bt}$ and positivity of right hand
sides of Eqs.~(\ref{nsldjj}). Thus we have proved
\begin{theorem}                                                   \label{theojk}
If separable metric admits exactly $1\le\Sn<n$ and not more commuting Killing
vector fields, and coisotropic coordinates are absent, $n-\Sn-\Sm=0$, then there
exists such set of independent parameters (\ref{nsldkj}) and separating
functions (\ref{nsldjj}), that separable inverse metric has block form
(\ref{ahdjsk}). The lower block is diagonal with elements (\ref{adfghf}), where
$b_{ii}{}^{\mu\mu}(y)$ is the matrix inverse to matrix
$b_{\mu\mu}{}^{ii}(y^\mu)$ whose elements in each row depend only on single
coordinates, and all elements of the last row differ from zero,
$b_{nn}{}^{\mu\mu}\ne0$. The upper block  has form (\ref{anfbdj}) with arbitrary
functions $k^{\al\bt}_{\mu\mu}(y^\mu)=k^{\bt\al}_{\mu\mu}(y^\mu)$ depending on
single coordinates. In addition, the upper block (\ref{egdfre}) must be
nondegenerate and provide positiveness of right hand sides of
Eqs.~(\ref{nsldjj}). In the Hamiltonian formulation, there are $n-\Sn$
indecomposable quadratic conservation laws:
\begin{equation}                                                  \label{abcvdk}
\begin{split}
  p_\al=&c_\al,
\\
  b_{ii}{}^{\mu\nu}\big(p_\mu p_\nu-k^{\al\bt}_{\mu\nu}c_\al c_\bt\big)
  =&d_{ii}.
\end{split}
\end{equation}
\end{theorem}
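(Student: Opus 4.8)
The plan is to proceed exactly as in Theorem~\ref{tgdjke}, now carrying along the $\Sn$ cyclic coordinates. Since the $\Sn$ commuting Killing vector fields are present, their variables are separated with the first $\Sn$ coordinates cyclic, we have $W'_\al=c_\al$ and $p_\al=c_\al$, and only the two coordinate groups $(x^\al,y^\mu)$ of~(\ref{abfgdr}) survive because $n-\Sn-\Sm=0$. The separating functions are taken in the class~(\ref{anbcfr}) with no coisotropic parameters, so $W^{\prime2}_\mu=b_{\mu\mu}{}^{ij}(y^\mu)d_{ij}+k_{\mu\mu}(y^\mu,c)$ with $b_{\mu\mu}{}^{ii}$ nondegenerate; Lemma~\ref{lnnsgf} then forces the inverse metric into the block-diagonal form~(\ref{ahdjsk}) with $g^{\al\mu}\equiv0$ and diagonal lower block. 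It remains to fix the two blocks, exhibit the complete integral, and read off the conservation laws.

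First I would substitute this ansatz into the Hamilton--Jacobi equation~(\ref{anvbfd}); using the block structure, and with the Case~2 parameters~(\ref{nsldkj}) so that $2E=d_{nn}$, it becomes
\[
  g^{\al\bt}c_\al c_\bt+\sum_\mu g^{\mu\mu}\bigl(b_{\mu\mu}{}^{ij}d_{ij}+k_{\mu\mu}(y^\mu,c)\bigr)=2E .
\]
Matching the part linear in $d$ gives, verbatim as in Theorem~\ref{tgdjke}, that $(g^{\mu\mu})$ is the last row of the matrix inverse to $b_{\mu\mu}{}^{ii}$, i.e.\ $g^{\mu\mu}=b_{nn}{}^{\mu\mu}$, which is~(\ref{adfghf}); nondegeneracy of the lower block forces $\det b_{\mu\mu}{}^{ii}\ne0$ and $b_{nn}{}^{\mu\mu}\ne0$ for every $\mu$. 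The $d$-free part is $g^{\al\bt}c_\al c_\bt+\sum_\mu g^{\mu\mu}k_{\mu\mu}(y^\mu,c)=0$ for all $c$; since its first term is homogeneous quadratic in $c$ and the $g^{\mu\mu}$ are parameter-independent, comparing $c$-homogeneous components forces each $k_{\mu\mu}$ to be a quadratic form $k^{\al\bt}_{\mu\mu}(y^\mu)c_\al c_\bt$, so the separating functions have the asserted form~(\ref{nsldjj}) and the identity~(\ref{egdfre}) holds. Symmetrising in $\al,\bt$ and using that $g^{\mu\nu}$ is diagonal yields the upper block $g^{\al\bt}=-k^{\al\bt}_{\mu\nu}g^{\mu\nu}$, which is~(\ref{anfbdj}); nondegeneracy of the full metric~(\ref{ahdjsk}) is then equivalent to nondegeneracy of the upper block~(\ref{egdfre}), and positivity of the radicands in~(\ref{nsldjj}) restricts the parameter domain $\MV$ and the admissible functions.

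Next I would assemble the complete integral $W=\sum_\al x^\al c_\al+\sum_\mu W_\mu$ with primitives $W_\mu=\int\!dy^\mu\sqrt{b_{\mu\mu}{}^{ij}d_{ij}+k^{\al\bt}_{\mu\mu}c_\al c_\bt}$; condition~(\ref{ijdhgf}) holds because the Jacobian $\pl^aW'_\al$ is block triangular with an identity $\Sn\times\Sn$ block and a $\Sm\times\Sm$ block proportional to the nondegenerate matrix $b_{\mu\mu}{}^{ii}$. Replacing $W'_\al\mapsto p_\al$, $W'_\mu\mapsto p_\mu$ in~(\ref{aocvfd}) gives $p_\al=c_\al$ and $p_\mu^2=b_{\mu\mu}{}^{ij}d_{ij}+k^{\al\bt}_{\mu\mu}c_\al c_\bt$; applying the inverse matrix $b_{ii}{}^{\mu\mu}$ and using $c_\al=p_\al$ produces the conservation laws~(\ref{abcvdk}), whose involution I would not re-prove but simply quote from Theorem~\ref{thdkwe}. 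Their indecomposability uses the hypothesis of exactly $\Sn$ Killing vectors: the term $k^{\al\bt}_{\mu\mu}c_\al c_\bt$ cannot vanish for all $\mu$ without making the upper block~(\ref{anfbdj}) degenerate, and its presence in $p_\mu^2=b_{\mu\mu}{}^{ij}d_{ij}+k^{\al\bt}_{\mu\mu}c_\al c_\bt$ obstructs the extraction of a square root, so no further linear integral, hence no further commuting Killing vector, can be split off; this is also why, unlike in Theorem~\ref{tgdjke}, no ``two non-proportional elements per row'' condition on $b_{\mu\mu}{}^{ii}$ is needed.

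The only real obstacle I expect is justifying the precise normal form of $W^{\prime2}_\mu$ — that it is affine in $d$, homogeneous quadratic in $c$, and free of mixed terms such as $\sqrt{d_{ii}}\,c_\al$. I would treat this exactly as in the derivation preceding Lemma~\ref{lnnsgf}: since $g^{\al\mu}\equiv0$ the function $W'_\mu$, which is irrational in $d$, enters the Hamilton--Jacobi equation only through $W^{\prime2}_\mu$, so only the rational, $d$-linear-plus-$c$-quadratic content of $W^{\prime2}_\mu$ is constrained, and the contributions of the different homogeneity types in the parameters must cancel separately because the metric components do not depend on the parameters and the coordinates $y^\mu$ vary independently. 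Once this normal form is secured, the remainder is the elementary linear algebra sketched above.
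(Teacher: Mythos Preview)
Your proposal is correct and follows essentially the same route as the paper: invoke Lemma~\ref{lnnsgf} for the block form~(\ref{ahdjsk}), substitute the ansatz into the Hamilton--Jacobi equation, match the $d$-linear part to recover $g^{\mu\mu}=b_{nn}{}^{\mu\mu}$ exactly as in Theorem~\ref{tgdjke}, read off the upper block from the $d$-free identity~(\ref{egdfre}), and obtain~(\ref{abcvdk}) by contracting with $b_{ii}{}^{\mu\nu}$. If anything, you are more thorough than the paper in a few places---the explicit block-triangular check of~(\ref{ijdhgf}), the indecomposability argument tying the nonvanishing of $k^{\al\bt}_{\mu\mu}$ to nondegeneracy of the upper block, and the homogeneity discussion justifying that $k_{\mu\mu}$ is a pure quadratic form in $c$---whereas the paper simply asserts that $k$ ``must be quadratic in $c$ as the consequence of the Hamilton--Jacobi equation'' and moves on.
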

Quadratic conservation laws (\ref{abcvdk}) appear after contraction of
Eqs.~(\ref{nsldjj}) with inverse matrix $b_{ii}{}^{\mu\mu}$.

Thus the canonical separable inverse metric of type $[\Sn,n-\Sn,0]_2$ has block
diagonal form
\begin{equation}                                                  \label{anncbv}
  g^{**}=\begin{pmatrix} -k^{\al\bt}_{\mu\nu}g^{\mu\nu} & 0 \\
  0 & g^{\mu\nu}\end{pmatrix},
\end{equation}
where the lower block $g^{\mu\nu}$ is diagonal (\ref{adfghf}), and stars denote
all indices $*=(\al,\mu)$.

We can also simplify matrix $b_{\mu\mu}{}^{ii}$ using canonical transformation
(\ref{abdvfi}) from previous section. Therefore one of nonzero elements in each
row of matrix $b$ can se set to unity without loss of generality.

\begin{exa}\rm                                                       \label{esjjhg}
Consider the simplest example, when separable metric admits only one
indecomposable quadratic conservation law, i.e.\ $\Sn=n-1$ and $d_{nn}=2E$. In
this case, there is only one coordinate $y$, which may not be enumerated. Matrix
$b_{\mu\mu}{}^{ii}$ consists of only one element $b(y)\ne0$, and its inverse is
$1/b$. Due to theorem \ref{theojk}, the most general canonical separable metric
(\ref{anncbv}) is block diagonal:
\begin{equation}                                                  \label{ancbda}
  g^{**}=
  \begin{pmatrix} g^{\al\bt}(y) & 0 \\ 0 & \frac1{b(y)} \end{pmatrix}
  =\begin{pmatrix} -k^{\al\bt}(y)\frac1{b(y)} & 0 \\ 0 & \frac1{b(y)}
  \end{pmatrix},\qquad\al,\bt=1,\dotsc,n-1,
\end{equation}
where the $(n-1)\times(n-1)$ block $g^{\al\bt}(y)$ can be arbitrary
nondegenerate matrix if we include arbitrary function $b(y)\ne0$ in the
definition of $k^{\al\bt}(y)$. Variables are separated as
\begin{equation*}
  W'_\al=c_\al,\qquad W^{\prime2}_n=2Eb+k^{\al\bt}c_\al c_\bt>0,
\end{equation*}
where the inequality restricts functions $b$ and $k$ for fixed $E$ and $c$.
The respective conservation laws are
\begin{equation}                                                  \label{abdvid}
  p_\al=c_\al,\qquad \frac1{b}\big(p_n^2-k^{\al\bt}c_\al c_\bt\big)=2E.
\end{equation}

To simplify metric (\ref{ancbda})  we perform the canonical transformation
$(y,p_n)\mapsto(Y,P_n)$ with the generating function
\begin{equation*}
  S_2:=\int\!\!dy\sqrt{|b(y)|}P_n,
\end{equation*}
leaving remaining variables $x^\al,p_\al$ untouched. Then
\begin{equation*}
  p_n=\frac{\pl S_2}{\pl y}=\sqrt{|b|}P_n,\qquad
  Y=\frac{\pl S_2}{\pl P_n}=\int\!\!dy\sqrt{|b|},
\end{equation*}
and the quadratic conservation law takes the form
\begin{equation*}
  \pm P^2_n+g^{\al\bt}c_\al c_\bt=2E,
\end{equation*}
After this transformation of variables, the canonical separable metric becomes
\begin{equation}                                                  \label{aghsjy}
  g^{**}=\begin{pmatrix} g^{\al\bt}(y) & 0 \\ 0 & \pm1\end{pmatrix},
\end{equation}
where the sign choice depends on the signature of the metric.
\qed\end{exa}
%******************************************************************************
\subsection{Coisotropic coordinates}                              \label{sgsrsh}
%*******************************************************************************
We showed in section \ref{sbskdy} that linear conservation laws for the
``symmetric'' choice of independent parameters correspond to Killing vector
fields. However, if energy $E$ in the right hand side of the Hamilton--Jacobi
equation is considered as independent parameter, linear conservation laws not
related to Killing vectors may appear. This possibility arises only for
indefinite metrics having coisotropic coordinates. For Riemannian positive
definite metrics linear conservation laws are always related to Killing vector
fields.

Existence of these linear conservation laws can take place only when sufficient
number of commuting Killing vectors are simultaneously present. Suppose that
$\Sm=0$, i.e.\ we have only Killing vector fields and coisotropic coordinates.
then we have two groups of coordinates $(x^\al,z^\vf)\in\MM$, where indices
take the following values
\begin{equation}                                                  \label{ibdfgf}
  \al,\bt,\dotsc=1,\dotsc,\Sn;\qquad\vf,\phi,\dotsc=\Sn+1,\dotsc,n;
  \qquad \frac n2\le\Sn<n.
\end{equation}
By assumption, there are independent parameters
\begin{equation}                                                  \label{nksjko}
  \lbrace c_1,\dotsc,c_\Sn,a_{\Sn+1},\dotsc,a_{n-1},a_n:=2E\rbrace.
\end{equation}

Separable metric of type $[\Sn,0,n-\Sn]_1$ has block form (\ref{aodewq})
\begin{equation}                                                  \label{andbfw}
  g^{**}=\begin{pmatrix} g^{\al\bt}(z) & g^{\al\phi}(z) \\
  g^{\vf\bt}(z) & 0 \end{pmatrix}.
\end{equation}
This metric must be nondegenerate, therefore the number of Killing vectors
cannot be less then the number of coisotropic coordinates
\begin{equation}                                                  \label{andhgf}
  \Sn\ge n-\Sn\qquad\Leftrightarrow\qquad\Sn\ge\frac n2,
\end{equation}
which was assumed at the very beginning (\ref{ibdfgf}). In addition, the
rank of rectangular matrix $g^{\al\phi}$ is equal to $n-\Sn$, otherwise
separable metric is degenerate.

After separation of cyclic coordinates condition (\ref{ijdhgf}) becomes
\begin{equation}                                                  \label{anfgth}
  \det\big(\pl^rW'_\vf\big)\ne0,
\end{equation}
and Hamilton--Jacobi equation (\ref{abdnfh}) for metric (\ref{andbfw}) takes the
form
\begin{equation}                                                  \label{aocvdj}
  g^{\al\bt}c_\al c_\bt+2g^{\al\vf}c_\al W'_\vf=2E.
\end{equation}
Separating functions $W'_\vf$ (\ref{nksjkh}) are linear in parameters $a$:
\begin{equation}                                                  \label{eqghst}
  W'_\vf=b_\vf{}^r(z^\vf,c)a_r+l_\vf(z^\vf,c),
\end{equation}
where $b_\vf{}^r$ is some nondegenerate matrix, whose elements in each row
depend only on single coordinate $z^\vf$ and, possibly, on the first group
of parameters $(c_\al)$, and $l_\vf$ are some functions also on single
coordinate and first group of parameters. Equating terms with $E$ in
Eq.~(\ref{aocvdj}), we obtain
\begin{equation}                                                  \label{anbfgf}
  2g^{\al\vf}c_\al=b_n{}^\vf,
\end{equation}
where $b_n{}^\vf$ is the last row of matrix $b_r{}^\vf$, which is inverse to
$b_\vf{}^r$: $b_r{}^\vf b_\phi{}^r=\dl_\phi^\vf$. This equality implies that
the last row of matrix $b_n{}^\vf(z,c)$, whose elements depend in general on all
coordinates $z$, are linear in $c_\al$. Let
\begin{equation}                                                  \label{ekjiuf}
  b_n{}^\vf:=b_n{}^{\al\vf}c_\al,
\end{equation}
where $b_n{}^{\al\vf}(z)$ is a set of arbitrary functions on $z$, such that
matrix $(b_r{}^{\al\vf}c_\al)$ is not degenerate. Equality (\ref{anbfgf})
must hold for all values of $c_\al$, therefore
\begin{equation}                                                  \label{anbdgf}
  2g^{\al\vf}=b_n{}^{\al\vf}.
\end{equation}
Now we substitute the expression for separating functions $W'_\vf$
(\ref{eqghst}) and use Eq.~(\ref{anbfgf}) in the Hamilton--Jacobi equation
(\ref{aocvdj}):
\begin{equation}                                                  \label{amdggf}
  g^{\al\bt}c_\al c_\bt+b_n{}^{\al\vf} c_\al l_\vf=0.
\end{equation}
It implies that functions $l_\vf$ are linear and homogeneous in $c_\al$:
\begin{equation}                                                  \label{ejdhht}
  l_\vf=l^\al_\vf(z^\vf)c_\al,
\end{equation}
where $l^\al_\vf(z^\vf)$ are some functions depending on single coordinate.
Indices here can be written one over the other because they will be never raised
or lowered. Then Eq.~(\ref{amdggf}) defines the square block for Killing vectors
\begin{equation}                                                  \label{abdfgf}
  2g^{\al\bt}=-b_n{}^{\al\vf}l^\bt_\vf-b_n{}^{\bt\vf}l^\al_\vf.
\end{equation}

Thus Hamilton--Jacobi equation (\ref{aocvdj}) is solved, and the separable
metric is
\begin{equation}                                                  \label{invbfg}
  g^{**}=\frac12\begin{pmatrix}
  -b_n{}^{\al\chi}l_\chi^\bt-b_n{}^{\bt\chi}l_\chi^\al & b_n{}^{\al\phi} \\[4pt]
  b_n{}^{\bt\vf} & 0 \end{pmatrix},
\end{equation}
where the star denotes all indices, $*=(\al,\vf)$.

To simplify the separable metric, we perform the canonical transformation with
generating function
\begin{equation}                                                  \label{amgnfh}
  S_2:=x^\al P_\al+z^\vf P_\vf+\sum_{\vf=\Sn+1}^n\int\!\!dz^\vf
  l^\al_\vf(z^\vf)P_\al.
\end{equation}
Then
\begin{equation*}
\begin{aligned}
  p_\al=&P_\al, & \quad p_\vf=&P_\vf+l^\al_\vf P_\al,
\\
  X^\al=&x^\al+\sum_{\vf=\Sn+1}^n\int\!\!dz^\vf\,l^\al_\vf, &
  Z^\phi=&z^\vf,\qquad\ns(\vf).
\end{aligned}
\end{equation*}
The last two equalities define the coordinate transformation with the Jacobi
matrix
\begin{equation*}
  \frac{\pl(X^\bt,Z^\phi)}{\pl(x^\al,z^\vf)}=
  \begin{pmatrix} \dl_\al^\bt & 0 \\ l^\bt_\vf & \dl_\vf{}^\phi \end{pmatrix}.
\end{equation*}
The metric transforms as follows $g^{**}\mapsto\tilde g^{**}$ with
\begin{equation*}
  \tilde g^{\al\bt}=0,\qquad\tilde g^{\al\phi}=g^{\al\phi},
  \qquad\tilde g^{\vf\phi}=0,
\end{equation*}
where equality $g^{\vf\phi}\equiv0$ was used. After this transformation the
separable metric is simplified
\begin{equation}                                                  \label{invbff}
  g^{**}=\frac12\begin{pmatrix} 0 & b_n{}^{\al\phi} \\[4pt]
  b_n{}^{\bt\vf} & 0 \end{pmatrix}.
\end{equation}
This metric is always degenerate except for $n=2\Sn$. Then
\begin{equation*}
  \det g^{**}=\frac{(-1)^\Sn}{2^n}\det{}^2(b_n{}^{\al\phi})\ne0.
\end{equation*}

Now we are left with the problem to find such matrix $b_\vf{}^r(z,c)$ in
Eq.~(\ref{eqghst}), that equality (\ref{anbdgf}) be satisfied for all
coordinates $z$ and parameters $c$. In other words we have to extract explicitly
the dependence of matrix elements $b_\vf{}^r$ on parameters, because
Eq.~(\ref{ekjiuf}) contains elements of the inverse metric.
\begin{prop}
Matrix elements $(b_\vf{}^r)$ in Eq.~(\ref{eqghst}) satisfying equality
(\ref{anbdgf}) must have the form
\begin{equation}                                                  \label{isndgf}
  \hphantom{\qquad\qquad\ns(\al)}
  b_\vf{}^r(z^\vf,c)=\frac{\phi_\vf{}^r}{h_\vf^\al c_\al},
  \qquad\qquad\ns(\vf),
\end{equation}
where $\phi_\vf{}^r(z^\vf)$ is the nondegenerate matrix, whose elements of each
row depend on single coordinate and contain unity, and $h_\vf^\al(z^\vf)$ is a
set of arbitrary nonzero functions depending on single coordinates.
\end{prop}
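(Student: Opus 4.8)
The plan is to work backwards from the defining relations of the inverse metric and extract the $c$-dependence explicitly. Recall that the inverse matrix $b_r{}^\vf$ to $b_\vf{}^r$ has its last row $b_n{}^\vf = b_n{}^{\al\vf}c_\al$ with $b_n{}^{\al\vf}(z)$ depending on all of $z$ but \emph{not} on $c$, by Eq.~(\ref{anbdgf}) and the requirement that $g^{\al\vf}$ is a genuine metric component (a function of coordinates only). So the structural constraint is: there must exist a nondegenerate matrix-valued function $b_\vf{}^r(z^\vf,c)$, with each row depending on a single coordinate $z^\vf$, whose inverse has a last row that is \emph{linear and homogeneous} in the parameters $c_\al$. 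First I would write down what linearity of the last row of the inverse means in terms of the cofactor expansion: $b_n{}^\vf = (\text{cofactor})/\det b$, so linearity of this ratio in $c$ is a strong rational-function constraint on the entries $b_\vf{}^r(z^\vf,c)$.

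The key idea is to guess the separated-variable ansatz $b_\vf{}^r(z^\vf,c) = \phi_\vf{}^r(z^\vf)/(h_\vf^\al(z^\vf)c_\al)$, i.e.\ each row is a fixed $c$-independent row $\phi_\vf{}^r(z^\vf)$ divided by a single linear form $h_\vf^\al(z^\vf)c_\al$ in the parameters, and then \emph{verify} that this is forced. The verification direction is immediate: if $b_\vf{}^r = \phi_\vf{}^r/(h_\vf^\al c_\al)$, then writing $D := \diag\big(1/(h_{\Sn+1}^\al c_\al),\dotsc,1/(h_n^\al c_\al)\big)$ we have $b = D\phi$ (rows labelled by $\vf$), hence $b^{-1} = \phi^{-1}D^{-1}$, so the last row of $b^{-1}$ is $(h_n^\al c_\al)$ times the last row of $\phi^{-1}$, which is indeed linear homogeneous in $c_\al$; and $\det b = \det\phi \cdot \prod_\vf 1/(h_\vf^\al c_\al) \ne 0$ provided $\det\phi\ne0$ and each linear form is nonzero, so $b$ is nondegenerate. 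Each row of $b$ still depends only on $z^\vf$, as required.

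For the necessity direction — that \emph{every} admissible $b$ has this form — I would argue as follows. Consider the cofactor expansion of the last row of $b^{-1}$ along any column: each entry of $b_n{}^\vf$ is $(-1)^{\cdots}\det(\text{minor omitting row }n\text{ and one column})/\det b$. The minors in the numerator involve only rows $b_\vf{}^r$ with $\vf \ne n$, i.e.\ functions of $z^{\Sn+1},\dotsc,z^{n-1}$ but not $z^n$, whereas $\det b$ involves all rows. Requiring the ratio to be a polynomial (indeed linear) in $c$ and, crucially, exploiting that the $\vf=n$ row $b_n{}^r(z^n,c)$ is the only place $z^n$ enters, forces $b_n{}^r(z^n,c)$ to factor as $\phi_n{}^r(z^n)/(h_n^\al(z^n)c_\al)$: otherwise the $z^n$-dependence in $\det b$ cannot cancel against the $z^n$-independent numerator to leave a linear form in $c$. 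Running the same argument coordinate by coordinate (the $\vf$-th row is the only carrier of $z^\vf$) forces the analogous factorization for every row, giving $b_\vf{}^r = \phi_\vf{}^r/(h_\vf^\al c_\al)$. The normalization that each row of $\phi_\vf{}^r$ contains a unity is achieved by the same one-dimensional canonical transformation used after Eq.~(\ref{abdvfi}): rescale $z^\vf$ to absorb one nonzero entry of $\phi_\vf{}^r$ to $1$, which is legitimate because it is a coordinate change in a single variable and hence respects the equivalence relation.

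The main obstacle I anticipate is making the necessity argument rigorous rather than heuristic: the statement "a ratio of a $z^\vf$-independent numerator and a $z^\vf$-dependent determinant can be linear in $c$ only if the $\vf$-th row factors" needs care, because the numerator minors are not \emph{entirely} independent of the other rows, and one must track how the single linear form $h_\vf^\al c_\al$ emerges uniquely (up to scale) rather than some higher-degree factor. I would handle this by fixing generic numerical values of all coordinates except $z^\vf$ and all parameters except one, reducing to a one-variable rational identity, and then invoking that $b$ must be nondegenerate for \emph{all} $c$ in an open set to rule out spurious common factors; the positivity and nondegeneracy conditions inherited from Theorem~\ref{tbnfku} and the block structure of Lemma~\ref{lnnsgf} supply exactly the open-set genericity needed.
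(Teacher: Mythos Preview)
Your cofactor step contains an index error that undermines the argument as written. The $(n,\vf)$ entry of $b^{-1}$ is, up to sign, the minor of $b$ obtained by deleting row~$\vf$ and column~$n$, not row~$n$ and one column: $b_n{}^\vf = \pm M_{\vf,n}/\det b$. For $\vf\ne n$ this minor \emph{does} contain row~$n$ of $b$ and hence depends on $z^n$, so the claim that the numerators ``involve only rows $b_\vf{}^r$ with $\vf\ne n$\,'' is false. The correct observation is that $M_{\vf,n}$ omits row~$\vf$ and therefore does not depend on~$z^\vf$; a localization argument can be built on this, but it must be run entry by entry (varying~$\vf$) rather than uniformly against the single coordinate $z^n$, and your sketch does not do this.

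The paper's proof is shorter and avoids cofactors altogether. It first writes $b_\vf{}^r = \phi_\vf{}^r/h_\vf$ as a harmless normalization (each row of $b$ divided by one of its nonzero entries, so that $\phi$ carries unities on the diagonal), with $\phi_\vf{}^r(z^\vf,c)$ and $h_\vf(z^\vf,c)$ still allowed to depend on both arguments. Then, instead of analyzing $b^{-1}$, it contracts the relation $2g^{\al\vf}c_\al=b_n{}^\vf$ with $b_\vf{}^r$ to undo the inversion, obtaining
\[
  \sum_\vf g^{\al\vf}(z)\,c_\al\,\frac{\phi_\vf{}^r(z^\vf,c)}{h_\vf(z^\vf,c)}=\dl_n^r,
\]
which must hold identically in $z$ and $c$. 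Since $g^{\al\vf}$ does not depend on $c$ and the right-hand side is constant, one reads off that $\phi_\vf{}^r$ must be $c$-independent and each $h_\vf$ linear homogeneous in the $c_\al$. This contraction trick is the idea absent from your proposal; it replaces the ratio-of-determinants analysis by a single linear identity and makes the rescaling that produces the unit entries in $\phi$ part of the setup rather than a separate canonical-transformation step at the end.
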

\begin{proof}
Parameterise matrix $(b_\vf{}^r)$ as
\begin{equation*}
  \hphantom{\qquad\qquad\ns(\al)}
  b_\vf{}^r(z^\vf,c):=\frac{\phi_\vf{}^r(z^\vf,c)}{h_\vf(x^\vf,c)},
  \qquad\qquad\ns(\vf),
\end{equation*}
where $h_\vf$ are some nonzero functions, and
\begin{equation}                                                  \label{irkrjg}
  (\phi_\vf{}^r):=\begin{pmatrix} 1 & \phi_{\Sn+1}{}^{\Sn+2} &
  \phi_{\Sn+1}{}^{\Sn+3} & \cdots & \phi_{\Sn+1}{}^n \\[8pt]
  \phi_{\Sn+2}{}^{\Sn+1} & 1 & \phi_{\Sn+2}{}^{\Sn+3} & \cdots &
  \phi_{\Sn+2}{}^n \\[8pt]
  \phi_{\Sn+3}{}^{\Sn+1} & \phi_{\Sn+3}{}^{\Sn+2} & 1 & \cdots &
  \phi_{\Sn+3}{}^n \\
  \vdots & \vdots & \vdots & \ddots & \vdots \\
  \phi_n{}^{\Sn+1} & \phi_n{}^{\Sn+2} & \phi_n{}^{\Sn+3} & \cdots & 1
\end{pmatrix}.
\end{equation}
This matrix has unities on the diagonal, and each row of matrix $(b_\vf{}^r)$
is the quotient of the row $(\phi_\vf{}^r)$ by $h_\vf$.

Multiply Eq.~(\ref{anbdgf}) by matrix $b_\vf{}^r$ and sum over $\vf$:
\begin{equation*}
  g^{\al\vf}c_\al b_\vf{}^r=\sum_{\vf=\Sn+1}^n g^{\al\vf}c_\al
  \frac{\phi_\vf{}^r}{h_\vf}=\dl_n^r.
\end{equation*}
This equality must be fulfilled for all coordinates and parameters. Therefore
\begin{equation*}
  \phi_\vf{}^r(z^\vf,c)=\phi_\vf{}^r(z^\vf),\qquad
  h_\vf(x^\vf,c)=h_\vf^\al(z^\vf)c_\al,
\end{equation*}
where $h_\vf^\al(z^\vf)$ are some functions on single coordinate. It means that
matrix (\ref{irkrjg}) must not depend on $c$, and functions $h_\vf$ be linear
in $c$.
\end{proof}
It implies that matrix $(b_\vf{}^r)$ and consequently canonical separable metric
is parameterized by $2\Sn^2-\Sn$ arbitrary functions, and
\begin{equation}                                                  \label{anbdfr}
  \hphantom{\qquad\qquad\ns(\al)}
  2g^{\al\vf}=b_n{}^{\al\vf}=\phi_n{}^\vf h_\vf^\al,
  \qquad\qquad\ns(\vf),
\end{equation}
where $(\phi_r{}^\vf)$ is the matrix inverse to $(\phi_\vf{}^r)$. Thus we have
found canonical separable metric in the case of only Killing vector fields and
coisotropic coordinates.

\begin{theorem}                                                   \label{tdhfgd}
If separable metric admit exactly $\Sn$ and not more commuting Killing vector
fields, and all other coordinates are coisotropic, then there exists such set
of parameters (\ref{nksjko}) and separating functions (\ref{eqghst}), that the
Hamilton--Jacobi equation admit complete separation of variables if and only if
the dimension of the manifold is even, $n=2\Sn$, and canonical separable metric
has block form (\ref{invbff}). The off diagonal blocks are given by
Eqs.~(\ref{anbdfr}), where $(\phi_n{}^\vf)$ is the last row of the matrix
inverse to arbitrary nondegenerate matrix (\ref{irkrjg}), whose rows depend on
single coordinates, and the diagonal consists of unities. All conservation laws
are linear:
\begin{equation}                                                  \label{abndjh}
\begin{split}
  p_\al=&c_\al,
\\
  \sum_\vf\phi_r{}^\vf h_\vf^\al c_\al p_\vf=&a_r.
\end{split}
\end{equation}
\end{theorem}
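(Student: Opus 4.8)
The statement collects the conclusions of the derivation preceding it, so my plan is to organise those conclusions into a necessity part and a sufficiency part, singling out the dimension constraint $n=2\Sn$ as the one point that needs genuine care.

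First I would record the starting data. Since $\Sm=0$, the coordinates split as $(x^\al,z^\vf)$ with the index ranges (\ref{ibdfgf}), and Lemma~\ref{lnnsgf} already puts the inverse metric in the block form (\ref{andbfw}), whose coisotropic–coisotropic block vanishes. Because the last $n-\Sn$ rows of this matrix are $(g^{\vf\bt},0)$, they span a subspace of dimension at most $\Sn$; hence nondegeneracy of $g^{**}$ forces $n-\Sn\le\Sn$ and also that the rectangular block $(g^{\al\phi})$ has rank $n-\Sn$. Next I would substitute the ansatz (\ref{eqghst}) for the separating functions into the reduced Hamilton–Jacobi equation (\ref{aocvdj}) and separate the parameter structure. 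Matching the part linear in the parameters $a$ gives $2g^{\al\vf}c_\al b_\vf{}^r=\dl_n^r$, equivalently (\ref{anbfgf}); since its left-hand side is linear in $c_\al$, this forces $b_n{}^\vf=b_n{}^{\al\vf}c_\al$, hence $2g^{\al\vf}=b_n{}^{\al\vf}$. The part independent of $a$ then reads $g^{\al\bt}c_\al c_\bt+b_n{}^{\al\vf}c_\al l_\vf=0$, which, $b_n{}^{\al\vf}c_\al$ being linear in $c$, forces $l_\vf=l^\al_\vf(z^\vf)c_\al$ and yields the Killing block (\ref{abdfgf}). At this stage the metric is pinned to the shape (\ref{invbfg}).

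The crucial step is the canonical transformation generated by (\ref{amgnfh}). I would check that it is admissible — its Jacobi matrix is lower triangular with unit diagonal, and the new coordinates $X^\al,Z^\phi$ depend on the old coordinates only, so the transformation stays inside the equivalence class of separable coordinate systems — and that it annihilates the Killing block, leaving the block-antidiagonal form (\ref{invbff}). A symmetric block-antidiagonal matrix with vanishing diagonal blocks and an $\Sn\times(n-\Sn)$ off-diagonal block is invertible only if that block is square and nonsingular; so $n=2\Sn$ and $\det(b_n{}^{\al\phi})\ne0$, giving $\det g^{**}=(-1)^\Sn2^{-n}\det{}^2(b_n{}^{\al\phi})\ne0$. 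I expect this to be the main obstacle: one must make sure the canonical transformation does not covertly reintroduce momentum dependence into the coordinates, and one must notice that it is the vanishing of \emph{both} diagonal blocks of (\ref{invbff}) — not merely the rank bound $\Sn\ge n/2$ — that pins the dimension. Finally I would invoke the Proposition above to extract the dependence of the separating matrix on $c$ in the form (\ref{isndgf}), whence $2g^{\al\vf}=b_n{}^{\al\vf}=\phi_n{}^\vf h_\vf^\al$, which is (\ref{anbdfr}).

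For sufficiency I would run the computation backwards. Starting from a metric of the form (\ref{invbff}) with off-diagonal blocks (\ref{anbdfr}), take $W'_\al=c_\al$ and $W'_\vf=b_\vf{}^r a_r$ with $b_\vf{}^r$ as in (\ref{isndgf}), substitute into (\ref{aocvdj}) and use $b_n{}^\vf b_\vf{}^r=\dl_n^r$ to get $2g^{\al\vf}c_\al W'_\vf=b_n{}^\vf b_\vf{}^r a_r=a_n=2E$, so the Hamilton–Jacobi equation holds identically in $(c,a)$; requirement (\ref{anfgth}) follows from $\det(\phi_\vf{}^r)\ne0$ together with $h_\vf^\al c_\al\ne0$, and the metric components depend only on the $z^\vf$, so $\pl_\al$ are $\Sn$ commuting Killing fields. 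The conservation laws are then obtained by replacing $W'_\vf\mapsto p_\vf$ in the separating functions and solving for the parameters: the cyclic part gives $p_\al=c_\al$, and inverting $p_\vf=b_\vf{}^r a_r$ with $(\phi_r{}^\vf)$ and $h_\vf=h_\vf^\al c_\al$ gives $\sum_\vf\phi_r{}^\vf h_\vf^\al c_\al p_\vf=a_r$, i.e.\ (\ref{abndjh}); their involution and functional independence follow from Theorem~\ref{thdkwe}.
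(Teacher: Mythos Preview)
Your proposal is correct and follows essentially the same route as the paper: the block form from Lemma~\ref{lnnsgf}, the parameter-matching in (\ref{aocvdj}) to obtain (\ref{anbfgf}) and (\ref{abdfgf}), the canonical transformation (\ref{amgnfh}) reducing the metric to (\ref{invbff}), the nondegeneracy argument forcing $n=2\Sn$, and the Proposition giving the structure (\ref{isndgf}) and hence (\ref{anbdfr}). Your explicit sufficiency check and the verification that (\ref{amgnfh}) stays within the equivalence class are welcome additions that the paper leaves implicit.
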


The second conservation law (\ref{abndjh}) is linear in momenta only after
separation of cyclic coordinates. If this is not done, then it is quadratic
\begin{equation*}
  \sum_\vf\phi_r{}^\vf h_\vf^\al p_\al p_\vf=a_r.
\end{equation*}
The respective Killing tensor is indecomposable in general.

The Hamilton--Jacobi equation for canonical separable metric (\ref{invbff}) is
\begin{equation}                                                  \label{ifkdiu}
  \sum_\vf\phi_n{}^\vf h_\vf^\al W'_\al W'_\vf=2E,
\end{equation}
and variables are completely separated
\begin{equation*}
  \hphantom{\qquad\qquad\ns(\al)}
  W'_\al=c_\al,\qquad W'_\vf=\frac{\phi_\vf{}^r}{h_\vf^\al c_\al}a_r,
  \qquad\qquad\ns(\vf).
\end{equation*}
The unusual feature of this separation is that parameters $c$ appear in the
denominator.

Finally, using canonical transformations we simplify the form of the canonical
separable metric by setting one of the nonzero elements in each row of matrix
$h_\vf^\al$ to unity.
%******************************************************************************
\subsection{General separation of variables}                      \label{sbdnsg}
%*******************************************************************************
If separable metric admits simultaneously commuting Killing vector fields,
indecomposable quadratic conservation laws, and coisotropic coordinates, then
coordinates are divided into three groups $(x,y,z)\in\MM$ described in section
\ref{sajwuy}. There are two possibilities for full sets of independent
parameters: (\ref{ndhtgr}) and (\ref{nsldkj}), and separable metric must have
block form (\ref{aodewq}). Separable functions are given by Eqs.~(\ref{anbcfr})
and (\ref{nksjkh}).

First, we consider case 2, when energy enters parameters for quadratic
conservation laws, $d_{\Sn+\Sm\,\Sn+\Sm}:=2E$. Without loss of generality, we
set
\begin{equation*}
  b_{\mu\nu}{}^{ij}\big|_{i\ne j}\equiv0,\qquad
  b_{\mu\nu}{}^{ij}\big|_{\mu\ne\nu}\equiv0,\qquad
  b_{\mu\nu}{}^r\big|_{\mu\ne\nu}\equiv0,\qquad
  k_{\mu\nu}\big|_{\mu\ne\nu}\equiv0,
\end{equation*}
because metric $g^{\mu\nu}$ and matrix of parameters $d_{ij}$ are diagonal.
The respective Hamilton--Jacobi equation is
\begin{equation}                                                  \label{anfkes}
  g^{\al\bt}c_\al c_\bt+g^{\mu\nu}\big(b_{\mu\nu}{}^{ij}d_{ij}+b_{\mu\nu}{}^ra_r
  +k_{\mu\nu}\big)
  +2g^{\al\phi}c_\al\big(b_\phi{}^{ij}d_{ij}+b_\phi{}^ra_r+l_\phi\big)=2E.
\end{equation}
Differentiate this equation consequently with respect to parameters $d$ and $a$:
\begin{equation}                                                  \label{ancbdf}
\begin{aligned}
  g^{\mu\nu}\frac{\pl(W'_\mu W'_\nu)}{\pl d_{ii}}
  +2g^{\al\vf}c_\al\frac{\pl W'_\vf}{\pl d_{ii}}=&0,\qquad i\ne\Sn+\Sm,
\\[4pt]
  g^{\mu\nu}\frac{\pl (W'_\mu W'_\nu)}{\pl d_{ii}}
  +2g^{\al\vf}c_\al\frac{\pl W'_\vf}{\pl d_{ii}}=&1,\qquad i=\Sn+\Sm,
\\[4pt]
  g^{\mu\nu}\frac{\pl(W'_\mu \tilde W'_\nu)}{\pl a_r}
  +2g^{\al\vf}c_\al\frac{\pl W'_\vf}{\pl a_r}=&0,
\end{aligned}
\end{equation}

This system of equation is considered as the system of linear algebraic
equations for $g^{\mu\nu}$ and linear combinations $g^{\al\vf}c_\al$. Its
determinant differs from zero
\begin{equation*}
  \det \begin{pmatrix}
    \displaystyle\frac{\pl(W^{\prime2}_\mu)}{\pl d_{ii}} &
    \displaystyle \frac{\pl W'_\vf}{\pl d_{ii}}
    \\[8pt] \displaystyle\frac{\pl(W^{\prime2}_\mu)}{\pl a_r} &
    \displaystyle \frac{\pl W'_\vf}{\pl a_r}  \end{pmatrix}
  = 2^{\Sm}\tilde W'_{\Sn+1}\dotsc \tilde W'_{\Sn+\Sm}\det \begin{pmatrix}
    \displaystyle\frac{\pl(W^{\prime}_\mu)}{\pl d_{ii}} &
    \displaystyle \frac{\pl W'_\vf}{\pl d_{ii}}
    \\[8pt] \displaystyle\frac{\pl(W^{\prime}_\mu)}{\pl a_r} &
    \displaystyle \frac{\pl W'_\vf}{\pl a_r}  \end{pmatrix} \ne0
\end{equation*}
due to condition (\ref{ijdhgf}).

Consider $(n-\Sn)\!\times\!(n-\Sn)$ matrix
\begin{equation}                                                  \label{andkfj}
  B:=\begin{pmatrix} b_{\mu\mu}{}^{ii}(y^\mu,c) & b_{\mu\mu}{}^r(y^\mu,c) \\
  b_\vf{}^{ii}(z^\vf,c) & b_\vf{}^r(z^\vf,c) \end{pmatrix},
\end{equation}
whose elements of each row depend on single coordinate and, possibly, the first
group of parameters $c$. It must be nondegenerate, as we shall see. The inverse
metric is
\begin{equation*}
  B^{-1}=\begin{pmatrix} b_{ii}{}^{\mu\mu} & b_{ii}{}^\vf \\
  b_r{}^{\nu\nu} & b_r{}^\vf \end{pmatrix},
\end{equation*}
where
\begin{alignat*}{2}
  b_{\mu\mu}{}^{ij}b_{ij}{}^{\nu\nu}+b_{\mu\mu}{}^rb_r{}^{\nu\nu}=&\dl_\mu^\nu,&
  \qquad b_{ii}{}^{\mu\nu}b_{\mu\nu}{}^{jj}+b_{ii}{}^\vf b_\vf{}^{jj}=&\dl_i^j,
\\
  b_{\mu\mu}{}^{ij}b_{ij}{}^\phi+b_{\mu\mu}{}^rb_r{}^\phi=&0, &
  \qquad\qquad b_{ii}{}^{\mu\nu}b_{\mu\nu}{}^r+b_{ii}{}^\vf b_\vf{}^r=&0,
\\
  b_\vf{}^{ij}b_{ij}{}^{\nu\nu}+b_\vf{}^rb_r{}^{\nu\nu}=&0, &
  b_r{}^{\mu\nu}b_{\mu\nu}{}^{jj}+b_r{}^\vf b_\vf{}^{jj}=&0,
\\
  b_\vf{}^{ij}b_{ij}{}^\phi+b_\vf{}^rb_r{}^\phi=&\dl_\vf^\phi, &
  b_r{}^{\mu\nu}b_{\mu\nu}{}^s+b_r{}^\vf b_\vf{}^s=&\dl_r^s.
\end{alignat*}
Note that elements of the inverse matrix depend in general on all coordinates
$y$, $z$, and parameters $c$. Matrix $B$ must be nondegenerate in order for
Eq.~(\ref{anfkes}) to have unique solution, and
\begin{equation}                                                  \label{abcvdt}
  g^{\mu\mu}=b_{\Sn+\Sm\,\Sn+\Sm}{}^{\mu\mu}(y,z),\qquad
  2g^{\al\phi}c_\al=b_{\Sn+\Sm\,\Sn+\Sm}{}^\phi(y,z,c),
\end{equation}
because constant $E$ enters the left hand side of Eq.~(\ref{anfkes}) only
through parameter $d_{\Sn+\Sm\,\Sn+\Sm}$. This implies that elements
$b_{\Sn+\Sm\,\Sn+\Sm}{}^{\mu\mu}$ do not depend on $c$, and
$b_{\Sn+\Sm\,\Sn+\Sm}{}^\phi$ are linear in $c$:
\begin{equation*}
  b_{\Sn+\Sm\,\Sn+\Sm}{}^\phi=b_{\Sn+\Sm\,\Sn+\Sm}{}^{\al\phi}c_\al.
\end{equation*}
Substitution of obtained expressions $g^{\mu\nu}$ and $g^{\al\phi}c_\al$ into
the Hamilton--Jacobi equation (\ref{anfkes}) yields
\begin{equation}                                                  \label{erfioi}
  g^{\al\bt}c_\al c_\bt+b_{\Sn+\Sm\,\Sn+\Sm}{}^{\mu\nu}k_{\mu\nu}
  +b_{\Sn+\Sm\,\Sn+\Sm}{}^{\al\phi}c_\al l_{\phi}=0,
\end{equation}
which must be fulfilled for all $c$. Therefore functions $k_{\mu\mu}$ and
$l_\phi$ must be quadratic and linear in $c$, respectively. It is sufficient to
choose them homogeneous
\begin{equation}                                                  \label{abcvdg}
  k_{\mu\mu}=k^{\al\bt}_{\mu\nu}c_\al c_\bt,\qquad l_\vf=l^\al_\vf c_\al,
\end{equation}
where functions $k^{\al\bt}_{\mu\nu}(y^\mu)$ and $l^\al_\vf(z^\vf)$ do not
depend on parameters $c$. Functions $k^{\al\bt}_{\mu\nu}=k^{\bt\al}_{\mu\nu}$
can differ from zero for $\al\ne\bt$. Indices of $k$ and $l$ are written one
over the other because they will never be raised or lowers.

To simplify separable metric, we perform canonical transformation
$(x^\al,z^\vf,p_\al,p_\vf)\mapsto(X^\al,Z^\vf,P_\al,P_\vf)$ with generating
function
\begin{equation}                                                  \label{ebddfr}
  \hphantom{\qquad\qquad\ns(\al)}
  S_2:=x^\al P_\al+z^\vf P_\vf+\int\!\!dz^\vf l^\al_\vf P_\al,
  \quad\quad\ns(\vf).
\end{equation}
Then
\begin{equation}                                                  \label{abdfgr}
\begin{aligned}
  p_\al=&\frac{\pl S_2}{\pl x^\al}=P_\al, & \qquad
  p_\vf=&\frac{\pl S_2}{\pl z^\vf}=P_\vf+l^\al_\vf P_\al,
\\
  X^\al=&\frac{\pl S_2}{\pl p_\al}=x^\al+\int\!\!dz^\vf l^\al_\vf, &
  Z^\vf=&\frac{\pl S_2}{\pl p_\vf}=z^\vf,\quad\quad\ns(\vf).
\end{aligned}
\end{equation}
Afterwards relation (\ref{nksjkh}) in the Hamiltonian formulation becomes
\begin{equation*}
  P_\vf=b_\vf{}^{ij}d_{ij}+b_\vf{}^r a_r,
\end{equation*}
where equalities $p_\al=P_\al=c_\al$ are used. Therefore we can set
$l^\al_\vf\equiv0$ without loss of generality. Now nontrivial blocks of inverse
metric (\ref{abcvdt}) take the form
\begin{equation}                                                  \label{anvbfg}
\begin{split}
  g^{\al\bt}=&-b_{\Sn+\Sm\,\Sn+\Sm}{}^{\mu\nu}k^{\al\bt}_{\mu\nu},
\\
  g^{\mu\mu}=&~~b_{\Sn+\Sm\,\Sn+\Sm}{}^{\mu\mu},
\\
  2g^{\al\phi}=&~~b_{\Sn+\Sm\,\Sn+\Sm}{}^{\al\vf},
\end{split}
\end{equation}
where functions $k^{\al\bt}_{\mu\nu}(y^\mu)$ are arbitrary, and functions
$b_{\Sn+\Sm\,\Sn+\Sm}{}^{\mu\mu}$ and $b_{\Sn+\Sm\,\Sn+\Sm}{}^{\al\vf}$
are defined by matrix (\ref{andkfj}). Note that block $g^{\al\bt}$ may be
degenerate due to the presence of rectangular block $g^{\al\vf}$. In section
\ref{sbbcdh} this is not allowed because metric (\ref{anncbv}) becomes
degenerate.

Moreover, using the canonical transformation of type (\ref{abdvfi}), one of
nonzero elements in each row of matrix $b_{\mu\mu}{}^{ii}$ can be set to unity.

Thus we have solved the functional Hamilton--Jacobi equation.

\begin{theorem}                                                   \label{tiswgk}
Let separable metric be of type $[\Sn,\Sm,n-\Sn-\Sm]_2$. Then there is such set
of parameters (\ref{nsldkj}) and separating functions (\ref{anbcfr}),
(\ref{nksjkh}), that canonical separable metric has block form (\ref{aodewq})
with blocks (\ref{anvbfg}), where functions
$k^{\al\bt}_{\mu\nu}(y^\mu)=k^{\bt\al}_{\mu\nu}(y^\mu)$ are arbitrary, and
functions $b_{\Sn+\Sm\,\Sn+\Sm}{}^{\mu\mu}$ and
$b_{\Sn+\Sm\,\Sn+\Sm}{}^{\al\vf}$ are elements of the $(\Sn+\Sm)$-th row of
the matrix inverse to matrix $B$ (\ref{andkfj}). Matrix $B$ must be
nondegenerate with arbitrary elements, such that elements of each row depend on
single coordinates and parameters $(c_\al)$ in such a way that all elements of
the $(\Sn+\Sm)$-th row of matrix $B^{-1}$ are nonzero, elements
$b_{\Sn+\Sm\,\Sn+\Sm}{}^{\mu\mu}$ do not depend on $(c_\al)$, and elements
$b_{\Sn+\Sm\,\Sn+\Sm}{}^\phi$ are linear in parameters $(c_\al)$. Arbitrary
functions must also produce nondegenerate separable metric and provide real
solutions of Eqs.~(\ref{anbcfr}) for $W'_\mu$ and $W'_\vf$. In addition,
conservation laws in the Hamiltonian formulation have the form
\begin{equation}                                                  \label{asbdgd}
\begin{split}
  p_\al=&c_\al,
\\
  b_{ii}{}^{\mu\nu}\big(p_\mu p_\nu-k^{\al\bt}_{\mu\nu}c_\al c_\bt\big)
  +b_{ii}{}^\vf p_\vf=&d_{ii},
\\
  b_r{}^{\mu\nu}\big(p_\mu p_\nu-k^{\al\bt}_{\mu\nu}c_\al c_\bt\big)
  +b_r{}^\vf p_\vf=&a_r.
\end{split}
\end{equation}
\end{theorem}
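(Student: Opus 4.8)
The plan is to assemble the computation already carried out above into a clean proof. First I would invoke Lemma~\ref{lnnsgf}: with the parameters chosen as in~(\ref{nsldkj}) and separating functions of the form~(\ref{anbcfr}), (\ref{nksjkh}), the inverse metric is necessarily of block shape~(\ref{aodewq}) with diagonal middle block $g^{\mu\nu}$ and vanishing coisotropic block $g^{\vf\phi}$. Then, peeling off the cyclic coordinates via $W'_\al=c_\al$, the functional Hamilton--Jacobi equation~(\ref{abdnfh}) reduces to~(\ref{anfkes}); this is the starting point, and it already realizes items~1--3 of the strategy.

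The heart of the argument is to treat~(\ref{anfkes}), after differentiation with respect to the remaining parameters $d_{ii}$ and $a_r$, as a linear algebraic system~(\ref{ancbdf}) in the unknowns $g^{\mu\nu}$ and in the linear combinations $g^{\al\vf}c_\al$. I would check that the coefficient matrix of this system is, up to the nonzero factor $2^{\Sm}\tilde W'_{\Sn+1}\dotsc\tilde W'_{\Sn+\Sm}$, precisely the Jacobian figuring in the nondegeneracy condition~(\ref{ijdhgf}), so that the system has a unique solution. Solving it expresses the middle block $g^{\mu\mu}$ and the rectangular combinations $2g^{\al\phi}c_\al$ as the $(\Sn+\Sm)$-th row of the inverse of matrix $B$ in~(\ref{andkfj}), which is~(\ref{abcvdt}). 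The crucial observation is that $E$ enters~(\ref{anfkes}) only through $d_{\Sn+\Sm\,\Sn+\Sm}$; this forces the entries $b_{\Sn+\Sm\,\Sn+\Sm}{}^{\mu\mu}$ to be independent of $c$ and the entries $b_{\Sn+\Sm\,\Sn+\Sm}{}^\phi$ to be linear and homogeneous in $c$.

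Substituting these expressions back into~(\ref{anfkes}) leaves the residual identity~(\ref{erfioi}), which must hold for all $c$; since $g^{\al\bt}$ is $c$-independent while the two remaining terms carry the $k$- and $l$-dependence, this forces $k_{\mu\mu}$ to be a quadratic form and $l_\vf$ a linear form in $c$, and one may take them homogeneous as in~(\ref{abcvdg}). The linear piece $l^\al_\vf$ is then removed by the canonical transformation generated by~(\ref{ebddfr}), after which the nontrivial blocks take the announced form~(\ref{anvbfg}); a further canonical transformation of type~(\ref{abdvfi}) normalizes one nonzero element in each row of $b_{\mu\mu}{}^{ii}$ to unity. For the converse, one verifies that any metric of the stated form, subject to the nondegeneracy and reality provisos on the arbitrary functions, makes~(\ref{anfkes}) solvable within the ansatz, hence admits the complete integral $W=\sum W_\al(x^\al,c)$. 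Finally, the conservation laws~(\ref{asbdgd}) follow by the substitution $W'_\mu\mapsto p_\mu$, $W'_\vf\mapsto p_\vf$ in~(\ref{anbcfr}), (\ref{nksjkh}) and contraction with $B^{-1}$, and their involution is furnished by Theorem~\ref{thdkwe}.

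The main obstacle I anticipate is bookkeeping rather than conceptual: confirming that the nondegeneracy hypotheses are exactly the right ones — that requiring all entries of the $(\Sn+\Sm)$-th row of $B^{-1}$ to be nonzero, together with nondegeneracy of the full metric~(\ref{anvbfg}), is both necessary (otherwise some $g^{\mu\mu}$ or some column of $g^{\al\vf}$ vanishes and the metric degenerates) and sufficient, and that the reality requirements on~(\ref{anbcfr}) are genuinely independent conditions and not automatic. One must also note, as follows from the discussion after Theorem~\ref{thdkwe}, that the two canonical transformations employed stay within the equivalence class and carry conservation laws to conservation laws, so that the canonical form obtained is a legitimate representative of the class $[\Sn,\Sm,n-\Sn-\Sm]_2$.
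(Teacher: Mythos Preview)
Your proposal is correct and follows the paper's own argument essentially step for step: Lemma~\ref{lnnsgf} for the block form, differentiation of~(\ref{anfkes}) to obtain the linear system~(\ref{ancbdf}), solution via the inverse of $B$ yielding~(\ref{abcvdt}), the residual identity~(\ref{erfioi}) forcing the homogeneity~(\ref{abcvdg}), and the canonical transformations~(\ref{ebddfr}) and~(\ref{abdvfi}) to reach the canonical form~(\ref{anvbfg}). Your additional remarks on the converse direction and on involution via Theorem~\ref{thdkwe} are appropriate supplements that the paper leaves implicit.
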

In general, there are $\Sn$ linear and $n-\Sn$ quadratic conservation laws for
separable metrics of type $[\Sn,\Sm,n-\Sn-\Sm]_2$ for $\Sm\ge1$.

Let us consider separable metrics of type $[\Sn,\Sm,n-\Sn-\Sm]_1$, i.g.\ energy
$E$ enters the group of parameters for coisotropic coordinates (\ref{ndhtgr}).
Then solution of the Hamilton--Jacobi equation repeats all previous steps. The
only difference is the change of relations (\ref{abcvdt}). Blocks of separable
metric after the canonical transformation (\ref{ebddfr}) become
\begin{equation}                                                  \label{abcvdh}
  g^{\mu\mu}=b_n{}^{\mu\mu}(y,z),\qquad 2g^{\al\phi}=b_n{}^{\al\phi}(y,z).
\end{equation}
Therefore we only formulate the result.

\begin{theorem}                                                   \label{tnbdhg}
Let separable metric be of type $[\Sn,\Sm,n-\Sn-\Sm]_1$. Then there is such set
of parameters (\ref{ndhtgr}) and separating functions (\ref{anbcfr}),
(\ref{nksjkh}), that canonical separable metric has block form (\ref{aodewq})
with blocks
\begin{equation}                                                  \label{ahgdyu}
\begin{split}
  g^{\al\bt}=&-b_n{}^{\mu\nu}k^{\al\bt}_{\mu\nu},
\\
  g^{\mu\mu}=&~~b_n{}^{\mu\mu},
\\
  2g^{\al\vf}=&~~b_n{}^{\al\vf},
\end{split}
\end{equation}
where functions $k^{\al\bt}_{\mu\nu}(y^\mu)=k^{\bt\al}_{\mu\nu}(y^\mu)$ are
arbitrary, and functions $b_{nn}{}^{\mu\mu}$ and $b_{nn}{}^{\al\vf}$  are
elements of the $n$-th row of the matrix inverse to matrix $B$ (\ref{andkfj}).
Matrix $B$ must be nondegenerate with arbitrary elements such that elements of
each row depend on single coordinates and parameters $(c_\al)$ in such a way
that all elements of the last row of matrix $B^{-1}$ are nonzero, elements
$b_{nn}{}^{\mu\mu}$ do not depend on $(c_\al)$, and $b_{nn}{}^\phi$ are linear
in parameters $(c_\al)$. Arbitrary functions must also produce nondegenerate
separable metric and provide real solutions of Eqs.~(\ref{anbcfr}) for $W'_\mu$
and $W'_\vf$. In addition, conservation laws in the Hamiltonian formulation have
form (\ref{asbdgd}).
\end{theorem}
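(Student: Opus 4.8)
The plan is to follow the argument establishing Theorem~\ref{tiswgk} almost verbatim, changing only the bookkeeping of where the energy $E$ sits among the independent parameters. By Lemma~\ref{lnnsgf}, for the choice (\ref{ndhtgr}) of independent parameters the separable metric already has the block shape (\ref{aodewq}) with diagonal block $g^{\mu\nu}$ and vanishing $g^{\vf\phi}$, and after separating the cyclic coordinates the Hamilton--Jacobi equation takes the form (\ref{anfkes}) with separating functions (\ref{anbcfr}), (\ref{nksjkh}). First I would differentiate (\ref{anfkes}) with respect to the parameters $d_{ii}$ and $a_r$, exactly as in (\ref{ancbdf}). This yields a linear algebraic system for the unknowns $g^{\mu\nu}$ and the linear combinations $g^{\al\vf}c_\al$, whose coefficient matrix is the Jacobian of the separating functions in $(d,a)$; its determinant is nonzero by (\ref{ijdhgf}), the factorisation computed just below (\ref{ancbdf}) applying unchanged.

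The sole difference from the Case~2 computation is that $E$ now enters the right-hand side of (\ref{anfkes}) only through the parameter $a_n$ rather than through $d_{\Sn+\Sm\,\Sn+\Sm}$. Hence the row of $B^{-1}$ singled out by the inhomogeneous term is the $n$-th one, giving (\ref{abcvdh}): $g^{\mu\mu}=b_n{}^{\mu\mu}$ and $2g^{\al\phi}=b_n{}^{\al\phi}$. Demanding that $g^{\mu\mu}$ and $g^{\al\phi}c_\al$ be genuine metric components forces $b_{nn}{}^{\mu\mu}$ to be independent of $(c_\al)$ and $b_{nn}{}^\phi$ to be linear in $(c_\al)$, as stated. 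Substituting these expressions back into (\ref{anfkes}), all terms quadratic in $(d,a)$ cancel by construction, and the residual identity (the analogue of (\ref{erfioi})) must hold for all $c$; this forces $k_{\mu\mu}$ and $l_\vf$ to be quadratic and linear homogeneous in $c$ respectively, as in (\ref{abcvdg}), and then pins down the upper block as $g^{\al\bt}=-b_n{}^{\mu\nu}k^{\al\bt}_{\mu\nu}$, so that (\ref{ahgdyu}) holds.

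To remove the inessential functions $l^\al_\vf$ I would perform the canonical transformation (\ref{ebddfr}); as in (\ref{abdfgr}) it is a pure coordinate change on the $(x,z)$ variables leaving $g^{\vf\phi}\equiv0$ intact, after which one may take $l^\al_\vf\equiv0$, leaving the three blocks (\ref{ahgdyu}). A further transformation of type (\ref{abdvfi}) normalises one nonzero entry in each row of $b_{\mu\mu}{}^{ii}$ to unity. Finally, contracting (\ref{anbcfr}) and (\ref{nksjkh}) (after the replacement $W'\mapsto p$) with the blocks of $B^{-1}$ and using $p_\al=c_\al$ produces the conservation laws (\ref{asbdgd}); their involutivity, and hence Liouville integrability of the geodesic flow, follow from Theorem~\ref{thdkwe}.

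The main obstacle I anticipate is not the algebra, which is a transcription of the Case~2 argument, but keeping the nondegeneracy conditions mutually consistent: one must verify that requiring all entries of the $n$-th row of $B^{-1}$ to be nonzero (so that no $g^{\mu\mu}$ vanishes and the rectangular block $g^{\al\vf}$ has full rank $n-\Sn-\Sm$) is compatible with the constraints that $b_{nn}{}^{\mu\mu}$ be $c$-independent and $b_{nn}{}^\phi$ be linear in $c$, and that arbitrary functions obeying these still admit real solutions of (\ref{anbcfr}) for $W'_\mu$ and of (\ref{nksjkh}) for $W'_\vf$ on a nonempty parameter domain $\MV$. This amounts to checking that the constraints cut out an open, nonempty subset of the relevant function space, which is where the genuine content of the statement lies; the inequality (\ref{anbcpf}) reappears here as the condition preventing the block $g^{\al\vf}$ from forcing the whole metric to degenerate.
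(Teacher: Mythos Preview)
Your proposal is correct and follows essentially the same approach as the paper. Indeed, the paper's own treatment of this theorem is just the sentence ``solution of the Hamilton--Jacobi equation repeats all previous steps. The only difference is the change of relations (\ref{abcvdt})'' --- i.e., exactly the transcription you outline, with the $(\Sn+\Sm)$-th row of $B^{-1}$ replaced by the $n$-th because $E$ now enters through $a_n$ rather than $d_{\Sn+\Sm\,\Sn+\Sm}$. One small slip: the terms that cancel upon back-substitution into (\ref{anfkes}) are \emph{linear} in $(d,a)$, not quadratic (there are no quadratic-in-$(d,a)$ terms once $g^{\vf\phi}\equiv0$); but this does not affect the argument.
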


Thus we looked over all possible classes of separable metrics, introduced sets
of independent parameters and separating functions and explicitly separated all
variables. The used method is constructive and allows to write down all $n$
conservation laws (\ref{asbdgd}) for geodesics. We showed that there exists such
choice of independent parameters and separating functions in complete integrals
that conservation laws are at most quadratic in momenta: (i), there are linear
conservation laws corresponding to commuting Killing vector fields; (ii), there
are indecomposable quadratic conservation laws; (iii), there may exist linear
conservation laws for coisotropic coordinates which are not related to Killing
vectors. The last possibility arises only for indefinite metrics which may have
zeroes on the diagonal.

We see that constructed conservation laws are at most quadratic.
\begin{theorem}
Let the Hamilton--Jacobi equation for geodesics (\ref{ehdggt}) admit complete
separation of variables. Then there exists such choice of independent parameters
and separating functions in complete integrals that corresponding conservation
laws are at most quadratic in momenta.
\end{theorem}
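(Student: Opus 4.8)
The plan is to read this off from the case analysis of Sections~\ref{sbskdy} and \ref{sajwuy}. By hypothesis the metric $g$ on $(\MM,g)$ admits complete separation of variables in some coordinate system; what has to be produced is \emph{one} complete integral whose $n$ involutive first integrals (\ref{anbfgt}) are polynomials of degree at most two in the momenta. Since for every admissible value of the triple $(\Sn,\Sm,n-\Sn-\Sm)$ the preceding theorems already construct such a complete integral explicitly, the proof reduces to two points: (i) checking that these triples exhaust all the possibilities, and (ii) inspecting, in each case, the momentum degree of the conservation laws that were written down.

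For point (i) I would first let $\Sn$, $0\le\Sn\le n$, be the number of commuting Killing vector fields of $(\MM,g)$. If $\Sn=n$ the metric is locally flat and Theorem~\ref{tdjkfh} supplies a complete integral with the linear conservation laws (\ref{azvxcs}). If $\Sn<n$, separate the $\Sn$ cyclic variables and, among the remaining $n-\Sn$ coordinates, distinguish the $\Sm$ coordinates with nonvanishing diagonal inverse-metric component from the $n-\Sn-\Sm$ coisotropic ones, as in (\ref{abfgdr}); the non-degeneracy bound $2\Sn+\Sm\le n$ (\ref{anbcpf}) holds. Lemma~\ref{lnnsgf} then forces the block form (\ref{aodewq}), and one is in exactly one of the situations covered by Theorems~\ref{tgdjke}, \ref{theojk}, \ref{tdhfgd}, \ref{tiswgk}, and \ref{tnbdhg}, according to whether $\Sm=0$, $n-\Sn-\Sm=0$, or both are positive, and in the last case according to whether the distinguished constant $E$ is assigned to a parameter $d$ (case~2) or to a parameter $a$ (case~1).

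For point (ii) one simply looks at the conservation laws produced by the relevant theorem: they are (\ref{acbdgf}) (linear) when only Killing vectors are present, (\ref{amnssg}) (quadratic) when only quadratic conservation laws are present, (\ref{abcvdk}) ($\Sn$ linear together with $n-\Sn$ quadratic) in the mixed Killing/quadratic case, (\ref{abndjh}) (linear, after separation of the cyclic coordinates) in the Killing/coisotropic case, and (\ref{asbdgd}) ($\Sn$ linear together with $n-\Sn$ quadratic) in the general case. In every instance each conservation law is a polynomial of degree at most two in the momenta $p_\al$, which is the assertion, and the complete integral furnished by that theorem is the one required by the statement.

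The only genuine obstacle is the exhaustiveness in point~(i): one must know that an arbitrary complete separation of variables can, after an equivalence transformation — a canonical transformation whose coordinate part does not involve momenta, together with a reparametrization $c\mapsto\tilde c(c)$ — be brought to one of the normal forms on which the classification theorems operate, and in particular that the separating functions $W'_\mu$ and $W'_\vf$ may be taken in the form (\ref{anbcfr}), (\ref{nksjkh}), linear in the parameters $d$ and $a$. This is precisely what Sections~\ref{sajwuy}ff.\ establish; granting it, the present theorem is an immediate corollary, obtained by reading off the momentum degree of (\ref{acbdgf}), (\ref{amnssg}), (\ref{abcvdk}), (\ref{abndjh}), and (\ref{asbdgd}), with no further computation required.
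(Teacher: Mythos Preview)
Your proposal is correct and matches the paper's own treatment: the theorem is not given a standalone proof but is stated as a summary of the preceding case analysis, the paper noting just before it that ``we looked over all possible classes of separable metrics \dots\ conservation laws are at most quadratic in momenta.'' Your write-up is in fact more explicit than the paper's about the structure of the argument (enumerating which theorem applies to which $(\Sn,\Sm,n-\Sn-\Sm)$ and which displayed conservation laws to inspect), and you correctly flag that the only substantive input is the exhaustiveness of the classification together with the admissibility of the ans\"atze (\ref{anbcfr}), (\ref{nksjkh}).
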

This statement is important because for another choice of parameters and
separating functions we cannot guarantee the absence of higher order
conservation laws.

At the end we prove that any additional conservation law is a function of
conservation laws described in theorems (\ref{tiswgk}) and (\ref{tnbdhg}).
To this end we square equality $p_\al=c_\al$. Then all conservation laws become
quadratic (remember that functions $b_{ii}{}^\vf$ and $b_r{}^\vf$ are linear in
$c_\al$ and, consequently, in momenta $p_\al$). Part of these conservation laws
may be decomposable but this is not essential. We simplify notation:
\begin{equation*}
\begin{split}
  (x^\al,y^\mu,z^\vf)\quad\mapsto&\quad(q^\al),
\\
  (p_\al,p_\mu,p_\vf)\quad\mapsto&\quad(p_\al),
\\
  (c_\al^2,d_{ij},a_r)\quad\mapsto&\quad(c_\Sa),
\end{split}
\end{equation*}
where indices in the right hand sides take all values from $1$ to $n$:
$\al=1,\dotsc,n$ and $\Sa=1,\dotsc,n$. The conservation laws take the form
\begin{equation*}
  c_\Sa=F_\Sa(q,p)=\frac12 f_\Sa^{\al\bt}p_\al p_\bt,
\end{equation*}
where $f_\Sa^{\al\bt}(q)$ are some functions only on coordinates $q$. They
contain the Hamiltonian among themselves: for separable metrics of type
$1$ and $2$, it is $F_n$ and $F_{\Sn+\Sm}$, respectively. Perform the canonical
transformation $(q^\al,p_\al)\mapsto(Q^\Sa,P_\Sa)$ with generating functions
$S_2:=W(q,P)$ to action-angle variables. Then
\begin{equation*}
  F_\Sa=P_\Sa,\qquad\forall\Sa=1,\dotsc n,
\end{equation*}
in new variables. Assume that there is additional involutive conservation law
\begin{equation*}
  G:=\sum_{\Sa=1}^n G^\Sa(Q) P_\Sa^m=\const,\qquad m\in\MR,
\end{equation*}
with some differentiable functions $G^\Sa$ only from new coordinates, and where
$m$ is some real number. In particular, $G$ may be homogeneous polynomial of any
order $m$ on new momenta. Involutivity means that
\begin{equation*}
  [G,F_\Sa]=[G,P_\Sa]=\sum_{\Sb=1}^n\frac{\pl G^\Sb}{\pl Q^\Sa}P_\Sb^m=0.
\end{equation*}
This equality must hold for all momenta and values of index $\Sa$. Therefore
functions $G^\Sa$ do not depend on coordinates and, consequently, there is
functional dependence between integrals $(F_\Sa,G)$ because the number of
functions $F_\Sa$ in maximal and equals to the number of momenta, i.e.\ any
additional conservation law is some function on integrals $(F_\Sa)$ at least
locally, $G=G(F)$. Thus we proved
\begin{theorem}                                                   \label{tssuyy}
Let variables in the Hamilton--Jacobi equation for geodesics be completely
separable, and all conservation laws $F_\Sa$ are found according to
theorems (\ref{tiswgk}) and (\ref{tnbdhg}). Then any additional involutive
conservation law $G$ for geodesics is some function $G=G(F)$ at least locally.
\end{theorem}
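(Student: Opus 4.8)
The plan is to reduce everything to the action-angle coordinates already produced in Theorems \ref{tiswgk} and \ref{tnbdhg}. First I would put all $n$ conservation laws on the same footing: the linear laws $p_\al=c_\al$ are squared, so that in the abbreviated notation introduced above — $(q^\al)$ for all coordinates, $(p_\al)$ for all momenta, $(c_\Sa)$ for the relabelled parameters $(c_\al^2,d_{ij},a_r)$ — every conservation law becomes quadratic, $c_\Sa=F_\Sa(q,p)=\frac12 f_\Sa^{\al\bt}(q)p_\al p_\bt$, with the Hamiltonian itself among the $F_\Sa$ ($F_n$ in case 1, $F_{\Sn+\Sm}$ in case 2). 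Because the $F_\Sa$ come from a complete integral they are functionally independent, hence usable as new momenta; performing the canonical transformation $(q^\al,p_\al)\mapsto(Q^\Sa,P_\Sa)$ with generating function $S_2:=W(q,P)$ sends $F_\Sa\mapsto P_\Sa$, and in these coordinates the Hamiltonian depends on the momenta alone and $\dot P_\Sa=0$.

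Next I would take an arbitrary further conservation law $G$ in involution with all the $F_\Sa$ and expand it in the new momenta, $G=\sum_\Sa G^\Sa(Q)P_\Sa^m$ with coordinate-dependent coefficients (this covers, in particular, any homogeneous polynomial of arbitrary degree $m$ on the momenta). Using the elementary identity $[\,\cdot\,,P_\Sa]=\pl/\pl Q^\Sa$ in action-angle variables, the involutivity requirement becomes $0=[G,F_\Sa]=[G,P_\Sa]=\sum_\Sb(\pl G^\Sb/\pl Q^\Sa)P_\Sb^m$, and since this is an identity in the momenta each $\pl G^\Sb/\pl Q^\Sa$ must vanish. Hence the $G^\Sb$ are constants, $G=\sum_\Sb G^\Sb P_\Sb^m$ is a function of the $P_\Sb=F_\Sb$, and undoing the canonical transformation gives $G=G(F)$, at least locally — the dependence being forced because the $n$ functions $F_\Sa$ already exhaust the momenta.

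The step needing the most care is the description of the admissible form of $G$: the clean linear-algebra step presupposes that $G$ can be written as a single (possibly infinite) sum of powers of the $P_\Sa$ with coordinate-dependent coefficients, so one should either take this as the meaning of ``additional conservation law'' or, equivalently, invoke the generic non-resonance of the frequencies $\om^\Sa=\pl H/\pl P_\Sa$ on the invariant tori, which forces any quantity conserved along the flow to be a function of the action variables. Everything else — squaring the linear laws, the functional independence of the $F_\Sa$ inherited from condition (\ref{indbfd}), and the action-angle form of the Poisson bracket — is routine and has been set up in the preceding sections.
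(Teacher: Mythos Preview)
Your proposal is correct and follows essentially the same route as the paper: square the linear laws, pass to action-angle variables via $S_2:=W(q,P)$ so that $F_\Sa=P_\Sa$, and use $[G,P_\Sa]=0$ to kill the $Q$-dependence of the coefficients in $G=\sum_\Sb G^\Sb(Q)P_\Sb^m$. Your closing caveat about the assumed form of $G$ is well taken --- the paper makes the same ansatz without comment, so you have been, if anything, more scrupulous than the original on this point.
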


This solves completely the St\"ackel problem for metrics of arbitrary signature
on manifolds of any dimension. All separable metrics are divided into
equivalence classes. Metrics in each class are related by canonical
transformations and nondegenerate transformations of parameters which do not
involve coordinates. There is the canonical (the simplest) separable metric in
each equivalence class. Its form is given by theorems \ref{tiswgk} and
\ref{tnbdhg}. Corresponding conservation laws are at most quadratic in momenta.
For other choices of parameters conservation laws may have higher order but they
are functionally dependent on conservation laws listed in theorems \ref{tiswgk}
and \ref{tnbdhg} due to theorem \ref{tssuyy}. Matrix $B$ in general theorems is
constructed as matrix (\ref{isndgf}). The proved theorems are constructive. In
next sections we list all separable metrics in two, three, and four dimensions
as examples.
%******************************************************************************
\section{Separation of variables in two dimensions}
%*******************************************************************************
Two dimensional manifolds provide the simplest examples of separable metrics for
the Hamilton-- Jacobi equation for geodesics which have important features
present in higher dimensions. Separation of variables in two dimensions is
considered in example \ref{ecddsa} in detail. Therefore we only formulate
results.

There are only three classes\- of different separable metrics.

1) {\bf Class $[2,0,0]$.} Two commuting Killing vectors. Coordinates and
parameters:
\begin{equation*}
  (x^\al,y^\mu,z^\vf)\mapsto(x^1,x^2),\qquad
  (c_\al,d_{ij},a_r)\mapsto(c_1,c_2).
\end{equation*}
This case is described by theorem \ref{tdjkfh}. In canonical form, the inverse
separable metric is (pseudo)Euclidean:
\begin{equation}                                                  \label{abcvdf}
  g^{\al\bt}=\eta^{\al\bt},
\end{equation}
where $\eta^{\al\bt}$ is either Euclidean or Lorentzian metric. The
Hamilton--Jacobi equation has the form
\begin{equation*}
  \eta^{\al\bt}W'_\al W'_\bt=\eta^{\al\bt}c_\al c_\bt.
\end{equation*}
Variables are separated as
\begin{equation*}
  W'_\al=c_\al.
\end{equation*}
The system has two linear conservation laws:
\begin{equation}                                                  \label{issnsf}
  p_\al=c_\al,
\end{equation}
and respective coordinates $x^1,x^2$ are cyclic.
Separable metric (\ref{abcvdf}) has two independent commuting Killing vector
fields $\pl_1$ and $\pl_2$.

2) {\bf Class $[1,1,0]_2$.} One Killing vector and one indecomposable quadratic
conservation law. Coordinates and parameters:
\begin{equation*}
  (x^\al,y^\mu,z^\vf)\mapsto(x^1:=x,y^2:=y),\qquad
  (c_\al,d_{ij},a_r)\mapsto(c_1:=c,d_{22}:=2E).
\end{equation*}
This case in more generality is considered in example \ref{esjjhg}. Canonical
separable (pseudo)\-Riemannian metric has form (\ref{aghsjy})
\begin{equation}                                                  \label{anbcgf}
  g^{**}=\begin{pmatrix} -k(y) & 0 \\ 0 & 1 \end{pmatrix},
\end{equation}
where $k(y)\ne0$ is an arbitrary function. For $k<0$ and $k>0$, we have
Riemannian and Lorentzian metrics, respectively. The Hamilton--Jacobi equation
is
\begin{equation*}
  -k W^{\prime2}_1+W^{\prime2}_2=2E.
\end{equation*}
Variables are separated in the following way
\begin{equation*}
  W_1=c,\qquad W^{\prime2}_2=2E+kc^2>0.
\end{equation*}
In the Hamiltonian formulation, we have one linear and one quadratic
conservation laws:
\begin{equation*}
  p_1=c,\qquad p_2^2-k(y)c^2=2E.
\end{equation*}
Canonical separable metric (\ref{anbcgf}) is parameterized by one arbitrary
function $k(y)\ne0$, satisfying inequality $2E+kc^2>0$ for fixed $E$ and $c$.

3) {\bf Class $[0,2,0]_2$.} Tow quadratic conservation laws. Coordinates and
parameters:
\begin{equation*}
  (x^\al,y^\mu,z^\vf)\mapsto(y^1,y^2),\qquad
  (c_\al,d_{ij},a_r)\mapsto(d_{11}:=d,d_{22}:=2E).
\end{equation*}
This case is described by theorem \ref{tgdjke}. In canonical form, the separable
metric is given by Eq.~(\ref{eqkduy}).
Variables are completely separated (\ref{eodgfu}), where matrix $b$ is given
by Eq.~(\ref{andbcf}). Conservation laws are quadratic (\ref{amnssg}).
The canonical separable metric is parameterized by four arbitrary functions
$\phi_{11}(y^1)$, $\phi_{12}(y^1)$, $\phi_{21}(y^2)$, and $\phi_{22}(y^2)$,
whose restrictions are described in example \ref{ecddsa} in detail. This class
of metrics contains two dimensional Liouville systems (\ref{ewtref}).

4) {\bf Class $[1,0,1]_1$.} One Killing vector and one coisotropic coordinate.
Coordinates and parameters:
\begin{equation*}
  (x^\al,y^\mu,z^\vf)\mapsto(x^1:=x,z^2:=z),\qquad
  (c_\al,d_{ij},a_r)\mapsto(c_1:=c,a_2:=2E).
\end{equation*}
Appearance of zeroes on the diagonal is possible only for Lorentzian signature
metrics and one Killing vector. This case is described by theorem \ref{tdhfgd}.
Canonical separable metric has form (\ref{invbff}) where off diagonal block
$b_n{}^{\al\phi}$ is reduced to one nonzero function $b(z)$, which can be
transformed to unity by the canonical transformation, i.e.\
\begin{equation}                                                  \label{abcvdl}
  g^{**}=\begin{pmatrix} 0 & 1 \\ 1 & 0 \end{pmatrix}.
\end{equation}
This metric produces two linear conservation laws
\begin{equation}                                                  \label{awtruy}
  p_1=c,\qquad cp_2=E
\end{equation}
We see that separation of variables in two dimensions corresponds to two
commuting Killing vectors $\pl_x$ and $\pl_z$, i.e.\ coordinates $x$, $z$ are
null. Therefore separable metric (\ref{abcvdl}) is equivalent to the Lorentz
metric, and respective classes are also equivalent $[1,0,1]_1\sim[2,0,0]$.

All three classes of separable metrics were known already to St\"ackel
\cite{Stacke91}. However the case 4) with coisotropic coordinate was not
described. Moreover, we proved that there no other separable metrics.
%******************************************************************************
\section{Separation of variables in three dimensions}             \label{sdsjuj}
%*******************************************************************************
There are six different classes of separable metrics.

1) {\bf Class $[3,0,0]$.} Three commuting Killing vectirs. Coordinates and
parameters:
\begin{equation*}
  (x^\al,y^\mu,z^\vf)\mapsto(x^1,x^2,x^3),\qquad
  (c_\al,d_{ij},a_r)\mapsto(c_1,c_2,c_3).
\end{equation*}
In canonical form, the separable metric is (pseudo)Euclidean. Separation of
variables and conservation laws have the same form as in two dimensions in case
$[2,0,0]$, but now indices $\al=1,2,3$ take three values, and all Cartesian
coordinates $x^\al$ are cyclic.

2) {\bf Class $[2,1,0]_2$.} Two commuting Killing vectors and one quadratic
conservation law. Coordinates and parameters:
\begin{equation*}
  (x^\al,y^\mu,z^\vf)\mapsto(x^1,x^2,y^3:=y),\qquad
  (c_\al,d_{ij},a_r)\mapsto(c_1,c_2,d_{33}:=2E).
\end{equation*}
Canonical separable metric has the form (\ref{aghsjy}):
\begin{equation}                                                  \label{abcvdi}
  g^{**}=\begin{pmatrix} g^{\al\bt}(y) & 0 \\ 0 & 1\end{pmatrix},
  \qquad \al,\bt=1,2,
\end{equation}
where $g^{\al\bt}$ is an arbitrary symmetric nondegenerate matrix. The
Hamilton--Jacobi equation and separation of variables are as follows
\begin{equation*}
\begin{split}
  &g^{\al\bt}W'_\al W'_\bt+W^{\prime2}_3=2E,
\\
  &W'_\al=c_\al,\qquad\qquad W^{\prime2}_3=2E-g^{\al\bt}c_\al c_\bt.
\end{split}
\end{equation*}
Conservation laws are
\begin{equation}                                                  \label{edjhyt}
  p_\al=c_\al,\qquad g^{\al\bt}(y)p_\al p_\bt+p_3^2=2E.
\end{equation}
In this case, the canonical separable metric is parameterized by three
arbitrary functions of single argument in matrix $g^{\al\bt}$ with restriction
$\det g^{\al\bt}\ne0$. Moreover, condition $W^{\prime2}_3\ge0$ also restricts
arbitrary functions for fixed $E$ and $c$. Depending on matrix $g^{\al\bt}$
separable metric (\ref{abcvdi}) may have arbitrary signature. In particular,
when matrix $g^{\al\bt}$ is constant, we return to class $[3,0,0]$.

3) {\bf Class $[1,2,0]_2$}. One Killing vector and two quadratic conservation
laws. Coordinates and parameters:
\begin{equation*}
  (x^\al,y^\mu,z^\vf)\mapsto(x^1,y^2,y^3),\qquad
  (c_\al,d_{ij},a_r)\mapsto(c_1:=c,d_{22}:=d,d_{33}:=2E).
\end{equation*}
This case is described by theorem \ref{theojk}. It is new, and we consider it
in more detail. First, we define matrix $b$:
\begin{equation*}
  b_{\mu\mu}{}^{ii}=\begin{pmatrix} \phi_{22}(y^2) & \phi_{23}(y^2) \\
  \phi_{32}(y^3) & \phi_{33}(y^3) \end{pmatrix}, \qquad
  b_{ii}{}^{\mu\mu}=\frac1{\det b}\begin{pmatrix} \phi_{33} & -\phi_{23} \\
  -\phi_{32} & \phi_{22} \end{pmatrix},
\end{equation*}
where $\det b=\phi_{22}\phi_{33}-\phi_{23}\phi_{32}\ne0$. After canonical
transformation of type (\ref{abdvfi}), One element in each row of matrix $b$
can be transformed to $\pm1$. Let
\begin{equation*}
  \phi_{22}=1,\qquad \phi_{32}=-1.
\end{equation*}
Then
\begin{equation}                                                  \label{ebdpoi}
  b_{\mu\mu}{}^{ii}=\begin{pmatrix} ~~1 & \phi_{23} \\
  -1 & \phi_{33} \end{pmatrix}, \qquad
  b_{ii}{}^{\mu\mu}=\frac1{\det b}\begin{pmatrix} \phi_{33} & -\phi_{23} \\
  1 & 1 \end{pmatrix},
\end{equation}
where $\det b=\phi_{23}+\phi_{33}$. Diagonal metric elements corresponding to
quadratic conservation laws are
\begin{equation}                                                  \label{ewshdg}
  g^{22}=b_{33}{}^{22}=\frac1{\phi_{23}+\phi_{33}},\qquad
  g^{33}=b_{33}{}^{33}=\frac1{\phi_{23}+\phi_{33}}.
\end{equation}
Element $g^{11}$ of inverse metric has form (\ref{anncbv})
\begin{equation*}
  g^{11}=-b_{33}^{\mu\nu}k^{11}_{\mu\nu}=-\frac1{\phi_{23}+\phi_{33}}
  (k^{11}_{22}+k^{11}_{33}).
\end{equation*}
Simplify notation
\begin{equation*}
  \phi_{23}(y^2):=\phi_2(y^2),\quad \phi_{33}(y^3):=\phi_3(y^3),\qquad
  k^{11}_{22}(y^2):=-k_2(y^2)\quad k^{11}_{33}:=-k_3(y^3).
\end{equation*}
Now the canonical separable metric takes the form
\begin{equation}                                                  \label{aswgre}
  g^{**}=\frac1{\phi_2+\phi_3}\begin{pmatrix} k_2+k_3 & 0 & 0 \\
  0 & 1 & 0 \\ 0 & 0 & 1 \end{pmatrix}.
\end{equation}
Variables in the Hamilton--Jacobi equation
\begin{equation*}
  \frac1{\phi_2+\phi_3}\big[(k_2+k_3)W^{\prime2}_1+W^{\prime2}_2+W^{\prime2}_3
  \big]=2E
\end{equation*}
are completely separated:
\begin{equation}                                                  \label{aweeth}
\begin{split}
  W'_1=&~~c,
\\
  W^{\prime2}_2=&~~d+2\phi_2E-k_2 c^2,
\\
  W^{\prime2}_3=&-d+2\phi_3 E-k_3 c^2.
\end{split}
\end{equation}
These relations yield conservation laws
\begin{equation}                                                  \label{aljghg}
\begin{split}
  p_1=&c,
\\
  \frac1{\phi_2+\phi_3}\left[\phi_3 p_2^2-\phi_{2}p_3^2
  +\big(\phi_3 k_2-\phi_2 k_3\big)p_1^2\right]=&d,
\\
  \frac1{\phi_2+\phi_3}\left[p_2^2+p_3^2+\big(k_2+k_3\big)p_1^2\right]=&2E.
\end{split}
\end{equation}
Thus canonical separable metric (\ref{aswgre}) is parameterized by four
functions of single argument $\phi_{2,3}$ and $k_{2,3}$. They have to produce
nondegenerate metric, and Eq.~(\ref{aweeth}) must admit real solutions for
separating functions $W'_\mu$. In general, there are two indecomposable
quadratic and one linear conservation laws.

4) {\bf Class $[1,1,1]_2$.} One Killing vector, one indecomposable quadratic
conservation law, and one coisotropic coordinate. Coordinates and parameters:
\begin{equation*}
  (x^\al,y^\mu,z^\vf)\mapsto(x^1:=x,y^2:=y,z^3:=z),\qquad
  (c_\al,d_{ij},a_r)\mapsto(c_1:=c,d_{22}:=2E,a_3:=a).
\end{equation*}
This case is described by theorem \ref{tiswgk}. As always, we start with
arbitrary $(2\times2)$ matrix (\ref{andkfj}). To simplify notation, let
\begin{equation*}
  b_{22}{}^{22}\equiv1,\qquad b_{22}{}^3\equiv\phi_2(y),\qquad
  b_3{}^{22}\equiv\phi_3/c,\qquad b_3{}^3\equiv1/c.
\end{equation*}
The first and the last equality can be always achieved by suitable canonical
transformation. The dependence on $c$ follows from the independence of metric
components on parameters. Then
\begin{equation}                                                  \label{iddkll}
  B=\begin{pmatrix} 1 & \phi_2 \\ \phi_3/c & 1/c
  \end{pmatrix}\qquad\Rightarrow\qquad
  B^{-1}=\frac1{1-\phi_2\phi_3}\begin{pmatrix} 1 & -c\phi_2 \\
  -\phi_3 & c \end{pmatrix},
\end{equation}
where $\phi_2(y)$ and $\phi_3(z)$ are arbitrary functions on single coordinates.
Equations (\ref{anvbfg}) imply expression for the canonical separable metric
\begin{equation}                                                  \label{abcvdg}
  g^{**}=\frac1{1-\phi_2\phi_3}
  \begin{pmatrix}-k_2 & 0 & -\phi_2/2 \\ 0 & 1 & 0 \\
  -\phi_2/2 & 0 & 0 \end{pmatrix},
\end{equation}
where $k_2(y)$ is an arbitrary function. The Hamilton--Jacobi equation after
substitution $W'_1=c$ takes the form
\begin{equation*}
  \frac1{1-\phi_2\phi_3}\left(-k_2c^2+W^{\prime2}_2-\phi_2cW'_3\right)=2E.
\end{equation*}
Variables are separated in the following form
\begin{equation}                                                  \label{avbsfd}
\begin{split}
  W^{\prime2}_2=&2E+\phi_2 a+k_2c^2,
\\
  W'_3=&\frac1c(2\phi_3E+a).
\end{split}
\end{equation}
Conservation laws (\ref{asbdgd}) after substitution $p_1=c$ are
\begin{equation}                                                  \label{anbghy}
\begin{split}
  \frac1{1-\phi_2\phi_3}\big(p_2^2-k_2c^2-\phi_2cp_3\big)=&2E,
\\
  \frac 1{1-\phi_2\phi_3}\big[-\phi_3(p_2^2-k_2c^2)+cp_3\big]=&a.
\end{split}
\end{equation}

Canonical separable metric (\ref{abcvdg}) is parameterized by three arbitrary
functions $\phi_2(y)$, $\phi_3(z)$, and $k_2(y)$. Determinant of the inverse
metric (\ref{abcvdg}) is
\begin{equation*}
  \det g^{**}=-\frac{\phi_2^2}{4(1-\phi_2\phi_3)^3}\ne0,\infty.
\end{equation*}
Arbitrary functions must be restricted $\phi_2\ne0$ and $\phi_2\phi_3\ne1$,
because otherwise the determinant is degenerate. Moreover functions $\phi_2$
and $k_2$ are to be chosen in such a way as to provide real solution for the
first equation (\ref{avbsfd}) with respect to $W'_2$. Depending on arbitrary
functions the separable metric may any signature.

5) {\bf Class $[1,1,1]_1$.} One Killing vector, one indecomposable quadratic
conservation law, and one coisotropic coordinate. Coordinates and parameters:
\begin{equation*}
  (x^\al,y^\mu,z^\vf)\mapsto(x^1:=x,y^2:=y,z^3:=z),\qquad
  (c_\al,d_{ij},a_r)\mapsto(c_1:=c,d_{22}:=d,a_3:=2E).
\end{equation*}
This case is given by theorem \ref{tnbdhg}. Matrix $B$ has the same form as in
case $[1,1,1]_2$ (\ref{iddkll}), however the canonical separable metric is
defined by the last row of matrix $B^{-1}$:
\begin{equation}                                                  \label{alcvdg}
  g^{**}=\frac1{1-\phi_2\phi_3}
  \begin{pmatrix} \phi_3k_2 & 0 & 1/2 \\ 0 & -\phi_3 & 0 \\
  1/2 & 0 &0\end{pmatrix}.
\end{equation}
After substitution $W'_1\equiv c$, the Hamilton--Jacobi equation take the form
\begin{equation*}
  \frac1{1-\phi_2\phi_3}\big(\phi_3k_2c^2-\phi_3W^{\prime2}_2+cW'_3\big)=2E.
\end{equation*}
Variables are separated as
\begin{equation}                                                  \label{avbsfl}
\begin{split}
  W^{\prime2}_2=&d+2\phi_2E+k_2c^2,
\\
  W'_3=&\frac1c({\phi_3}d+2E).
\end{split}
\end{equation}
Conservation laws (\ref{asbdgd}) take the form
\begin{equation}                                                  \label{avbdgt}
\begin{split}
  \frac1{1-\phi_2\phi_3}\big[p_2^2-k_2c^2-\phi_2cp_3\big]=&d,
\\
  \frac1{1-\phi_2\phi_3}\big[-\phi_3(p_2^2-k_2c^2)+cp_3\big]=&2E.
\end{split}
\end{equation}

So, canonical separable metric (\ref{alcvdg}) is parameterized by three
arbitrary functions $\phi_2(y)$, $\phi_3(z)$, and $k_2(y)$. They are restricted
by nondegeneracy of the determinant
\begin{equation*}
  \det g^{**}=\frac{\phi_3}{4(1-\phi_2\phi_3)^3}\ne0,\infty
\end{equation*}
which imply $\phi_3\ne0$ and $\phi_2\phi_3\ne1$. Moreover, functions $\phi_2$
and $k_2$ have to provide existence of real solutions for $W'_2$ in the first
equation (\ref{avbsfl}). Depending on arbitrary functions, the separable metric
may have any signature.

6) {\bf Class $[0,3,0]_2$.} Absence of Killing vectors and coisotropic
coordinates. Coordinates and parameters:
\begin{equation*}
  (x^\al,y^\mu,z^\vf)\mapsto(y^1,y^2,y^3),\qquad
  (c_\al,d_{ij},a_r)\mapsto(d_{11}:=d_1,d_{22}:=d_2,d_{33}:=2E).
\end{equation*}
This case is described by theorem \ref{tgdjke}. After the canonical
transformation with generating function (\ref{abdvfi}) matrix $b$ and its
inverse are
\begin{equation}                                                  \label{idlkih}
  b_{\mu\mu}{}^{ii}=
  \begin{pmatrix}1 & b_{12}(y^1) & b_{13}(y^1)\\ b_{21}(y^2) & 1 & b_{23}(y^2)\\
  b_{31}(y^3) & b_{32}(y^3) & 1\end{pmatrix},\qquad
  b_{ii}{}^{\mu\mu}=\frac1\vartriangle
  \begin{pmatrix} \vartriangle_{11} & \vartriangle_{21} & \vartriangle_{31} \\
  \vartriangle_{12} & \vartriangle_{22} & \vartriangle_{32} \\
  \vartriangle_{13} & \vartriangle_{23} & \vartriangle_{33} \end{pmatrix},
\end{equation}
where $\vartriangle:=\det b_{\mu\mu}{}^{ii}$ and symbols $\vartriangle_{\mu i}$
denote cofactors of elements $b_{\mu\mu}{}^{ii}$. This matrix produces the
diagonal metric (\ref{adfghf})
\begin{equation}                                                  \label{avcfsg}
  g^{**}=\frac1\vartriangle \begin{pmatrix} \vartriangle_{13} & 0 & 0 \\
  0 & \vartriangle_{23} & 0 \\ 0 & 0 & \vartriangle_{33} \end{pmatrix}.
\end{equation}
The respective Hamilton--Jacobe equation becomes
\begin{equation*}
  \frac1\vartriangle\big(\vartriangle_{13}W^{\prime2}_1
  +\vartriangle_{23}W^{\prime2}_2+\vartriangle_{33}W^{\prime2}_3\big)=2E.
\end{equation*}
Variables are completely separated in the way
\begin{equation}                                                  \label{abbcvf}
\begin{split}
  W^{\prime2}_1=&d_1+b_{12}d_2+2b_{13}E,
\\
  W^{\prime2}_2=&b_{21}d_1+d_2+2b_{23}E,
\\
  W^{\prime2}_3=&b_{31}d_1+b_{32}d_2+2E.
\\
\end{split}
\end{equation}
All three conservation laws are quadratic in general
\begin{equation}                                                  \label{abcvdo}
\begin{split}
  \frac1\vartriangle\big(\vartriangle_{11}p^2_1+\vartriangle_{21}p^2_2
  +\vartriangle_{31}p^2_3\big)=&d_1,
\\
  \frac1\vartriangle\big(\vartriangle_{12}p^2_1+\vartriangle_{22}p^2_2
  +\vartriangle_{32}p^2_3\big)=&d_2,
\\
  \frac1\vartriangle\big(\vartriangle_{13}p^2_1+\vartriangle_{23}p^2_2
  +\vartriangle_{33}p^2_3\big)=&2E.
\end{split}
\end{equation}

This case includes the Liouville system (see example \ref{ejsdfw}). Indeed,
take matrix $b$ in the form
\begin{equation*}
  b_{\mu\mu}{}^{ii}:=\begin{pmatrix} 1 & 0 &\phi_1(y^1) \\ 0 & 1 & \phi_2(y^2)
  \\-1 & -1 & \phi_3(y^3)\end{pmatrix}\quad\Rightarrow\quad
  b_{ii}{}^{\mu\mu}=\frac1{\phi_1+\phi_2+\phi_3}
  \begin{pmatrix} \phi_2+\phi_3 & -\phi_1 & -\phi_1 \\
  -\phi_2 & \phi_1+\phi_3 & -\phi_2 \\ 1 & 1 & 1 \end{pmatrix}.
\end{equation*}
Then conservation laws are
\begin{equation}                                                  \label{anskiu}
\begin{split}
  p^2_1-2\phi_1 E=&d_1,
\\
  p^2_2-2\phi_2 E=&d_2,
\\
  \frac1{\phi_1+\phi_2+\phi_3}\big(p^2_1+p^2_2+p^2_3\big)=&2E,
\end{split}
\end{equation}
which coincides with Eqs.~(\ref{eddgtr}) and (\ref{essghk}) for $\Theta\equiv0$
up to notation.

Five of six separable metrics in three dimensions were listed in
\cite{KalMil79}, where different technique is used. Types of metrics coincide
with respect to the number of Killing vectors, quadratic conservation laws and
coisotropic coordinates. However computations of the present section allowed us
to find explicitly separable functions $W'_\al$, $\al=1,\dotsc,3$. In addition,
we give more detailed classification indicated by indices $1,2$, and separable
metric of class $[1,1,1]_1$ is not mentioned in paper \cite{KalMil79}.
%******************************************************************************
\section{Separation of variables in four dimensions}
%*******************************************************************************
Separable metrics in four dimensions are of great importance in gravity models,
in particular, in general relativity. These metrics has Lorentzian signature,
and coisotropic coordinates may appear. There are ten classes of different
separable metrics in four dimensions.

1) {\bf Class $[4,0,0]$.} Four Killing vectors. Variables are separated in the
same way as in lower dimensions. Canonical separable metric and conservation
laws have form (\ref{abcvdf}) and (\ref{issnsf}), respectively, but now indices
run over all four values, $\al,\bt=1,2,3,4$.

2) {\bf Class $[3,1,0]_2$.} Three commuting Killing vectors and one
indecomposable quadratic conservation law. As in three dimensions, the canonical
separable metric and conservation laws have form (\ref{abcvdi}) and
(\ref{edjhyt}), but indices $\al,\bt=1,2,3$ take more values. The canonical
separable metric is parameterized by six arbitrary functions.

The Kasner solution \cite{Kasner21C} lies in this class.

3) {\bf Class $[2,2,0]_2$.} Two commuting Killing vector fields and two
indecomposable quadratic conservation laws without coisotropic coordinates.
Coordinates and parameters:
\begin{equation*}
  (x^\al,y^\mu,z^\vf)\mapsto(x^1,x^2,y^3,y^4),\qquad
  (c_\al,d_{ij},a_r)\mapsto(c_1,c_2,d_{33}:=d,d_{44}:=2E).
\end{equation*}
This case is described by theorem \ref{theojk}. Matrix $b$ and diagonal metric
elements $g^{33}$ and $g^{44}$ have form (\ref{ebdpoi}) and (\ref{ewshdg}) with
replacement $(2,3)\mapsto(3,4)$. The block of inverse metric $g^{\al\bt}$,
$\al,\bt=1,2$, is
\begin{equation*}
  g^{\al\bt}=-k^{\al\bt}_{\mu\nu}g^{\mu\nu}=\frac1{\phi_3+\phi_4}
  \big(k^{\al\bt}_3+k^{\al\bt}_4\big),\qquad\phi_3+\phi_4\ne0,
\end{equation*}
where $\phi_3(y^3)$, $\phi_4(y^4)$, and $k^{\al\bt}_3(y^3)$, $k^{\al\bt}_4(y^4)$
-- are arbitrary functions of single coordinates. Thus canonical separable
metric has the form
\begin{equation}                                                  \label{ancbfg}
  g^{**}=\frac1{\phi_3+\phi_4}\begin{pmatrix}
  k^{\al\bt}_3+k^{\al\bt}_4 & 0 & 0 \\ 0 & 1 & 0 \\ 0 & 0 & 1 \end{pmatrix}.
\end{equation}
In addition, we have to require fulfillment of two conditions
\begin{equation}                                                  \label{ekdfre}
  \det g^{**}=\frac{(k^{11}_3+k^{11}_4)(k^{22}_3+k^{22}_4)-
  (k^{12}_3+k^{12}_4)(k^{21}_3+k^{21}_4)}{(\phi_3+\phi_4)^4}\ne0,\infty.
\end{equation}
After separation of two coordinates $W'_\al\equiv c_\al$, the Hamilton--Jacobi
equation becomes
\begin{equation}                                                  \label{anbvhf}
  \frac1{\phi_3+\phi_4}\big(k^{\al\bt}_3c_\al c_\bt
  +k^{\al\bt}_4c_\al c_\bt+W^{\prime2}_3+W^{\prime2}_4\big)=2E.
\end{equation}
Variables are completely separated as
\begin{equation}                                                  \label{absgfd}
\begin{split}
  W^{\prime2}_3=&~~d+2E\phi_3-k^{\al\bt}_3c_\al c_\bt,
\\
  W^{\prime2}_4=&-d+2E\phi_4-k^{\al\bt}_4c_\al c_\bt.
\end{split}
\end{equation}
Two conservation laws are quadratic in general
\begin{equation}                                                  \label{abcvdu}
\begin{split}
  \frac1{\phi_3+\phi_4}\big[\phi_4(p_3^2+k_3^{\al\bt}c_\al c_\bt)
  -\phi_3(p_4^2+k_4^{\al\bt}c_\al c_\bt\big]=&d,
\\
  \frac1{\phi_3+\phi_4}\big[p_3^2+k_3^{\al\bt}c_\al c_\bt+p_4^2+k^{\al\bt}_4
  c_\al c_\bt\big]=&2E.
\end{split}
\end{equation}
In this case, the canonical separable metric (\ref{ancbfg}) is parameterized by
eight arbitrary functions, $\phi_3(y^3)$, $\phi_4(y^4)$, $k^{\al\bt}_3(y^3)$,
and $k^{\al\bt}_4(y^4)$ satisfying inequalities (\ref{ekdfre}). Besides, they
must be chosen in such a way that equations for $W'$ (\ref{absgfd}) have real
solutions.

This class of separable metrics includes the Schwarzschild,
Reissner--Nordstr\"om, Kerr, and other famous solutions in general relativity
\cite{Carter68}.

4) {\bf Class $[2,0,2]_1$.} Two commuting Killing vectors and two coisotropic
coordinates. Coordinates and parameters:
\begin{equation*}
  (x^\al,z^\vf)=(x^1,x^2,z^3,z^4),\qquad
  (c_\al,d_{ii},a_r)\mapsto(c_1,c_2,a_3:=a,a_4:=2E).
\end{equation*}
This case is described by theorem \ref{tdhfgd}. It is new and considered in more
detail. Matrix $\phi$ (\ref{irkrjg}) has the block form
\begin{equation}                                                  \label{anbgfd}
  (\phi_\vf{}^r)=\begin{pmatrix} 1 & \phi_3 \\ \phi_4 & 1 \end{pmatrix},
  \qquad\Leftrightarrow\qquad
  (\phi_\vf{}^r)^{-1}=\frac1{1-\phi_3\phi_4}\begin{pmatrix} 1 & -\phi_3 \\
  -\phi_4 & 1 \end{pmatrix},
\end{equation}
where $\phi_3(z^3)$ and $\phi_4(z^4)$ are arbitrary functions of single
variables. Functions $h_\vf$ are
\begin{equation*}
  h_3(z^3,c)=h_3^\al(z^3) c_\al,\qquad h_4(z^4,c)=h_4^\al(z^4)c_\al,\qquad
  \al,\bt=1,2,
\end{equation*}
where $h_3^\al(z^3)$ and $h_4^\al(z^4)$ are linear. Then matrix $b$ has the form
\begin{equation}                                                  \label{aihwkw}
  (b_\vf{}^r)=\begin{pmatrix} \displaystyle\frac1{h_3^\al c_\al} &
  \displaystyle\frac{\phi_3}{h_3^\al c_\al} \\[10pt] \displaystyle
  \frac{\phi_4}{h_4^\al c_\al} & \displaystyle\frac1{h_4^\al c_\al}
  \end{pmatrix}\qquad\Leftrightarrow\qquad
  (b_r{}^\vf)=\frac1{1-\phi_3\phi_4} \begin{pmatrix} h_3^\al c_\al &
  -\phi_3h_4^\al c_\al \\ -\phi_4h_3^\al c_\al & h_4^\al c_\al \end{pmatrix}.
\end{equation}
It implies separable metric
\begin{equation}                                                  \label{aismnd}
  g^{**}=\frac1{1-\phi_3\phi_4}\begin{pmatrix} 0 & 0 & -\phi_4h_3^1 & h_4^1 \\
  0 & 0 & -\phi_4h_3^2 & h_4^2 \\ -\phi_4h_3^1 & -\phi_4h_3^2 & 0 & 0 \\
  h_4^1 & h_4^2 & 0 & 0 \end{pmatrix}.
\end{equation}
The Hamilton--Jacobi equation after separation of the first two coordinates,
\begin{equation}                                                  \label{ikevsf}
  \frac1{1-\phi_3\phi_4}\big(-\phi_3h_3^\al c_\al W'_3+h_4^\al c_\al W'_4\big)
  =2E,
\end{equation}
is completely separated as
\begin{equation}                                                  \label{adwsdf}
\begin{split}
  W'_3=&\frac1{h_3^\al c_\al}(a+2\phi_3 E),
\\
  W'_4=&\frac1{h_4^\al c_\al}(\phi_4 a+2E).
\end{split}
\end{equation}
Conservation laws (\ref{abndjh}) are linear
\begin{equation}                                                  \label{abcnda}
\begin{split}
  \frac1{1-\phi_3\phi_4}\big(h_3^\al c_\al p_3-\phi_3h_4^\al c_\al p_4\big)=&a,
\\
  \frac1{1-\phi_3\phi_4}\big(-\phi_4h_3^\al c_\al p_3+h^\al_4c_\al p_4\big)=&2E.
\end{split}
\end{equation}
Arbitrary functions are restricted by inequality
\begin{equation}                                                  \label{inmdsk}
  \det g^{**}=\frac{(\phi_4)^2\big(h_3^2h_4^1-h_3^1h_4^2\big)^2}
  {(1-\phi_3\phi_4)^4}\ne0,\infty.
\end{equation}
Thus canonical separable metric of type $[2,0,2]_1$ is parameterized by six
arbitrary functions of single arguments satisfying requirement (\ref{inmdsk}).
Two nonzero functions, e.g., $h_3^2$ and $h_4^1$ can be set unity. Note that
$\det g^{**}$ is always positive. Therefore signature of the separable metric
of type $[2,0,2]_1$ can be only $(++--)$.

5) {\bf Class $[2,1,1]_1$.} Two commuting Killing vectors, one indecomposable
quadratic conservation law and one coisotropic coordinate. Coordinates and
parameters:
\begin{equation*}
  (x^\al,y^\mu,z^\vf)=(x^1,x^2,y^3:=y,z^4:=z),\qquad (c_\al,d_{ii},a_r)\mapsto
  (c_1,c_2,d_{33}:=d,a_4:=2E).
\end{equation*}
This case is described by theorem \ref{tnbdhg}. Matrix $B$ is parameterized
similar to Eq.~(\ref{iddkll}):
\begin{equation*}
  B=\begin{pmatrix} 1 & \phi_3 \\
  \displaystyle\frac{\phi_4}{h_4^\al c_\al} & \displaystyle\frac1{h_4^\al c_\al}
  \end{pmatrix}\qquad\Rightarrow\qquad
  B^{-1}=\frac1{1-\phi_3\phi_4}\begin{pmatrix} 1 & -\phi_3h_4^\al c_\al \\
  -\phi_4 & h_4^\al c_\al \end{pmatrix},
\end{equation*}
where $\phi_3(y)$, $\phi_4(z)$, and $h_4^\al(z)$, $\al=1,2$, are arbitrary
functions of single coordinates. It yields separable metric
\begin{equation}                                                  \label{iekdwa}
  g^{**}=\frac1{1-\phi_3\phi_4} \begin{pmatrix}
  \phi_4k_3^{\al\bt} & 0 & h_4^\al/2 \\ 0 & -\phi_4 & 0 \\ h_4^\bt/2 & 0 & 0
  \end{pmatrix},
\end{equation}
where functions $k_3^{\al\bt}(y)=k_3^{\bt\al}(y)$ are arbitrary. After
separation of the first two coordinates, $W'_\al=c_\al$, the Hamilton--Jacobi
equation takes the form
\begin{equation}                                                  \label{ifkwss}
  \frac1{1-\phi_3\phi_4}\big(\phi_4k_3^{\al\bt}c_\al c_\bt-\phi_4W^{\prime2}_3
  +h_4^\al c_\al W'_4\big)=2E.
\end{equation}
Variables are separated as
\begin{equation}                                                  \label{inbdgf}
\begin{split}
  W^{\prime2}_3=&d+2\phi_3E+k_3^{\al\bt}c_\al c_\bt,
\\
  W'_4=&\frac1{h^\al_4c_\al}(\phi_4 d+2E).
\end{split}
\end{equation}
Conservation laws are
\begin{equation}                                                  \label{indbdf}
\begin{split}
  \frac1{1-\phi_3\phi_4}\big(p_3^2-k_3^{\al\bt}c_\al c_\bt
  -\phi_3h_4^\al c_\al p_4\big)=&d,
\\
  \frac1{1-\phi_3\phi_4}\big[-\ph_4(p_3^2-k_3^{\al\bt}c_\al c_\bt)
  +h_4^\al c_\al p_4\big]=&2E.
\end{split}
\end{equation}
Determinant of metric (\ref{iekdwa}),
\begin{equation}                                                  \label{idhejt}
  \det g^{**}=\frac{\phi_4^2}{4(1-\phi_3\phi_4)^4}\left[
  k_3^{11}\big(h^2_4\big)^2+k_3^{22}\big(h^1_4\big)^2-2k_3^{12}h^1_4h^2_4
  \right]\ne0,\infty,
\end{equation}
restricts admissible form of arbitrary functions. Thus the canonical separable
metric (\ref{iekdwa}) is parameterized by seven arbitrary functions restricted
by Eq.~(\ref{idhejt}). In addition, they must admit real solutions for $W'_3$ in
Eq.~(\ref{inbdgf}). One nonzero component of $h_4^\al(z)$ can be transformed to
unity by suitable canonical transformation.

6) {\bf Class $[2,1,1]_2$.} Two commuting Killing vectors, one indecomposable
quadratic conservation law, one coisotropic coordinate. Coordinates and
parameters:
\begin{equation*}
  (x^\al,y^\mu,z^\vf)=(x^1,x^2,y^3:=y,z^4:=z),\qquad (c_\al,d_{ii},a_r)\mapsto
  (c_1,c_2,d_{33}:=2E,a_4=a).
\end{equation*}
This case is described by theorem \ref{tiswgk}. Matrix $B$ is the same as in the
previous case, but now the canonical separable metric is defined by the first
row of inverse matrix $B^{-1}$:
\begin{equation}                                                  \label{ibjvdf}
  g^{**}=\frac1{1-\phi_3\phi_4}\begin{pmatrix}
  \displaystyle-k_3^{\al\bt} & 0 & -\phi_3h_4^\al/2 \\
  0 & 1 & 0 \\ -\phi_3h^\bt_4/2 & 0 & 0 \end{pmatrix}.
\end{equation}
where functions $k_3^{\al\bt}(y)=k_3^{\bt\al}(y)$, $\al,\bt=1,2$, are arbitrary.
The Hamilton--Jacobi equation after separation of the first two coordinates
is
\begin{equation*}
  \frac1{1-\phi_3\phi_4}\big(-k_3^{\al\bt}c_\al c_\bt+W_3^{\prime2}
  -\phi_3h_4^\al c_\al W'_4\big)=2E.
\end{equation*}
Variables are separated as
\begin{equation}                                                  \label{inbdgi}
\begin{split}
  W^{\prime2}_3=&2E+\phi_3a+k_3^{\al\bt}c_\al c_\bt,
\\
  W'_4=&\frac1{h^\al_4c_\al}(2\phi_4 E+a).
\end{split}
\end{equation}
After substitution $W'_\al=c_\al$, conservation laws are
\begin{equation}                                                  \label{ikdbdf}
\begin{split}
  \frac1{1-\phi_3\phi_4}\big(p_3^2-k_3^{\al\bt}c_\al c_\bt
  -\phi_3h_4^\al c_\al p_4\big)=&2E,
\\
  \frac1{1-\phi_3\phi_4}\big[-\phi_4(p_3^2-k_3^{\al\bt}c_\al c_\bt)
  +h^\al_4c_\al p_4\big]=&a.
\end{split}
\end{equation}
Separable metric (\ref{ibjvdf}) is parameterized by seven arbitrary functions of
single coordinates which are restricted by inequality
\begin{equation*}
  \det g^{**}=\frac{\phi_3^2}{4(1-\phi_3\phi_4)^4}\left[
  k_3^{11}\big(h^2_4\big)^2+k_3^{22}\big(h^1_4\big)^2-2k_3^{12}h^1_4h^2_4
  \right]\ne0,\infty.
\end{equation*}
Also, Eqs.~(\ref{inbdgi}) must admit real solutions for $W'_3$. One of the
nonzero functions $h_4^\al$ can be set to unity by suitable canonical
transformation.

7) {\bf Class $[1,3,0]_2$.} One Killing vector and three indecomposable
quadratic conservation laws. Coordinates and parameters:
\begin{equation*}
\begin{split}
  (x^\al,y^\mu,z^\vf)&\mapsto(x^1:=x,y^2,y^3,y^4),
\\
  (c_\al,d_{ii},a_r)&\mapsto(c_1:=c,d_{22}:=d_2,d_{33}:=d_3,d_{44}:=2E).
\end{split}
\end{equation*}
This case corresponds to theorem \ref{theojk}. Matrix $b$ has the same form as
in three dimensions (\ref{idlkih})
\begin{equation}                                                  \label{idukih}
  b_{\mu\mu}{}^{ii}=
  \begin{pmatrix}1 & b_{23}(y^2) & b_{24}(y^2)\\ b_{32}(y^3) & 1 & b_{34}(y^3)\\
  b_{42}(y^4) & b_{43}(y^4) & 1\end{pmatrix},\qquad
  b_{ii}{}^{\mu\mu}=\frac1\vartriangle
  \begin{pmatrix} \vartriangle_{22} & \vartriangle_{32} & \vartriangle_{42} \\
  \vartriangle_{23} & \vartriangle_{33} & \vartriangle_{43} \\
  \vartriangle_{24} & \vartriangle_{34} & \vartriangle_{44} \end{pmatrix},
\end{equation}
where $\vartriangle:=\det b_{\mu\mu}{}^{ii}$ and symbols $\vartriangle_{\mu i}$
denote cofactors of elements $b_{\mu\mu}{}^{ii}$. Equations (\ref{anvbfg}) imply
that the separable metric must be diagonal
\begin{equation}                                                  \label{ighsii}
  g^{**}=\frac1{\vartriangle}
  \begin{pmatrix}-\vartriangle_{24}k_2-\vartriangle_{34}k_3
    -\vartriangle_{44}k_4 & 0 & 0 & 0 \\
    0 & \vartriangle_{24} & 0 & 0 \\ 0 & 0 & \vartriangle_{34} & 0 \\
    0 & 0 & 0 & \vartriangle_{44} \end{pmatrix},
\end{equation}
where $k_2(y^2)$, $k_3(y^3)$, and $k_4(y^4)$ are arbitrary functions of single
coordinates. The Hamilton--Jacobi equation after substitution $W'_1=c$ becomes
\begin{equation*}
  \frac1{\vartriangle}\big[(-\vartriangle_{24}k_2
  -\vartriangle_{34}k_3-\vartriangle_{44}k_4)c^2
  +\vartriangle_{24}W^{\prime2}_2+\vartriangle_{34}W^{\prime2}_3
  +\vartriangle_{44}W^{\prime2}_4\big]=2E.
\end{equation*}
Variables are separated as
\begin{equation}                                                  \label{ifkdht}
\begin{split}
  W^{\prime2}_2=d_2+b_{23}d_3+2b_{24}E+k_2c^2,
\\
  W^{\prime2}_3=b_{32}d_2+d_3+2b_{34}E+k_3c^2,
\\
  W^{\prime2}_4=b_{42}d_2+b_{43}d_3+2E+k_4c^2.
\end{split}
\end{equation}
All three conservation laws are quadratic
\begin{equation}                                                  \label{ihdjdx}
\begin{split}
  \frac1\vartriangle\big[\vartriangle_{22}(p_2^2-k_2c^2)
  +\vartriangle_{32}(p_3^2-k_3c^2)+\vartriangle_{42}(p_4^2-k_4c^2)\big]=&d_2,
\\
  \frac1\vartriangle\big[\vartriangle_{23}(p_2^2-k_2c^2)
  +\vartriangle_{33}(p_3^2-k_3c^2)+\vartriangle_{43}(p_4^2-k_4c^2)\big]=&d_3,
\\
  \frac1\vartriangle\big[\vartriangle_{24}(p_2^2-k_2c^2)
  +\vartriangle_{34}(p_3^2-k_3c^2)+\vartriangle_{44}(p_4^2-k_4c^2)\big]=&2E.
\end{split}
\end{equation}

So, canonical separable metric of class $[1,3,0]_2$ (\ref{ighsii}) is
parameterized by nine arbitrary functions $b_{23}(y^2)$, $b_{24}(y^2)$,
$b_{32}(y^3)$, $b_{34}(y^3)$, $b_{42}(y^4)$, $b_{43}(y^4)$, $k_2(y^2)$,
$k_3(y^3)$, and $k_4(y^4)$ of single arguments with restriction
$\det g^{**}\ne0$. Besides, equation (\ref{ifkdht}) must admit real solutions
for $W'_\mu$.

For example, the Friedmann metric \cite{Friedm22} belongs to this class.

8) {\bf Class $[1,2,1]_2$.} One Killing vector, two indecomposable quadratic
conservation laws, and one coisotropic coordinate. Coordinates and parameters:
\begin{equation*}
\begin{split}
  &(x^\al,y^\mu,z^\vf)\mapsto(x^1:=x,y^2,y^3,z^4:=z),
\\
  &(c_\al,d_{ij},a_r)\mapsto(c_1:=c,d_{22}:=d,d_{33}:=2E,a_4:=a).
\end{split}
\end{equation*}
This case is described by theorem \ref{tiswgk}. We parameterize matrix $B$ in
the following way
\begin{equation}                                                  \label{idshfd}
  B=\begin{pmatrix} 1 & \phi_{23} & \phi_{24} \\ \phi_{32} & 1 & \phi_{34} \\
  \phi_{42}/hc & \phi_{43}/hc & 1/hc \end{pmatrix}\qquad\Rightarrow\qquad
  B^{-1}=\frac1\vartriangle
  \begin{pmatrix} \vartriangle_{22} & \vartriangle_{32} & \vartriangle_{42} \\
  \vartriangle_{23} & \vartriangle_{33} & \vartriangle_{43} \\
  \vartriangle_{24} & \vartriangle_{34} & \vartriangle_{44} \end{pmatrix},
\end{equation}
where $\phi_{23}(y^2)$, $\phi_{24}(y^2)$, $\phi_{32}(y^3)$, $\phi_{34}(y^3)$,
$\phi_{42}(z)$, $\phi_{43}(z)$, and $h(z)$ are arbitrary functions of single
coordinates,
\begin{equation*}
  \vartriangle:=hc\det B=1+\phi_{23}\phi_{34}\phi_{42}+\phi_{24}\phi_{43}
  \phi_{32}-\phi_{23}\phi_{32}-\phi_{24}\phi_{42}-\phi_{34}\phi_{43},
\end{equation*}
and the notation is introduced for cofactors multiplied by $hc$:
\begin{equation*}
\begin{aligned}
  \vartriangle_{22}=& 1-\phi_{34}\phi_{43}, &
  \qquad\vartriangle_{32}=& \phi_{24}\phi_{43}-\phi_{23}, &
  \qquad\vartriangle_{42}=& (\phi_{23}\phi_{34}-\phi_{24})hc,
\\
  \vartriangle_{23}=& \phi_{34}\phi_{42}-\phi_{32}, &
  \vartriangle_{33}=& 1-\phi_{24}\phi_{42}, &
  \vartriangle_{43}=& (\phi_{32}\phi_{24}-\phi_{34})hc,
\\
  \vartriangle_{24}=& \phi_{43}\phi_{32}-\phi_{42}, &
  \vartriangle_{34}=& \phi_{42}\phi_{23}-\phi_{43}, &
  \vartriangle_{44}=& (1-\phi_{23}\phi_{32})hc.
\end{aligned}
\end{equation*}
Equations (\ref{anvbfg}) imply components of separable metric
\begin{equation*}
\begin{aligned}
  g^{11}=&-\frac1\vartriangle(\vartriangle_{23}k_2+\vartriangle_{33}k_3), &
  \qquad g^{22}=&\frac{\vartriangle_{23}}{\vartriangle},
\\
  g^{33}=&\frac{\vartriangle_{33}}{\vartriangle}, &
  2g^{14}=&\frac{\vartriangle_{43}}{\vartriangle c},
\end{aligned}
\end{equation*}
where $k_2(y^2)$ and $k_3(y^3)$ are arbitrary functions of single coordinates
which are defined by the second row of matrix $B^{-1}$. Consequently the
canonical separable metric is
\begin{equation}                                                  \label{idkwel}
  g^{**}=\frac1\vartriangle\begin{pmatrix}
  -\vartriangle_{23}k_2-\vartriangle_{33}k_3 & 0 & 0 & \vartriangle_{43}/(2c)\\
  0 & \vartriangle_{23} & 0 & 0 \\ 0 & 0 & \vartriangle_{33} & 0 \\
  \vartriangle_{43}/(2c) & 0 & 0 & 0 \end{pmatrix}.
\end{equation}
After separation of the first coordinate, the Hamilton--Jacobi equation becomes
\begin{equation*}
  \frac1\vartriangle\left[-(\vartriangle_{23}k_2+\vartriangle_{33}k_3)c^2
  +\vartriangle_{23}W^{\prime2}_2+\vartriangle_{33}W^{\prime2}_3
  +\vartriangle_{43}W'_4\right]=2E.
\end{equation*}
Variables are separated as
\begin{equation}                                                  \label{iddjkj}
\begin{split}
  W^{\prime2}_2=&d+2\phi_{23}E+\phi_{24}a+k_2,
\\
  W^{\prime2}_3=&\phi_{32}d+2E+\phi_{34}a+k_3,
\\
  W'_4=&\frac1{hc}(\phi_{42}d+2\phi_{43}E+a).
\end{split}
\end{equation}
The respective conservation laws are quadratic
\begin{equation*}
\begin{split}
  \frac1\vartriangle\big[\vartriangle_{22}(p_2^2-k_2c^2)
  +\vartriangle_{32}(p_3^2-k_3c^2)+\vartriangle_{42}p_4\big]=&d,
\\
  \frac1\vartriangle\big[\vartriangle_{23}(p_2^2-k_2c^2)
  +\vartriangle_{33}(p_3^2-k_3c^2)+\vartriangle_{43}p_4\big]=&2E,
\\
  \frac1\vartriangle\big[\vartriangle_{24}(p_2^2-k_2c^2)
  +\vartriangle_{34}(p_3^2-k_3c^2)+\vartriangle_{44}p_4\big]=&a,
\end{split}
\end{equation*}
Thus, the canonical separable metric (\ref{idkwel}) of class $[1,2,1]_2$ is
parameterized by nine arbitrary functions of single coordinates satisfying
inequality
\begin{equation*}
  \det g^{**}=-\frac{\vartriangle_{23}\vartriangle_{33}\vartriangle_{43}^2h^2}
  {4c^2\vartriangle^4}\ne0,\infty.
\end{equation*}
In addition, they are to provide real solutions of Eqs.~(\ref{iddjkj}) with
respect to $W'_\mu$ for fixed parameters $c$, $d$, $E$, and $a$.

9) {\bf Class $[1,2,1]_1$.} One Killing vector, two indecomposable quadratic
conservation laws, and one coisotropic coordinate. This case is described by
theorem \ref{tiswgk}. Coordinates and matrix $B$ are the same as in class
$[1,2,1]_2$. The difference reduces to the definition of parameters
\begin{equation*}
  (c_\al,d_{ij},a_r)\mapsto(c_1:=c,d_{22}:=d_2,d_{33}:=d_3,a_4:=2E).
\end{equation*}
Now the separable metric is defined by the last row of inverse matrix $B^{-1}$.
Equations (\ref{ahgdyu}) imply equalities
\begin{equation}                                                  \label{ifmrjh}
\begin{aligned}
  g^{11}=&-\frac1{\vartriangle}(\vartriangle_{24}k_2+\vartriangle_{34}k_3), &
  \qquad g^{22}=&\frac{\vartriangle_{24}}\vartriangle,
\\
  g^{33}=&\frac{\vartriangle_{34}}\vartriangle, & 2g^{14}=&
  \frac{\vartriangle_{44}}{\vartriangle c}.
\end{aligned}
\end{equation}
Therefore the canonical separable metric is
\begin{equation}                                                  \label{idkwej}
  g^{**}=\frac1\vartriangle\begin{pmatrix}
  -\vartriangle_{24}k_2-\vartriangle_{34}k_3 & 0 & 0 &
  \vartriangle_{44}/(2c) \\[8pt] 0 & \vartriangle_{24} & 0 &
  0 \\[4pt] 0 & 0 & \vartriangle_{34} & 0 \\[2pt]
  \vartriangle_{44}/(2c) & 0 & 0 & 0 \end{pmatrix}.
\end{equation}
Note that metric components, as in the previous case, do not depend on
parameters $c$. After separation of the first coordinate, the
Hamilton--Jacobi equation becomes
\begin{equation*}
  \frac1\vartriangle\left[-(\vartriangle_{24}k_2+\vartriangle_{34}k_3)c^2
  +\vartriangle_{24}W^{\prime2}_2+\vartriangle_{34}W^{\prime2}_3
  +\vartriangle_{44}W'_4\right]=2E.
\end{equation*}
Variables are separated in the following way
\begin{equation}                                                  \label{iddjkh}
\begin{split}
  W^{\prime2}_2=&d_2+\phi_{23}d_3+2\phi_{24}E+k_2,
\\
  W^{\prime2}_3=&\phi_{32}d_2+d_3+2\phi_{34}E+k_3,
\\
  W'_4=&\frac1{hc}\big(\phi_{42}d_2+\phi_{43}d_3+2E\big).
\end{split}
\end{equation}
The respective conservation laws are quadratic
\begin{equation*}
\begin{split}
  \frac1\vartriangle\big[\vartriangle_{22}(p_2^2-k_2c^2)
  +\vartriangle_{32}(p_3^2-k_3c^2)+\vartriangle_{42}p_4\big]=&d_2,
\\
  \frac1\vartriangle\big[\vartriangle_{23}(p_2^2-k_2c^2)
  +\vartriangle_{33}(p_3^2-k_3c^2)+\vartriangle_{43}p_4\big]=&d_3,
\\
  \frac1\vartriangle\big[\vartriangle_{24}(p_2^2-k_2c^2)
  +\vartriangle_{34}(p_3^2-k_3c^2)+\vartriangle_{44}p_4\big]=&2E.
\end{split}
\end{equation*}
Thus, canonical separable metric (\ref{idkwel}) of class $[1,2,1]_1$ is
parameterized by nine functions of single coordinates satisfying restriction
\begin{equation*}
  \det g^{**}=-\frac{\vartriangle_{24}\vartriangle_{34}\vartriangle_{44}^2}
  {4c^2\vartriangle^4}\ne0.
\end{equation*}
In addition Eqs.~(\ref{iddjkh}) must have real solutions $W'_\mu$ for fixed
parameters $c$, $d_2$, $d_3$, and $E$.

10) {\bf Class $[0,4,0]_2$.} Four indecomposable quadratic conservation laws
without coisotropic coordinates. Coordinates and parameters:
\begin{equation*}
\begin{split}
  &(x^\al,y^\mu,z^\vf)\mapsto(y^1,y^2,y^3,y^4),
\\
  &(c_\al,d_{ij},a_r)\mapsto(d_{11}:=d_1,d_{22}:=d_2,d_{33}:=d_3,d_{44}:=2E).
\end{split}
\end{equation*}
This case corresponds to theorem \ref{tgdjke}. Matrix $B$ is parameterized as
\begin{equation}                                                  \label{idlkih}
  b_{\mu\mu}{}^{ii}=
  \begin{pmatrix}1 & b_{12}(y^1) & b_{13}(y^1) & b_{14}(y^1) \\
  b_{21}(y^2) & 1 & b_{23}(y^2) & b_{24}(y^2) \\
  b_{31}(y^3) & b_{32}(y^3) & 1 & b_{34}(y^3) \\
  b_{41}(y^4) & b_{42}(y^4) & b_{43}(y^4) & 1\end{pmatrix},\qquad
  b_{ii}{}^{\mu\mu}=\frac1\vartriangle
  \begin{pmatrix}
  \vartriangle_{11} & \vartriangle_{21} & \vartriangle_{31} &\vartriangle_{41}\\
  \vartriangle_{12} & \vartriangle_{22} & \vartriangle_{32} &\vartriangle_{42}\\
  \vartriangle_{13} & \vartriangle_{23} & \vartriangle_{33} &\vartriangle_{43}\\
  \vartriangle_{14} & \vartriangle_{24} & \vartriangle_{34} &\vartriangle_{44}\\
  \end{pmatrix},
\end{equation}
where $\vartriangle:=\det b_{\mu\mu}{}^{ii}$ and symbols $\vartriangle_{\mu i}$
denote cofactors of elements $b_{\mu\mu}{}^{ii}$. This matrix implies the
diagonal separable metric (\ref{adfghf}) defined by the last row of inverse
matrix $b^{-1}$
\begin{equation}                                                  \label{avcfsg}
  g^{**}=\frac1\vartriangle \begin{pmatrix} \vartriangle_{14} & 0 & 0 & 0 \\
  0 & \vartriangle_{24} & 0 & 0 \\ 0 & 0 & \vartriangle_{34} & 0 \\
  0 & 0 & 0 & \vartriangle_{44} \end{pmatrix}.
\end{equation}
The respective Hamilton--Jacobi equation is
\begin{equation*}
  \frac1\vartriangle\big(\vartriangle_{14}W^{\prime2}_1
  +\vartriangle_{24}W^{\prime2}_2+\vartriangle_{34}W^{\prime2}_3
  +\vartriangle_{44}W^{\prime2}_4\big)=2E.
\end{equation*}
Variables are separated in the following way
\begin{equation}                                                  \label{abbcvf}
\begin{split}
  W^{\prime2}_1=&d_1+b_{12}d_2+b_{13}d_3+2b_{14}E,
\\
  W^{\prime2}_2=&b_{21}d_1+d_2+b_{23}d_3+2b_{24}E,
\\
  W^{\prime2}_3=&b_{31}d_1+b_{32}d_2+d_3+2b_{34}E.
\\
  W^{\prime2}_4=&b_{41}d_1+b_{42}d_2+b_{43}d_3+2E.
\end{split}
\end{equation}
All four conservation laws are quadratic:
\begin{equation}                                                  \label{abcvdl}
\begin{split}
  \frac1\vartriangle\big(\vartriangle_{11}p^2_1+\vartriangle_{21}p^2_2
  +\vartriangle_{31}p^2_3+\vartriangle_{41}p^2_4\big)=&d_1,
\\
  \frac1\vartriangle\big(\vartriangle_{12}p^2_1+\vartriangle_{22}p^2_2
  +\vartriangle_{32}p^2_3+\vartriangle_{42}p^2_4\big)=&d_2,
\\
  \frac1\vartriangle\big(\vartriangle_{13}p^2_1+\vartriangle_{23}p^2_2
  +\vartriangle_{33}p^2_3+\vartriangle_{43}p^2_4\big)=&d_3.
\\
  \frac1\vartriangle\big(\vartriangle_{14}p^2_1+\vartriangle_{24}p^2_2
  +\vartriangle_{34}p^2_3+\vartriangle_{44}p^2_4\big)=&2E.
\end{split}
\end{equation}
Thus canonical separable metric (\ref{avcfsg}) of class $[0,4,0]_2$ is
parameterized by twelve arbitrary functions of single coordinates entering
matrix $b$ (\ref{idlkih}) with condition $\vartriangle\ne0$. Moreover,
arbitrary functions must provide real solution $W'_\mu$ of Eqs.~(\ref{abbcvf})
for fixed values of parameters $d_{1,2,3}$, and $E$. In addition, every row of
matrix $b_{\mu\mu}{}^{ii}$ must have at least one nonconstant function, because
otherwise a Killing vector field appears.

The list of all separable metrics in four dimensions appeared in
\cite{Katana23B}. Seven of ten separable metrics were found in \cite{Obukho06}.
%******************************************************************************
\section{Conclusion}
%*******************************************************************************
So the St\"ackel problem is completely solved for metrics of arbitrary
signature in arbitrary dimensions $n$. All separable metrics are divided into
equivalence classes $[\Sn,\Sm,n-\Sn-\Sm]_{1,2}$, characterized by the number
$\Sn$ of commuting Killing vector fields, number $\Sm$ of indecomposable
quadratic conservation laws, and the number $n-\Sn-\Sm$ of coisotropic
coordinates, indices 1 and 2 denoting the group of parameters (coisotropic or
quadratic, respectively) to which energy $E$ belongs. Within each equivalence
class separable metrics are related by canonical transformations and by
nondegenerate transformations of parameters which do not involve coordinates.
There is canonical (the most simple) separable metric in each class. In general,
its form is defined by theorems \ref{tiswgk} and \ref{tnbdhg}. These theorems
are constructive. As examples, we list all canonical separable metrics in two
(3 classes), three (6 classes), and four (10 classes) dimensions. These classes
include all separable metrics known up to now and contain several new classes
with coisotropic coordinates.

Different technique of finding separable metrics was used in
\cite{KalMil80,KalMil81}. We obtained the same block form of canonical
separable metrics, and types of metrics coincide. Moreover we gave more detailed
classification indicated by indices $1,2$ and succeeded in finding explicitly
all separating functions $W'_\al$, $\al=1,\dotsc,n$, and all conservation laws.

Separable metrics of indefinite signature open new perspectives in gravity
models. In general, equations for metric are very complicated, and contain many
unknown metric components. To simplify the problem, one usually supposes that
space-time has some symmetry, and this reduces the number of unknown variables
and equations. Almost all known exact solutions were obtained in this way, and,
as a rule, they yield separable Hamilton--Jacobi equations for geodesics. The
problem can be inverted. We may postulate the separability of the
Hamilton--Jacobi equation and substitute a separable metric in, for example,
Einstein equations. This also significantly reduces the number of independent
metric components and gives a hope to find new exact solutions. The important
feature of this approach is that vacuum Einstein equations reduce to the
system of nonlinear ordinary differential equations because separable metrics
contain only functions of single coordinates. This approach seems very
attractive because the class of separable metrics is much wider then the class
of symmetric metrics admitting complete separation of variables.

Most of known exact solutions in general relativity admit complete separation of
variables in the Hamilton--Jacobi equation. B.~Carter analyzed Einstein--Maxwell
equations with cosmological constant for separable metrics of class $[2,2,0]_2$
and showed that Schwarzschild, Reissner--Nordstr\"om, Kerr and many more
solutions belong to this class \cite{Carter68}. It is easily checked that Kasner
solution is of type $[3,1,0]_2$ and the Friedmann solution is of class
$[1,3,0]_2$. So there is an interesting problem to check to which class belongs
a known exact solution. It is nontrivial because solutions are often written in
a nonseparable coordinate systems. The opposite way of thought is to analyze
Einstein equations for all other separable metrics like it was done by B.~Carter
for metrics of type $[2,2,0]_2$. The advantage of this approach is that metric
depends on relatively small number of unknown functions of single variables, and
this gives a hope to find exact solutions of Einstein equations even without any
symmetry. Moreover, geodesic equations in such cases are Liouville integrable,
and the global structure of respective space-times can be, probably, analyzed
analytically.

%\bibliography{2dgrav,3dgrav,book,gravity,math,my,qft}
\bibliographystyle{unsrt}
\end{document}